\documentclass[acmsmall]{ec22acm}

\AtBeginDocument{%
  \providecommand\BibTeX{{%
    \normalfont B\kern-0.5em{\scshape i\kern-0.25em b}\kern-0.8em\TeX}}}

\copyrightyear{2022}
\acmYear{2022}
\setcopyright{acmcopyright}\acmConference[EC '22]{Proceedings of the 23rd ACM Conference on Economics and Computation}{July      11--15, 2022}{Boulder, CO, USA}
\acmBooktitle{Proceedings of the 23rd ACM Conference on Economics and Computation (EC '22), July      11--15, 2022, Boulder, CO, USA}
\acmPrice{15.00}
\acmDOI{10.1145/3490486.3538287}
\acmISBN{978-1-4503-9150-4/22/07}

\usepackage{amsmath}
\usepackage[shortlabels]{enumitem}

\usepackage{amsthm}
\usepackage{amsfonts}
\usepackage{graphicx}
\graphicspath{{./fig/}}

\usepackage{caption}
\usepackage{subcaption}
\usepackage{float}
\usepackage{tikz}
\usetikzlibrary{bayesnet}
\usepackage[textsize=tiny]{todonotes}

\usepackage{subfiles}
\usepackage{amsmath}
\usepackage{accents}
\usepackage{bm}

\makeatletter
\if@todonotes@disabled

\else

\fi
\makeatother

\newtheorem{theorem}{Theorem}
\newtheorem*{theorem*}{Theorem}
\newtheorem{corollary}{Corollary}
\newtheorem{definition}{Definition}

\newtheorem{lemma}{Lemma}

\DeclareMathOperator{\norm}{\mathcal{N}}

\DeclareMathOperator{\E}{\mathrm{E}}
\DeclareMathOperator{\Pb}{\mathrm{P}}
\DeclareMathOperator{\R}{\mathbf{R}}


\newcommand\lims{\lim_{\s\to\infty}}
\newcommand\BR{\beta}
\newcommand\br{b}

\newcommand\F{F}


\newcommand{\A}{H}
\newcommand{\B}{L}
\newcommand{\g}{G}
\newcommand{\idx}{i}

\newcommand{\p}{p}
\newcommand{\pa}{\p_{\A}}
\newcommand{\pb}{\p_{\B}}
\newcommand{\pg}{\p_\g}

\newcommand\payoff{u}
\let\vv\relax
\newcommand\vv{\payoff}

\newcommand\vp{\payoff_\g}

\newcommand\pxaun{\pxa^\mathrm{un}}
\newcommand\pxbun{\pxb^\mathrm{un}}
\newcommand\pxgun{\pxg^\mathrm{un}}

\newcommand\mpx{\bar \px}

\newcommand\mpxaun{\mpx_\A^\mathrm{un}}
\newcommand\mpxbun{\mpx_\B^\mathrm{un}}
\newcommand\mpxgun{\mpx_\g^\mathrm{un}}

\newcommand\mpxadp{\mpx_\A^\mathrm{dp}}
\newcommand\mpxbdp{\mpx_\B^\mathrm{dp}}

\newcommand{\w}{W}

\newcommand{\wg}{\w_\g}

\newcommand{\wh}{\hat \w}

\newcommand{\wt}{\tilde \w}

\newcommand{\wtg}{\wt_{\g}}


\newcommand{\til}{\theta}
\newcommand{\tiln}{T}
\newcommand{\tild}{\theta^\mathrm{d}}
\newcommand{\tildi}{\tild_\g}
\newcommand{\tilda}{\tild_\A}
\newcommand{\tildb}{\tild_\B}
\newcommand{\tildg}{\tild_\g}

\newcommand{\tilun}{\theta^\mathrm{un}}
\newcommand{\tilg}{\til_\g}

\newcommand{\tilgdp}{\tilg^\mathrm{dp}}


\newcommand{\rev}{Q^\mathrm{un}}
\newcommand{\revdp}{Q^\mathrm{dp}}

\newcommand{\sx}{\hat\sigma}
\newcommand{\sxa}{\sx_{\A}}
\newcommand{\sxb}{\sx_{\B}}
\newcommand{\sxp}{\sx_{\g}}

\newcommand{\st}{{\tilde\sigma}}
\newcommand{\sta}{\st_{\A}}
\newcommand{\stb}{\st_{\B}}
\newcommand{\stp}{\st_{\g}}

\newcommand{\cc}{C}

\newcommand{\cp}{\cc_{\g}}
\newcommand{\ca}{\cc_{\A}}
\newcommand{\cb}{\cc_{\B}}
\newcommand{\s}{S}
\newcommand{\gun}{\mathcal{\g}^\mathrm{un}}
\newcommand{\gdp}{\mathcal{\g}^\mathrm{dp}}

\newcommand{\sqp}{\sq}

\newcommand{\m}{m}

\newcommand{\mqa}{\mu_{\A}}
\newcommand{\mqb}{\mu_{\B}}
\newcommand{\mqg}{\mu_{\g}}

\newcommand{\mmin}{\br^{\min}}
\newcommand{\mmax}{\br^{\max}}

\newcommand{\mqamax}{\mmax_{\A}}
\newcommand{\mqamin}{\mmin_{\A}}
\newcommand{\mqbmax}{\mmax_{\B}}
\newcommand{\mqbmin}{\mmin_{\B}}

\newcommand{\mup}{\mu_{\g}}

\newcommand{\mvecun}{\mvec^\mathrm{un}}

\newcommand{\mqaun}{\mqa^\mathrm{un}}
\newcommand{\mqbun}{\mqb^\mathrm{un}}
\newcommand{\mqgun}{\mqg^\mathrm{un}}

\newcommand{\mun}{\m^\mathrm{un}}
\newcommand{\maun}{\mun_{\A}}
\newcommand{\mbun}{\mun_{\B}}
\newcommand{\mpun}{\mun_{\g}}

\newcommand{\mmq}{\bar\mq}
\newcommand{\mmqaun}{\mmq_\A^\mathrm{un}}
\newcommand{\mmqbun}{\mmq_\B^\mathrm{un}}
\newcommand{\mmqgun}{\mmq_\g^\mathrm{un}}

\newcommand{\mqadp}{\mq_\A^\mathrm{dp}}
\newcommand{\mqbdp}{\mq_\B^\mathrm{dp}}
\newcommand{\mqpdp}{\mq_\g^\mathrm{dp}}
\newcommand{\mmqadp}{\bar\mq_\A^\mathrm{dp}}
\newcommand{\mmqbdp}{\bar\mq_\B^\mathrm{dp}}

\newcommand{\mqbr}{\br}
\newcommand{\mqabr}{\mqbr_\A}
\newcommand{\mqbbr}{\mqbr_\B}

\newcommand{\mq}{\mu}
\newcommand{\mvec}{\bm{\mq}}
\newcommand{\sq}{\eta}
\newcommand{\sign}{\mathrm{sgn}}

\newcommand{\px}{x}
\newcommand{\pxmax}{{x}^\mathrm{max}}
\newcommand{\pxmin}{{x}^\mathrm{min}}
\newcommand{\pxg}{\px_{\g}}

\newcommand{\pxa}{\px_{\A}}
\newcommand{\pxamax}{\pxmax_{\A}}
\newcommand{\pxamin}{\pxmin_{\A}}

\newcommand{\pxbmax}{\pxmax_{\B}}
\newcommand{\pxbmin}{\pxmin_{\B}}

\newcommand{\pxb}{\px_{\B}}

\newcommand{\ax}{\alpha}

\settopmatter{printacmref=true}
\begin{document}
\fancyhead{}

\title{Fairness in Selection Problems with Strategic Candidates}

\author{Vitalii Emelianov}
\affiliation{
  \institution{Univ. Grenoble Alpes, Inria, CNRS, Grenoble INP, LIG}
  \city{Grenoble}
  \country{France}}
\email{vitalii.emelianov@inria.fr}

\author{Nicolas Gast}
\affiliation{
  \institution{Univ. Grenoble Alpes, Inria, CNRS, Grenoble INP, LIG}
  \city{Grenoble}
  \country{France}}
\email{nicolas.gast@inria.fr}

\author{Patrick Loiseau}
\affiliation{
  \institution{Univ. Grenoble Alpes, Inria, CNRS, Grenoble INP, LIG}
  \city{Grenoble}
  \country{France}}
\email{patrick.loiseau@inria.fr}

\begin{abstract}

To better understand discriminations and the effect of affirmative actions in selection problems (e.g., college admission or hiring), a recent line of research proposed a model based on \emph{differential variance}. This model assumes that the decision-maker has a noisy estimate of each candidate's quality and puts forward the difference in the noise variances between different demographic groups as a key factor to explain discrimination. The literature on differential variance, however, does not consider the strategic behavior of candidates who can react to the selection procedure to improve their outcome, which is well-known to happen in many domains. 

In this paper, we study how the strategic aspect affects fairness in selection problems. We propose to model selection problems with strategic candidates as a contest game: A population of rational candidates compete by \emph{choosing} an effort level to increase their quality. They incur a cost-of-effort but get a (random) quality whose expectation equals the chosen effort. A Bayesian decision-maker observes a noisy estimate of the quality of each candidate (with differential variance) and selects the fraction $\alpha$ of best candidates based on their posterior expected quality; each selected candidate receives a reward $S$. We characterize the (unique) equilibrium of this game in the different parameters' regimes, both when the decision-maker is unconstrained and when they are constrained to respect the fairness notion of demographic parity. Our results reveal important impacts of the strategic behavior on the discrimination observed at equilibrium and allow us to understand the effect of imposing demographic parity in this context. In particular, we find that, in many cases, the results contrast with the non-strategic setting. We also find that, when the cost-of-effort depends on the demographic group (which is reasonable in many cases), then it entirely governs the observed discrimination (i.e., the noise becomes a second-order effect that does not have any impact on discrimination). Finally we find that imposing demographic parity can sometimes increase the quality of the selection at equilibrium; which surprisingly contrasts with the optimality of the Bayesian decision-maker in the non-strategic case. Our results give a new perspective on fairness in selection problems, relevant in many domains where strategic behavior is a reality.

\end{abstract}

\begin{CCSXML}
  <ccs2012>
     <concept>
         <concept_id>10003752.10010070.10010099</concept_id>
         <concept_desc>Theory of computation~Algorithmic game theory and mechanism design</concept_desc>
         <concept_significance>500</concept_significance>
         </concept>
     <concept>
         <concept_id>10010405.10010455.10010460</concept_id>
         <concept_desc>Applied computing~Economics</concept_desc>
         <concept_significance>500</concept_significance>
         </concept>
   </ccs2012>
\end{CCSXML}
  
\ccsdesc[500]{Theory of computation~Algorithmic game theory and mechanism design}
\ccsdesc[500]{Applied computing~Economics}

\keywords{selection problem, fairness, differential variance, strategic candidates}
  
\maketitle

\section{Introduction}
\label{sec:intro}


Selection problems are problems in which, given multiple candidates, a decision-maker selects a fixed fraction of them with the objective of taking the best ones. Selection problems model a number of decision-making situations with high societal implications such as college admission or hiring. In such cases, it is important to consider the \emph{fairness} of the selection outcome---that is that there is no discrimination against demographic groups defined by \emph{sensitive} attributes. Yet, there is abundant evidence, including in recent years, of discrimination in both college admission and hiring, where the applicants are discriminated by salient demographic attributes such as gender \cite{hrw20,ahmed21}, race \cite{bertrand04,quillian17,gersen19}, or age \cite{cohen19}.


Recent literature on fairness in selection problem analyzed the problem using models based on two key ingredients to explain discrimination.  \citet{kleinberg18} model selection problems with \emph{implicit bias} (see also \cite{celis20,celis21,mehrotra22}), that is, where the decision-maker implicitly under-evaluates the quality of the candidates from disadvantaged demographic groups. On the other hand, \citet{emelianov20,emelianov22} and \citet{garg20}, following ideas from the economics literature on \emph{statistical discrimination} (see Section~\ref{sec:related work}), assume that the decision maker's estimate of the candidates quality is unbiased but has a higher variance for some demographic groups (a phenomenon terms \emph{implicit} or \emph{differential variance}). With both types of models, the authors study the discrimination that comes out of baseline decision-makers, and the impact of imposing fairness mechanisms such as the Rooney rule or the four-fifths rule.


In the above literature, the characteristics of the candidates used for selection (in particular, their qualities) are assumed to be fixed and exogenous---i.e., they do not depend on the selection procedure. In practice, however, candidates (i.e., individuals) involved in selection problems can \emph{adapt} to the selection procedure in order to increase the chances of a positive outcome. Such a \emph{strategic behavior} is observed in many domains \cite{woolley17,patro22}. A recent thread of literature on \emph{strategic classification} is devoted to modeling and analyzing the impact of this strategic behavior on classification problem \cite{Hardt16a,Dong18a,Milli19a,Kleinberg19a,Zhang19a,Miller20a,Tsirtsis20a,Braverman20a}. The selection problem, however, is fundamentally different from a classification problem in that the number of positive predictions is constrained---this will in particular lead to competition between individuals, see below. Moreover, this thread of literature did not investigate discrimination issues. This leaves open the key question, which is our focus in this work: 
\emph{How does the strategic behavior of candidates affect discrimination and the impact of fairness mechanisms  in selection problems?}

\paragraph{\textbf{The selection problem with strategic candidates modeled as a contest game}}

We propose to model the selection problem with strategic candidates as a contest game (that is, roughly, as a game where candidates compete for a reward). We consider a population of candidates. Each candidate chooses an effort level $\m$ that they exert to improve their quality, at a quadratic cost $\cc\m^2/2$ (where $\cc$ is a constant coefficient). Each candidate has a latent quality $\w$ drawn randomly whose expected value is equal to their selected effort $\m$. A Bayesian decision-maker observes a noisy estimate $\wh$ of the quality $\w$ and selects a fraction $\ax$ of the best candidates based on the posterior expected quality $\wt=\E(\w|\wh)$. All selected candidates receive a reward of size $\s$, which is a quantitative measure of the benefit that the selection brings to a candidate (e.g., a job position or an education). 

In our model, we assume that the population of candidates is divided in two groups: the high-noise ($\A$) and the low-noise ($\B$) group (which refers to the noise in the estimate $\wh$ of the candidates quality). This group-dependent noise represents a form of information inequality common across different demographic groups: decision-makers are less familiar with candidates from certain groups (the $\A$-candidates), such that they are less able to precisely estimate the qualities of these candidates. This phenomenon is called \emph{differential variance} \cite{emelianov22, garg20}. It is at the basis of the economic theory of statistical discrimination and it was first described by \citet{phelps72} to explain the racial inequality of wages. In addition to group-dependent noise, we also consider a group-dependent cost coefficient (i.e., $\ca \neq \cb$). The group-dependent cost coefficient models socioeconomic inequalities; e.g., for students from low-income families it is typically harder to reach a desired level of quality due to costly preparatory courses.

\paragraph{\textbf{Overview of our results}}
Our model of a selection problem with strategic candidates defines a (population) game. We first show that this game has a unique Nash equilibrium. Then we focus primarily on characterizing the equilibrium in the regime of large rewards $\s$, which corresponds roughly to high-stake selection problems. In this regime, \textbf{we show that \emph{at equilibrium}, the following discrimination resulting from our model}:
\begin{itemize}
    \item If the cost coefficient is group-independent ($\ca=\cb$), then the high-noise candidates make a greater effort in average than the low-noise candidates. The latter are underrepresented\footnote{``Underrepresented'' in our paper means ``having less representation in the selection than its share in the candidates’ population''. This is a classical definition in the algorithmic fairness literature where the notion of demographic parity would mean that the two groups are equally represented.} in the selection, which is counterintuitive and is in contrast with the results in the non-strategic setting studied in \cite{garg20,emelianov22}.
    \item If the cost coefficient is group-dependent ($\ca\neq\cb$), then the noise level does not affect the outcome of the game. The cost-advantaged candidates (those for whom the cost coefficient is smaller) make a greater average effort compared to the cost-disadvantaged candidates, and the latter are underrepresented, irrespective of the noise. The noise is a second order effect compared to the cost difference, which totally dominates. This offers a potential explanation as to why low-noise candidates are often not underrepresented in practice: this is because the cost of effort for the high-noise candidates is usually large. 
\end{itemize}

In both cases stated above, one of the groups of candidates is always underrepresented. A potential remedy for this is the so-called \emph{demographic parity} mechanism, which imposes that the decision-maker selects candidates of all demographic groups at equal rates. Next, \textbf{we characterize the equilibrium when the Bayesian decision-maker is constrained by the demographic parity mechanism} (still in the regime of large rewards $\s$):
\begin{itemize} 
    \item We show that the demographic parity mechanism tends to equalize the average effort of the two groups compared to the unconstrained decision-maker: in most cases, the previously underrepresented group makes a greater average effort and the previously overrepresented group makes a lower average effort. 
    \item We characterize the change in the selection quality (or utility from the decision-maker perspective) from imposing demographic parity. Interestingly, we find that in some cases, the selection quality can improve compared to the unconstrained selection. This is surprising since, in the non-strategic setting, the unconstrained Bayesian decision-maker is optimal \cite{emelianov22}. Our result shows that it is no longer true in the strategic setting as in certain cases imposing a fairness mechanism can improve the selection quality, even against a Bayesian baseline. In other cases, we bound the degradation of quality that can result from imposing demographic parity.
\end{itemize} 

For the case of small rewards $\s$, \textbf{we get further analytical results on the equilibrium characterization}. We find that the results are different from that of the case of large $\s$, and are similar to the ones obtained in the non-strategic setting \cite{emelianov22,garg20}: the high-noise candidates are always underrepresented if the selection size $\ax$ is small enough. This indicates that, if the reward is small, then the impact of the strategic aspect (on discrimination results) is not important. We perform numerical experiments to illustrate the case of intermediate rewards $\s$ and how it matches the case of small and large $\s$ in their respective regimes.

Finally, \textbf{we study the convergence of two dynamics to the Nash equilibrium} (the best response and fictitious play dynamics). We observe that the trajectories of the best response dynamics converge to limit cycles that have a higher average effort (for both groups) than at equilibrium. The fictitious play dynamics seems to converge to equilibrium in our empirical results. 

\paragraph{\textbf{Implications of our results}}

Our results show that it is crucial to take into account the strategic nature of candidates involved in selection problem in order to understand discrimination and to predict the effect of imposing a fairness mechanism. They also show that discrimination in selection problems is a somewhat nuanced issue: it depends not only on the strategic aspect of the candidates but also on the range of assumed rewards and on the cost of effort of the different demographic groups. This means that a policy-maker, when considering whether to impose a fairness mechanism in a particular application, should evaluate the population of candidates (their costs of effort, etc.). It is worth noting, also, that we presented our results for a Bayesian decision-maker who computes a posterior estimate of the candidates quality (with knowledge of the group-dependent distributions). Our results are technically easy to extend to a group-oblivious decision-maker sorting the candidates by quality estimate $\wh$ irrespective of their group (another standard baseline); but in that case the high-noise and low-noise groups are reversed (we discuss that in Section~\ref{sec:discussion}). Hence, a policy-maker would also need to evaluate the baseline decision-maker they face.

\paragraph{Outline}

The rest of the paper is organized as follows. In Section~\ref{sec:related work}, we present the related literature. In Section~\ref{sec:model}, we formulate the problem in a game-theoretic framework. In Section~\ref{sec:characterization}, we show that, for the case of large rewards $\s$, the selection with strategic candidates leads to underrepresentation of one of the two groups of candidates.  In Section~\ref{sec:fairness mechanism}, we study how  demographic parity affects the incentives of candidates and the expected quality of the selection for large $\s$. In Section~\ref{sec:complements}, we complement our theoretical results by studying the convergence to equilibria and the case of small and intermediate rewards $\s$. We conclude with a discussion in Section~\ref{sec:discussion}.

\section{Related work}
\label{sec:related work}


\paragraph{\textbf{Statistical discrimination}}

The theory of statistical discrimination, initiated by \citet{phelps72} and \citet{arrow15}, considers the uncertainty of information about individual characteristics to explain racial/gender inequality in decision-making. \citet{phelps72} develops a model where each individual possesses a latent quality drawn from a fixed group-independent distribution. A Bayesian decision-maker observes a noisy estimate of individual's quality, where the noise is symmetric and zero-mean but has a group-dependent variance. The decision-maker assigns a wage equal to the expected posterior quality. This model is used to explain racial inequalities in wages. \citet{lundberg83} extend Phelps' model to a strategic setting by assuming two groups of workers that choose the values of effort according to a group-independent quadratic costs. The effort induces a quality that is assigned randomly but equal to the effort in expectation.  The authors show that, in the equilibrium, the high-noise candidates make lower effort and are payed less on average compared to the low-noise candidates; but that if the decision-maker is restricted to not use the group information for wage assignment, then the effort is equal for both groups.  In our work, we use a similar model but in the context of selection problems where a fraction of candidates receives a (fixed) reward rather than assigning a variable wage to all candidates as in \cite{phelps72,lundberg83}. We extend the model of \citet{lundberg83} to have group-dependent cost-of-effort, which as we shall see has a crucial effect on discrimination and we consider a different fairness mechanism  (demographic parity). 

Many statistical discrimination models (see e.g., \cite{arrow15,aigner77,coate93} and a survey in \cite{fang10}) assume that individuals of different groups have identical \emph{a priori} characteristics (e.g., cost-of-effort), but that the decision-maker uses their group-dependent belief when there is imperfect information to assess the performance of individuals in a group. In some cases, the discriminating beliefs (stereotypes) of decision-makers lead to equilibria in which these stereotypes are fulfilled. In contrast, we consider group-dependent cost-of-effort and group-dependent noise variance. We prove that the game in our model attains a unique equilibrium, and that discrimination occurs due to the group-dependent characteristics---i.e., if they become group-independent in our model, discrimination disappears.

\paragraph{\textbf{Selection problems in the non-strategic settings}}

Recent literature investigate statistical discrimination in selection problems, under the name of \emph{differential variance}. \citet{emelianov22} show that differential variance, in the case of Bayesian decision-maker, leads to the underrepresentation of the high-noise candidates (for selection fractions below $0.5$). They also study how quota-based fairness mechanisms (the 80\%-rule and the demographic parity) affect the fairness-utility tradeoff. \citet{garg20} study a similar setting but where the performance of candidates is measured by \emph{multiple} independent and unbiased estimates (from tests). The authors study how affirmative actions and access to testing affect the disparity level and the quality of the selected cohort. Some works also study the selection problem under implicit bias rather than differential variance---i.e., the decision-maker has a non-noisy but biased estimate of the candidates' qualities. \citet{kleinberg18} show that this type of bias naturally leads to underrepresentation of the disadvantaged group, and they study how a fairness mechanism called the Rooney rule affects the selection quality. This work is extended in \cite{celis20,celis21,mehrotra22}. 
Our work complements those studies by assuming \emph{strategic candidates} who can respond to a policy chosen by a Bayesian decision-maker. Similarly to \cite{garg20} and \cite{emelianov22}, we assume that the quality estimates are affected by differential variance, but we do not model bias \cite{kleinberg18} as we assume a Bayesian decision-maker who can correct for the bias. The main difference is that we assume that the quality distribution is not fixed but chosen by candidates to maximize their payoff which equals to the expected reward minus the cost-of-effort.

\paragraph{\textbf{Fairness in contests}}

A classical model of contest is given by \citet{lazear81}: individuals make a costly effort $\m$ that induces a noisy quality $\w=f(\m)+\varepsilon$, where $f$ is an increasing function of the effort; the player having the largest quality wins the prize $\s$. Fairness in contest games was studied from different perspectives in economics and computer science literature \cite{fu19}. \citet{schotter92} assume a two-player contest with a cost-advantaged and a cost-disadvantaged player---each pays a quadratic cost-of-effort but with different coefficients. They show that, at equilibrium, the cost-disadvantaged player makes a lower effort. Then they show that an affirmative action (adding a bonus to the score of the cost-disadvantaged player) leads all players to make lower efforts but increases the winning probability of the cost-disadvantaged player. 
Our model shares similarities with that of \cite{schotter92} (in particular group-dependent quadratic costs), but also has important differences: we consider an infinite population of candidates and we include group-dependent noise. Our results are also different as we analyze the dependence on $\s$ and we study another fairness mechanism (demographic parity)---in particular we find that it increases the effort of previously underrepresented group and sometimes even of both groups.

\paragraph{\textbf{Strategic ranking}}

Following the strategic classification literature (see previous section), \citet{liu21} study the \emph{strategic ranking} problem. They assume that each individual has a latent ranking and makes some costly effort to affect it. The effort induces a score $g$(effort)$\cdot$$f$(latent fixed rank), for some fixed strictly increasing functions $f$ and $g$, which in turn results in a post-effort ranking. \citet{liu21} study how different ranking reward functions (with and without randomization) of a fixed selection capacity $c$ affect the characteristics of individuals and of the selection at equilibrium (average welfare and scores). Finally, they consider a setting with two groups of candidates that differ in a multiplicative parameter $\gamma>1$ affecting the score and study the welfare gap for groups as a function of $c$.
In our model, we do not assume any pre-effort ranking---the score of an individual is purely determined by the effort and by a group-dependent noise. Our model is simpler compared to \cite{liu21} and is designed to capture the effects of the group-dependent cost-of-effort and noise in selection problems. This allows us to state concrete results involving the parameters of interest (the cost coefficient and noise variance). We also consider how a fairness mechanism (demographic parity) affects the group representations and the quality of selection at equilibrium.

\section{The model}
\label{sec:model}

\subsection{Candidates and decision-maker}
\label{ssec:players}

\paragraph{\textbf{Candidates model}}
We assume a non-atomic game with a unit mass of candidates indexed by $\idx\in[0,1]$. There are two groups of candidates: $\A$ and $\B$. The letter $G\in\{\A,\B\}$ will denote any of these two groups, and the proportion of candidates from group $G$ is $\pg\in(0,1)$. Each candidate $\idx$ has a quality that depends on the \emph{effort} $\m_\idx$ that this candidate makes. In college admission or in hiring, the effort $\m_\idx$ of a candidate $i$ can be interpreted as candidate's achievements. It might, for instance, represent the number of courses followed by a student, the quality of number of degrees obtained, etc. We assume that a candidate $\idx$ that chooses to make an effort $\m_\idx\ge0$ has a quality $\w_\idx$ that is normally distributed with mean $\m_\idx$ and variance $\sq^2$: 
\begin{align*}
    \w_\idx \sim \norm(\m_\idx, \sq^2).
\end{align*}
Making an effort $\m$ costs a candidate from group $\g$ a quadratic cost $\cp\m^2/2$. The population-dependent cost coefficient $\cp$ can model socioeconomic factors like the income of the parents or the country of origin; these factors might make harder for some candidates than others to make the same effort $\m$.

\paragraph{\textbf{Decision-maker}}
A decision-maker wants to select the candidates having the largest qualities. To do so, the decision-maker observes a noisy estimate $\wh_\idx$ of the quality $\w_\idx$ of each candidate $\idx$:
\begin{align*}
    \wh_\idx = \w_\idx + \sx_{\g_\idx}\cdot\varepsilon_\idx,
\end{align*}
where $\varepsilon_\idx$ is a zero-mean centered random variable from the standard normal distribution $\norm(0,1)$; the noise variance $\sx_{\g_\idx}^2$ is assumed group-dependent.\footnote{Note that, for simplicity, we assume that the variance of the quality $\sqp^2$ does not depend on the candidate's group; Our results can be extended to the case of group-dependent $\sq_\g^2$ (see Section~\ref{sec:discussion}).} The quality estimate $\wh_\idx$ is a noisy but unbiased measurement of the quality $\w_\idx$ of a candidate $\idx$, e.g., an interview result. The group-dependent variance of noise $\sxp^2$ models the information inequality: if  interviewers are more familiar with candidates of some demographic groups, they are more confident in their evaluation compared to that of candidates of other groups.  Without loss of generality, we assume that $\sxa^2>\sxb^2$. We, thus, refer to $\A$-candidates as \emph{high-noise} candidates, and we refer to $\B$-candidates as \emph{low-noise} candidates. 



We assume that the decision-maker knows\footnote{Our results can be extended to the case where the effort $\m$ is not observable (see Section~\ref{sec:discussion}).}  the effort $\m$ of all candidates, as well as the variances $\sqp^2$ and $\sxp^2$, and selects a proportion $\ax\in(0,1)$ of the candidates. The decision-maker aims at maximizing the expected quality of selected candidates and, therefore, selects the $\ax$ proportion having the largest expected quality $\wt$.  Using the property of conditional expectation for bivariate normal random variables, we can write the expectation of quality $\w_\idx$ given $\wh_\idx$ as
\begin{equation} 
    \label{eq:expected quality}
    \wt_\idx=\E(\w_\idx|\wh_\idx) = \wh_\idx \rho_{\g_\idx}^2 + (1 -\rho_{\g_\idx}^2) \m_\idx,
\end{equation}
where $\rho_{\g_\idx} = {\sqp}/{\sqrt{\sqp^2 + \sx_{\g_\idx}^2}} \in[0,1]$ is the correlation coefficient between the quality $\w_\idx$ and its estimate $\wh_\idx$. Since $\wt_\idx$ is a linear function of $\wh_\idx$, and $\wh_\idx$ is distributed normally, the expected quality $\wt_\idx$ follows a normal distribution with mean $\m_\idx$ and variance $\st_{\g_\idx}^2 = \sqp^4/\left(\sx_{\g_\idx}^2 + \sqp^2\right)$. Note that the larger the value of noise $\sx_{\g_\idx}^2$, the more the values of $\wt_\idx$ are concentrated around its mean value $\m_\idx$ (the smaller the variance $\st_{\g_\idx}^2$). From \eqref{eq:expected quality}, we observe that the decision-maker puts a higher weight on the effort $m$ for the high-noise candidates compared to the low-noise candidates: for the same level of effort, the high-noise candidates will be seen by the decision-maker as having less variability of expected quality compared to that of the low-noise candidates.

\subsection{The population game}
\label{ssec:nash equilibrium}

As the decision-maker selects the candidates having the largest expected quality, it will select all candidates whose expected quality  $\wt$ is larger than some selection threshold $\til$. We denote by $\pxg(m; \theta)$ the probability for a candidate of group $G$ to be selected when their effort is $\m$ and the selection threshold is $\til$. It equals:\footnote{In the rest of the paper, to simplify the notation, we will drop the index $\idx$.}
\begin{align}
    \label{eq:proba_selection}
    \px_\g\left(\m; \til\right) = \Pb\left(\wt\ge \til\right)=  \Phi
    \left(\frac{\m-\til}{\stp}\right),
\end{align}
where $\Phi$ is the cumulative distribution function of the standard normal distribution $\norm(0,1)$.

We assume that each selected candidate receives a positive reward $\s$, whereas the candidates who are not selected get a reward of $0$. Hence, the \emph{payoff} $\vp$ of a candidate from population $\g$ is
\begin{align}
    \label{eq: payoff}
    \vv_\g(\m;\til) & = \s \cdot \px_\g(\m;\til) - \cp\cdot\m^2/2.
\end{align}

Given a selection threshold $\til$, each candidate strategically decides on the effort $\m$ that maximizes their expected payoff $\vv_\g(\m;\til)$. Following the classical definition \cite{sandholm10}, we call it a \emph{pure best response} and we denote the set of all \emph{pure best response} strategies of a candidate as
\begin{align*}
    \br_\g(\til) = \left\{ m \text{ such that } \forall m'\ge0: \vp(\m; \til) \ge \vp(\m'; \til)\right\}.
\end{align*}
We say that the best response is unique if the set of best responses $\br_\g(\til)$ is reduced to a singleton. In such a case, by abuse of notation, we denote by $\br_\g(\til)$ the unique best response.

Similarly, by $\BR_\g(\til)$ we denote the set of all \emph{mixed best responses} of a candidate. It is the set of all probability distributions over the set of pure best responses $\br_\g(\til)$.

\paragraph{\textbf{Selection threshold}}

For each population $G\in\{\A,\B\}$, we denote by $\mu_G$ the distribution of efforts used by this population $G$. It is a probability distribution\footnote{If all candidates of the population $G$ make the same effort, $\mu_G$ is a \emph{pure strategy}. Otherwise, $\mu_G$ is a \emph{mixed strategy}.} on $\R^+$. We denote by $\mvec=(\mqa,\mqb)$ the effort distributions of the two populations. We denote the cumulative distribution function of the expected quality $\wt$ induced by $\mqg$ by $\F_{\mqg}$, and the cumulative distribution function of expected quality of the total population by $\F_{\mvec} = \pa \F_{\mqa} + \pb \F_{\mqb}$.  The decision-maker selects the $\alpha$-best candidates. Hence, it will select all candidate whose expected quality $\wt$ is above the $(1-\ax)$-quantile of the distribution $\F_{\mvec}$, that we denote by $\til(\mvec)= \F^{-1}_{\mvec}(1-\ax)$.


\paragraph*{\textbf{The game}}

The above definitions describe a population game between the candidates. We denote the game by $\gun$, where the superscript ``un'' emphasizes that the decision-maker is \emph{un}constrained,\footnote{In Section~\ref{sec:fairness mechanism}, we  study a  decision-maker faced with a fairness constraint (called demographic parity).} i.e., it selects the best $\ax$ candidates based on the expected quality $\wt$. Note that the game $\gun$ is parameterized by the reward size $\s$, the variance of quality $\sq^2$, the noise variances $\sxp^2$, the cost coefficients $\cp$, and the selection size $\ax$.

\subsection{Existence and uniqueness of the Nash equilibrium}

We use the standard definition of Nash equilibrium for populations games \cite{sandholm10}:
\begin{definition}[Nash equilibrium]
    A pair of effort distributions $\mvec = (\mqa,  \mqb)$ is called an equilibrium of the game $\gun$ if for all populations $G\in\{\A,\B\}$, the support of $\mq_\g$ is included in the set of best responses $\br_\g(\til(\mvec))$, where $\til(\mvec) = F^{-1}_{\mvec}(1-\ax)$.
\end{definition}

A Nash equilibrium is a situation where no candidate has an incentive to change its decision: if the population plays $\mvec$, then the selection threshold will be $\til(\mvec)$.  Hence, a candidate of a group $\g$ does not have an incentive to play a strategy that is not in $\br(\til(\mvec))$. As the support of $\mu_G$ is included in $\br(\til(\mvec))$, this implies that all candidates cannot obtain a higher payoff by unilaterally changing their decision.

In the rest of the paper, we will study the property of the Nash equilibrium of the game, which exists and is unique as guaranteed by the following theorem. 
\begin{theorem}
    \label{theorem:unique equilibrium}
    The game $\gun$ has a unique Nash equilibrium, that we denote by $\mvecun=(\mqaun, \mqbun)$.
\end{theorem}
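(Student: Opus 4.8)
The plan is to collapse the fixed-point that defines a Nash equilibrium into a one-dimensional monotonicity argument in the selection threshold $\til$. First I would study, for a \emph{fixed} threshold $\til$, the best-response problem $\max_{\m\ge0}\vp(\m;\til)$. Writing $z=(\m-\til)/\stp$ and differentiating \eqref{eq: payoff} gives the stationarity condition $\frac{\s}{\stp}\phi(z)=\cp\m$ and the second derivative $\vp''(\m)=-\frac{\s}{\stp^2}\,z\phi(z)-\cp$, where $\phi$ is the standard normal density. Since $z\phi(z)$ attains its minimum $-\phi(1)$ at $z=-1$, there is a clean dichotomy: if $\s\le\cp\stp^2/\phi(1)$ the payoff is strictly concave and the best response is unique for every $\til$ (this is the ``small reward'' regime), whereas for larger $\s$ the quantity $\vp''$ changes sign at most twice, so $\vp(\cdot;\til)$ has at most two local maxima separated by one local minimum. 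In either case I would conclude that the best response $\br_\g(\til)$ is single-valued for all $\til$ except at most one critical value $\til^*_\g$, namely the threshold at which the two local maxima of $\vp(\cdot;\til)$ attain equal payoff; at $\til^*_\g$ exactly two efforts are optimal.

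The technical heart is a monotonicity lemma: the best-response selection probability $\psi_\g(\til):=\pxg(\br_\g(\til);\til)=\Phi((\br_\g(\til)-\til)/\stp)$ is strictly decreasing in $\til$. Along any branch of interior local maxima, implicit differentiation of the stationarity condition yields $\frac{d\br_\g}{d\til}=\frac{A}{A+\cp}$ with $A=\frac{\s}{\stp^2}z\phi(z)$, while the second-order condition is precisely $A+\cp\ge0$; hence $\frac{d\br_\g}{d\til}<1$, so $z=(\br_\g(\til)-\til)/\stp$ and therefore $\psi_\g$ strictly decrease along the branch. At the critical threshold $\til^*_\g$ the global maximizer switches from the higher-effort branch (higher selection probability) to the lower one as $\til$ increases, so $\psi_\g$ has a downward jump there. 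Combined with the boundary behavior $\psi_\g(\til)\to1$ as $\til\to-\infty$ and $\psi_\g(\til)\to0$ as $\til\to+\infty$ (for extreme thresholds the best response collapses to negligible effort), this shows $\psi_\g$ is strictly decreasing over all of $\R$.

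To assemble the equilibrium, note that when the threshold equals $\til$ and each candidate best responds, the total selected mass is $\Psi(\til)=\pa\psi_\A(\til)+\pb\psi_\B(\til)$, which by the previous step is strictly decreasing from $1$ to $0$. Since $\ax\in(0,1)$, there is a \emph{unique} threshold $\til^*$ compatible with selecting an $\ax$-fraction: either $\Psi$ crosses $\ax$ continuously, in which case both groups play their unique pure best responses; or $\ax$ lies inside a downward jump of $\Psi$ located at some $\til^*_\g$, in which case group $\g$ randomizes between its two best-response efforts, with the mixing weight $\lambda\in[0,1]$ uniquely pinned down by the scalar equation $\Psi=\ax$ (it lies in $[0,1]$ exactly because $\ax$ falls in the jump gap). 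To make the mixed part of the profile $\mvecun=(\mqaun,\mqbun)$ unique I would finally verify that the two critical thresholds are distinct, $\til^*_\A\neq\til^*_\B$: they solve the equal-payoff condition with different parameters $(\stp,\cp)$ because $\sxa^2>\sxb^2$ forces $\sta^2<\stb^2$, so at the equilibrium threshold at most one group is indifferent, the other plays a pure best response, and the single mixing weight is determined. This yields existence and uniqueness of $\mvecun$.

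The main obstacle is that $\vp$ is \emph{not} concave — it is a scaled Gaussian sigmoid minus a quadratic — so classical uniqueness results for concave games (e.g.\ Rosen's) do not apply and the best-response correspondence is genuinely discontinuous. The crux is therefore the monotonicity lemma: although the best-response \emph{effort} $\br_\g(\til)$ is itself non-monotone in $\til$ (small for extreme thresholds, large for intermediate ones), the induced selection \emph{probability} $\psi_\g$ is monotone, and establishing this requires the second-order-condition bookkeeping above together with control of the direction of the jump at $\til^*_\g$. Ruling out the simultaneous-jump coincidence $\til^*_\A=\til^*_\B$ (which would reintroduce a one-parameter family of equilibria) is the remaining delicate point for uniqueness of the randomized part.
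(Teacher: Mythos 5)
Your argument is correct and is essentially the paper's own proof in dual form: the paper studies the fixed points of the induced-threshold map $\tiln(\til)=F^{-1}_{\mvec}(1-\ax)$ and shows $d\tiln/d\til<1$ together with downward jumps at the dropout thresholds, which is exactly your statement that the best-response selected mass $\Psi(\til)=\pa\psi_\A(\til)+\pb\psi_\B(\til)$ is strictly decreasing from $1$ to $0$ — both rest on the same implicit-differentiation bound $d\br_\g/d\til<1$ obtained from the second-order condition. The only substantive differences are that you obtain existence from the monotone crossing of $\ax$ rather than from Kakutani's theorem, and that you explicitly flag the degenerate coincidence $\til^*_\A=\til^*_\B$ (which the paper's equivalence lemma silently assumes away); note, however, that dismissing it because the groups have ``different parameters'' is not by itself a proof, since distinct pairs $(\cp,\stp)$ could in principle produce equal dropout thresholds.
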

We will also denote by $\tilun=F_{\mvecun}^{-1}(1-\ax)$ the selection threshold at equilibrium. Note that the equilibrium and the selection threshold depend on the game parameters, including the reward $\s$. In the following, we will study the properties of the equilibrium as $\s$ grows. For simplicity of exposition, we omit the dependency on $\s$ and  write $\mvecun$ or $\tilun$ instead of $\mvecun(\s)$ or $\tilun(\s)$.

We provide a detailed proof of Theorem~\ref{theorem:unique equilibrium} in Appendix~\ref{proof:unique equilibrium} whose main ingredient is to show that there is a one-to-one mapping between the equilibria of the game and the fixed points of a multi-valued function $\tiln$:
\begin{equation}
\label{eq:T}
    \tiln(\til)  = \{F^{-1}_{\mvec}(1-\ax): \mup\in \BR_\g(\til)\}.
\end{equation}
This simplifies the problem, as now we need to study the fixed points of a function of a single variable. By studying the first derivative of $\tiln$, we show that the function $\tiln$ has a unique fixed point and, hence, the defined game has also a unique equilibrium. 

\subsection{Summary of main notation}

We summarize the main notation in Table~\ref{table:notation}. Recall that, without loss of generality, we assume that $\sxa^2 > \sxb^2$ (which also implies that $\sta^2<\stb^2$). We, thus, refer to $\A$-candidates as \emph{high-noise}, and we refer to $\B$-candidates as \emph{low-noise}. To refer to the group with higher/lower cost-of-effort, we use the names \emph{cost-disadvantaged}/\emph{cost-advantaged}. If, for example, $\ca>\cb$, then we say that $\A$-candidates are \emph{cost-disadvantaged} and that $\B$-candidates are \emph{cost-advantaged}.
\begin{table}
    \caption{Summary of notation.}
    \begin{tabular}{ll}
        \toprule
        $\w_\idx\sim\norm(m_\idx, \eta^2)$ & quality of candidate $\idx$\\
        $\wh_\idx=\w_\idx + \varepsilon_\idx \sx_{\g_\idx}$ & quality estimate of candidate $\idx$  ($\varepsilon_\idx\sim\norm(0,1)$)\\
        $\wt_\idx=\E(\w_\idx|\wh_\idx)$ & expected quality of candidate $\idx$\\
        $\mqg$ & distribution of effort for the population $\g$\\
        $\mvec=(\mqa,\mqb)$ & distribution of effort for the total population\\
        $\F_{\mqg}$ & CDF of the expected quality $\wt$ for the population $\g$\\
        $\F_{\mvec}=\pa\F_{\mqa} + \pb\F_{\mqb}$ & CDF of the expected quality $\wt$ for the total population\\
        $\ax \in (0,1)$ & selection size\\ 
        $\til(\mvec) = \F_{\mvec}^{-1}(1-\ax)$ & selection threshold\\
        $\pg$ & proportion of $\g$-candidates in the total population\\
        $\pxg(m;\til)$ & selection probability for a $\g$-candidate  with effort $\m$ given the threshold $\til$\\
        \bottomrule
    \end{tabular}
    \label{table:notation}
\end{table}

\section{Equilibrium characterization and resulting discrimination}
\label{sec:characterization}
In this section, we characterize the equilibrium of the game for large\footnote{In Section~\ref{sec:small S}, we show theoretical and numerical results for small and intermediate rewards $\s$.} rewards $\s$. We consider the case of large $\s$ as it models the competition in selection procedures with high stakes, e.g., hiring CEOs or college admission to high-ranked schools. 

\subsection{Properties of the best response}
\label{ssec:dropout}

We start with the characterization of the best response $\br_\g(\til)$ of a candidate. In Lemma~\ref{lemma:br large S} (whose proof is deferred to Appendix~\ref{proof:br large S}), we show that, when $\s$ is large, the best response $\br_\g(\til)$ is unique for all thresholds $\til$, except one, that we call a \emph{dropout threshold} $\tildg$.  We show that the value of the best response $\br_\g(\til)$ increases with the threshold $\til$ until the latter reaches $\tildg$; it then drops down and decreases when $\theta\ge\tildg$. This means that when the selection threshold $\til$ is too high, the candidates lose incentives to make any effort.  To emphasize the dependency of the dropout threshold  on the reward $\s$, we write $\tildg(\s)$.

\begin{lemma}[Best reponse for large $\s$]
  \label{lemma:br large S}
  There exists $\s_0$ such that for all rewards $\s\ge\s_0$, there exists a threshold $\tildg(\s)$, called the dropout threshold, such: 
  \begin{enumerate}[(i)]
    \item When the selection threshold is $\til = \tildg(\s)$, there are two pure best response strategies: $\br_\g(\tildg(\s)) = \left\{\mmin_\g\left(\tildg(\s)\right),\mmax_\g\left(\tildg(\s)\right)\right\}$. They satisfy:
      \begin{align*}
        \lims \mmin_\g\left(\tildg(\s)\right)= 0, \qquad \lims \frac{\mmax_\g\left(\tildg(\s)\right)}{\tildg(\s)} = 1.        
      \end{align*}
    \item For all $\theta\ne\tildg(\s)$, the pure best response $\br_\g(\theta)$ is unique. Moreover, for any $\gamma\in(0,1)$, we have 
    \begin{align*}
      \lims \br_\g\left(\tildg(\s) / \gamma\right)=0 \text{ and }\lims \frac{\br_\g\left(\gamma\tildg(\s)\right)}{\tildg(\s)}=\gamma.
    \end{align*}
  \end{enumerate}
\end{lemma}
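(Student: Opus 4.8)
The plan is to analyse the best response through the shape of the payoff $\vp(\m;\til)=\s\,\Phi\!\left(\frac{\m-\til}{\stp}\right)-\frac{\cp}{2}\m^2$ as a function of $\m\ge0$ for a fixed threshold $\til$. First I would compute $\partial_\m\vp(\m;\til)=\frac{\s}{\stp}\phi\!\left(\frac{\m-\til}{\stp}\right)-\cp\m$, where $\phi$ denotes the standard normal density, and identify the critical points as the intersections of a Gaussian ``bump'' centred at $\m=\til$ of height $\frac{\s}{\stp\sqrt{2\pi}}=\Theta(\s)$ with the line $\m\mapsto\cp\m$. A direct comparison of these two curves yields the whole critical-point structure: for small $\til$ there is a single intersection, the unique (``active'') maximiser near $\til$; for very large $\til$, precisely once the line overtakes the bump's peak (i.e. $\cp\til>\frac{\s}{\stp\sqrt{2\pi}}$), there is again a single intersection, now a ``dropout'' maximiser $\mmin_\g(\til)$ exponentially close to $0$; and in the intermediate range there are exactly three critical points, giving two local maxima $\mmin_\g(\til)$ (near $0$) and $\mmax_\g(\til)$ (slightly above $\til$) separated by a local minimum.

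\textbf{The dropout threshold.} On the intermediate range both local maxima coexist, so the best response can fail to be unique only where they deliver the same payoff. I would therefore study the gap $\Delta(\til)=\vp(\mmax_\g(\til);\til)-\vp(\mmin_\g(\til);\til)$. By the envelope theorem, $\frac{d}{d\til}\vp(\m^*(\til);\til)=\partial_\til\vp=-\frac{\s}{\stp}\phi\!\left(\frac{\m^*-\til}{\stp}\right)$ along any smoothly varying critical branch $\m^*(\til)$; since this quantity is far more negative at $\mmax_\g\approx\til$ than at $\mmin_\g\approx0$, the gap $\Delta$ is strictly decreasing. As $\Delta>0$ at the low end of the coexistence range (active maximiser wins) and $\Delta<0$ at the high end (dropout wins), $\Delta$ has a unique zero, which I would define to be $\tildg(\s)$. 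This simultaneously gives statement~(ii) — away from $\tildg(\s)$ exactly one local maximum dominates, so $\br_\g(\til)$ is the unique global maximiser — and the two-element best response $\{\mmin_\g,\mmax_\g\}$ at $\til=\tildg(\s)$ in statement~(i).

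\textbf{Asymptotics as $\s\to\infty$.} To obtain the limits I would pin down the scale of $\tildg(\s)$. Writing $\mmax_\g=\til+\stp z$, the first-order condition $\frac{\s}{\stp}\phi(z)=\cp(\til+\stp z)$ forces $\phi(z)=\Theta(\cp\til\stp/\s)$, hence $z=z_c\sim\sqrt{\ln\s}\to+\infty$; thus $\Phi(z_c)\to1$. Since $\vp(\mmin_\g;\til)\to0$ (the reward $\s\Phi(-\til/\stp)$ is super-exponentially small and the cost vanishes) while $\vp(\mmax_\g;\til)\approx\s-\frac{\cp}{2}\til^2$, the indifference $\Delta(\tildg)=0$ balances $\s\approx\frac{\cp}{2}\tildg^2$, giving $\tildg(\s)\sim\sqrt{2\s/\cp}=\Theta(\sqrt\s)$. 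Because $\stp z_c/\tildg\to0$, this yields $\mmax_\g(\tildg)/\tildg\to1$, and $\mmin_\g(\tildg)=\Theta\!\big(\frac{\s}{\cp\stp}\phi(\tildg/\stp)\big)\to0$, completing~(i). For~(ii), $\tildg/\gamma>\tildg$ lies in the dropout region, so $\br_\g(\tildg/\gamma)=\mmin_\g\to0$; and $\gamma\tildg<\tildg$ lies in the active region where $\br_\g(\til)=\mmax_\g(\til)=\til+\stp z_c(\til)$, so $\br_\g(\gamma\tildg)/\tildg=\gamma+\stp z_c/\tildg\to\gamma$.

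\textbf{Main obstacle.} The delicate points are (a) proving that the coexistence range is a genuine interval whose endpoints (the saddle-node tangencies of bump and line) straddle $\tildg(\s)$ for all $\s\ge\s_0$, so that $\Delta$ is defined and sign-definite exactly where needed; and (b) upgrading the pointwise asymptotics into limits uniform enough over the relevant $\til$-ranges to conclude $\br_\g(\gamma\tildg)/\tildg\to\gamma$ for every fixed $\gamma\in(0,1)$. Both reduce to careful but routine control of Gaussian tails, which is where I expect most of the work to lie.
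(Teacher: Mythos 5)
Your proposal is correct and follows essentially the same route as the paper: the paper likewise classifies the critical points of $\vp$ (at most two local maxima, via the Gaussian-bump-versus-line picture in its Lemma on the maxima of the payoff), defines $\tildg$ as the unique zero of the payoff gap between the two maxima using the same envelope-theorem monotonicity $\partial_\til\vp(\m(\til);\til)=-\cp\m(\til)$, and then derives $\tildg\sim\sqrt{2\s/\cp}$, $\mmax\sim\tildg$, $\mmin\to0$ from the non-negativity/indifference of payoffs and the first-order condition. The only cosmetic difference is that the paper obtains the ratio limits in (ii) via l'H\^opital and the explicit formula for $d\br/d\til$ rather than your parametrization $\mmax=\til+\stp z_c$ with $z_c\sim\sqrt{\ln\s}$.
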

Using the notation for asymptotic equivalence, we can write the statement of the theorem in a simpler form. For example, from the second part of (ii), we can infer that $\br_\g(\gamma \tildg) \sim \gamma \tildg $ for any $\gamma \in (0,1)$.
Therefore, the above lemma shows that for a given candidate, the best response $\br_\g(\til)$ increases essentially linearly up to the dropout threshold $\tildg$, and then drops to $0$ when the selection threshold is too high, i.e., $\til>\tildg$. We will illustrate this later in Fig.~\ref{fig:large S}.

Note that the dropout threshold $\tildg$ is group-dependent. As we will see later in Section~\ref{ssec:disrimination large s}, the most represented group at equilibrium will be the one having the largest dropout threshold. 
Hence, in Lemma~\ref{lemma:dropout}, we show the asymptotic behavior of the dropout threshold as a function of $\s$. As we expect, the dropout threshold increases with the reward $\s$, and decreases with the cost coefficient $\cp$. A less intuitive property is its relation with the noise variance $\sxp^2$: we show that the dropout threshold increases as the noise variance $\sxp^2$ increases.  Lemma~\ref{lemma:dropout} is proven in Appendix~\ref{proof:dropout at infinity}.
\begin{lemma}[Dropout threshold for large $\s$]
\label{lemma:dropout}
  Let $\tildg(\s)$ be the dropout threshold.
  \begin{enumerate}[(i)] 
    \item For any $\cp$ and $\g$, the dropout threshold  can be expressed as 
    $$\tildg(\s) = \sqrt{2\s/\cp}\left(1 + o(1)\right) \text{ as }\s\to\infty.$$
    \item If $\ca=\cb$ and $\sxa^2 > \sxb^2$, then there exists $\s_0$ such that for all $\s\ge \s_0$,
  $\tilda(\s) > \tildb(\s).$
  \end{enumerate}
\end{lemma}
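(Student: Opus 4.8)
The plan is to pin down the dropout threshold $\tildg(\s)$ via the two equations that characterize it and then extract its asymptotics. By Lemma~\ref{lemma:br large S}, at $\til=\tildg(\s)$ the payoff $\vp(\,\cdot\,;\tildg)$ has exactly two maximizers $\mmin_\g$ and $\mmax_\g$, with $\mmin_\g\to0$ and $\mmax_\g/\tildg\to1$. Being both best responses, they give equal payoff, yielding the indifference equation $\vp(\mmin_\g;\tildg)=\vp(\mmax_\g;\tildg)$; and since $\mmax_\g>0$ for large $\s$, it is an interior maximizer and satisfies the first-order condition $(\s/\stp)\,\phi\big((\mmax_\g-\tildg)/\stp\big)=\cp\,\mmax_\g$, where $\phi$ is the standard normal density. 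Setting $z=(\mmax_\g-\tildg)/\stp$, these relations together with $\tildg=\mmax_\g-\stp z$ determine $\mmax_\g$, $z$ and $\tildg$.

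For (i), I would first show $\tildg\to\infty$: were $\tildg$ bounded on a subsequence, then $\mmax_\g/\tildg\to1$ keeps $\mmax_\g$ bounded, but the first-order condition forces $\phi(z)=\cp\stp\mmax_\g/\s\to0$, hence $|z|\to\infty$, contradicting $0\le\mmax_\g=\tildg+\stp z$ bounded. With $\tildg\to\infty$ and $\mmin_\g\to0$ we get $\vp(\mmin_\g;\tildg)=\s\Phi\big((\mmin_\g-\tildg)/\stp\big)-\cp(\mmin_\g)^2/2\to0$. Next, because for large $\s$ the high-effort solution of the first-order condition sits on the decreasing flank of the bump $\m\mapsto(\s/\stp)\phi\big((\m-\tildg)/\stp\big)$ (where the payoff's second derivative $-\s(\m-\tildg)\phi(z)/\stp^3-\cp$ is negative), it satisfies $\mmax_\g>\tildg$, so $z>0$; combined with $\phi(z)\to0$ this gives $z\to+\infty$ and $\Phi(z)\to1$. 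The indifference equation then reads $\cp(\mmax_\g)^2/2=\s\Phi(z)-\vp(\mmin_\g;\tildg)=\s\,(1+o(1))$, so $\mmax_\g=\sqrt{2\s/\cp}\,(1+o(1))$, and $\mmax_\g/\tildg\to1$ transfers this to $\tildg=\sqrt{2\s/\cp}\,(1+o(1))$.

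For (ii) the leading term $\sqrt{2\s/\cp}$ coincides for both groups once $\ca=\cb$, so I would carry the expansion to the next order. Eliminating $\mmax_\g$ between the first-order condition and the indifference equation---where the dropout payoff $\vp(\mmin_\g;\tildg)$ is exponentially small in $\s$ (of order $\s e^{-\Theta(\s)}$) and therefore negligible---reduces the problem to the single scalar equation $\Phi(z)/\phi(z)^2=\s/(2\cp\,\stp^2)$. Since $\Phi(z)/\phi(z)^2=2\pi\Phi(z)e^{z^2}$ is increasing for large $z$, this determines $z=z_\g$, and taking logarithms gives $z_\g^2=\ln\s-2\ln\stp+O(1)$, i.e.\ $z_\g\sim\sqrt{\ln\s}$. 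As $\sxa^2>\sxb^2$ gives $\sta<\stb$, the overshoot term satisfies $\sta z_\A-\stb z_\B=(\sta-\stb)\sqrt{\ln\s}+o(\sqrt{\ln\s})\to-\infty$, whereas the difference of the effort terms $\sqrt{2\s\Phi(z_\g)/\cp}$ is only $O(1/\sqrt{\ln\s})$ because both $\Phi(z_\g)\to1$. Writing $\tildg=\sqrt{2\s\Phi(z_\g)/\cp}-\stp z_\g$, I would conclude
\begin{align*}
  \tilda-\tildb=\Big[\sqrt{2\s\Phi(z_\A)/\cp}-\sqrt{2\s\Phi(z_\B)/\cp}\Big]-\big[\sta z_\A-\stb z_\B\big]\to+\infty,
\end{align*}
so $\tilda(\s)>\tildb(\s)$ for all large $\s$.

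The main obstacle is part (ii): because the two thresholds agree to leading order $\sqrt{2\s/\cp}$, the whole comparison rests on a $\sqrt{\ln\s}$-scale analysis. One must (a) confirm that the dropout payoff $\vp(\mmin_\g;\tildg)$ stays negligible at this refined scale, (b) fix the correct sign $z\to+\infty$ (so that $\Phi(z)\to1$, not $0$), and (c) check that the small gain in the effort/selection term is genuinely dominated by the $\Theta(\sqrt{\ln\s})$ difference coming from the overshoot $\stp z_\g$. The guiding intuition is that the high-noise group has the \emph{smaller} posterior spread $\stp$ (since $\sta<\stb$), so it needs to overshoot the threshold by less to secure selection, which is exactly what lets its dropout threshold sit higher.
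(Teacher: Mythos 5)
Your proof is correct, and while part (i) follows essentially the paper's own route, part (ii) is a genuinely different argument. For (i), both you and the paper extract $\mmax_\g(\tildg)=\sqrt{2\s/\cp}(1+o(1))$ from the indifference condition $\vp(\mmax_\g;\tildg)=\vp(\mmin_\g;\tildg)$ together with $\px(\mmax_\g;\tildg)\to1$ and $\px(\mmin_\g;\tildg)\to0$, and then transfer the asymptotics to $\tildg$ via $\mmax_\g/\tildg\to1$ (the paper phrases the transfer as an application of l'H\^opital's rule, but the content is the same). For (ii), the paper avoids any refined expansion: it differentiates the indifference condition implicitly with respect to $\stp$ and obtains $\partial\tildg/\partial\stp=-(\mmax_\g+\mmin_\g-\tildg)/\stp<0$ for large $\s$ (the sign coming from $\mmax_\g>\tildg$ and $\mmin_\g\ge0$), so that $\sta<\stb$ immediately gives $\tilda>\tildb$ by monotonicity. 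You instead eliminate $\mmax_\g$ between the first-order condition and the indifference equation to get $\Phi(z)/\phi(z)^2=\s/(2\cp\stp^2)$, deduce $z_\g=\sqrt{\ln\s}+O(1/\sqrt{\ln\s})$, and compare $\tildg=\sqrt{2\s\Phi(z_\g)/\cp}-\stp z_\g$ across groups; your bookkeeping is right (the effort-term difference is $O(1/\sqrt{\ln\s})$ while the overshoot difference is $\Theta(\sqrt{\ln\s})$, since $\phi(z_\g)^2\approx2\cp\stp^2/\s$ makes $\Phi^c(z_\g)=O(1/\sqrt{\s\ln\s})$), and the exponentially small $\vp(\mmin_\g;\tildg)$ is indeed negligible at this scale. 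Your approach costs more care with error terms but buys a quantitative rate, $\tilda(\s)-\tildb(\s)=(\stb-\sta)\sqrt{\ln\s}\,(1+o(1))\to+\infty$, which is strictly more informative than the paper's sign-only monotonicity argument; the paper's route is shorter and sidesteps the second-order expansion entirely.
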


 From $(i)$, we conclude that the dependency of the dropout threshold $\tild$ on the cost coefficients $\cp$ is of higher order than the dependency on the noise variance  $\sxp^2$. In other words, when $\ca > \cb$, we have $\tilda(\s) < \tildb(\s)$ for large enough $\s$, regardless of the noise variance $\sxp^2$. When both groups have equal cost coefficients ($\ca = \cb$), the noise variance matters, and $\tilda(\s) > \tildb(\s)$ when $\sxa^2 > \sxb^2$ (or, equivalently, when $\sta^2<\stb^2$).

\subsection{Equilibrium strategy}
\label{ssec:main}

As we show in the previous section, the dropout threshold $\tildg$ is an important characteristics of candidates' best response: for thresholds $\til$ smaller than the dropout threshold $\tildg$, the candidates make an effort proportional to the threshold whereas for thresholds larger than the dropout, the candidates make nearly zero efforts.  In Theorem~\ref{theorem:equilibrium strategy} given below, we prove that, for large rewards, the selection threshold at equilibrium is equal to the dropout threshold of one of the two groups.  Note that this theorem is not stated in terms of the groups $\A$ and $\B$ but in terms of groups $\g_1$ and $\g_2$, where $\g_1$ is the group that has the largest dropout threshold, which holds if $C_{1}<C_2$ or ($C_1=C_2$ and $\sx_1^2>\sx_2^2$). Hence, the population $\g_1$ can correspond to the population $\B$ if $\ca > \cb$ (it corresponds, otherwise, to the population $\A$ as we assumed that $\sxa^2 > \sxb^2$). 
\begin{theorem}[Equilibrium strategies]
  \label{theorem:equilibrium strategy}
  Fix $\ax \in (0,1)$ and two populations of candidates, $G_1$ and $G_2$, such that ($\cc_1 < \cc_2$) or ($\cc_1=\cc_2$ and $\sx_1^2 > \sx_2^2$). Then, there exists a reward $S_0$ such that for $\s\ge S_0$: 
  \begin{enumerate}[(i)]
    \item If $\ax \le \p_1$, then the equilibrium threshold $\tilun(\s)$ is $\tild_1(\s)$.  In this case:
    \begin{itemize}
      \item The $G_1$-candidates play a mixed strategy that consists in playing $\mmax_1$ with probability $\tau_1$ and $\mmin_1$, with probability $1-\tau_1$, where $\lims \tau_1 = {\ax}/{p_1}$.
      \item The $G_2$-candidates play the pure strategy $\mqbr_2(\tild_1(\s))$ and $\lims \mqbr_2(\tild_1(\s))= 0$.
    \end{itemize}
    \item If $\ax > p_1$, then the equilibrium threshold $\tilun(\s)$ is $\tild_2(\s)$. In this case:
    \begin{itemize}
      \item  The $G_1$-candidates play the pure strategy $\mqbr_1(\tild_2(\s))$ and $\lims \mqbr_1(\tild_2(\s)) / \tild_2(\s) = 1$. 
      \item The $G_2$-candidates play a mixed strategy that consists in playing $\mmax_2$ with probability $\tau_2$ and $\mmin_2$ with probability $1-\tau_2$, where $\lims \tau_2 = ({\ax - p_1})/{p_2}$.
    \end{itemize}
  \end{enumerate}
\end{theorem}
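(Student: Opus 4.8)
The plan is to combine the best-response description of Lemma~\ref{lemma:br large S} with the fixed-point characterisation of the equilibrium threshold in~\eqref{eq:T} and the uniqueness of Theorem~\ref{theorem:unique equilibrium}. First I would record the ordering of the two dropout thresholds: under the hypothesis ($\cc_1<\cc_2$), or ($\cc_1=\cc_2$ and $\sx_1^2>\sx_2^2$), Lemma~\ref{lemma:dropout} gives $\tild_1(\s)>\tild_2(\s)$ for all large $\s$ --- the leading term $\sqrt{2\s/\cc_\g}$ already separates them when $\cc_1<\cc_2$, and part~(ii) of the lemma handles the tie $\cc_1=\cc_2$. Since by Lemma~\ref{lemma:br large S} the best response $\br_\g(\til)$ is single-valued for every $\til\neq\tild_\g(\s)$, the support of a group's equilibrium strategy is a singleton unless the equilibrium threshold coincides with that group's dropout threshold; hence the only thresholds at which a group can genuinely randomise are $\tild_1(\s)$ and $\tild_2(\s)$, and the whole problem reduces to locating the unique fixed point of $\tiln$ among these two values.

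The key step is to compute, for large $\s$, the limiting selection probability $\Phi\!\big(\tfrac{\br_\g(\til)-\til}{\st_\g}\big)$ of a best-responding candidate in the four relevant regimes. I would show that (a)~at a group's own dropout threshold the high-effort action $\mmax_\g$ is selected with probability $\to 1$ and the low-effort action $\mmin_\g$ with probability $\to 0$; and (b)~a group strictly below its dropout (group~$1$ evaluated at $\til=\tild_2(\s)$) is selected with probability $\to 1$, whereas a group strictly above its dropout (group~$2$ at $\til=\tild_1(\s)$) is selected with probability $\to 0$. The two ``$\to 0$'' claims are immediate: there $\br_\g\to0$ while $\til\to\infty$, so the argument of $\Phi$ tends to $-\infty$. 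The ``$\to 1$'' claims are the crux. For $\mmax_\g$ I would use the indifference defining the dropout together with Lemma~\ref{lemma:dropout}(i): the low-effort payoff tends to $0$ while $\cc_\g(\mmax_\g)^2/2\to\s$ because $\mmax_\g\sim\tild_\g(\s)\sim\sqrt{2\s/\cc_\g}$, so indifference forces $\Phi(\cdot)\to1$. For group~$1$ below its dropout I would analyse the first-order condition $\tfrac{\s}{\st_1}\Phi'\!\big(\tfrac{m-\til}{\st_1}\big)=\cc_1 m$ directly at $\til=\tild_2(\s)$: the payoff is still increasing at $m=\til$, since the marginal gain $\s\,\Phi'(0)/\st_1\sim\s$ dominates the marginal cost $\cc_1\til\sim\sqrt{\s}$, so the maximiser lies at $m>\til$; solving the first-order condition then shows $\tfrac{m-\til}{\st_1}\to+\infty$, hence selection probability $\to1$.

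With these asymptotics I would read the selected fraction as a function of the threshold and close each case. In case~(i), $\ax\le\p_1$: at $\til=\tild_1(\s)$ group~$2$ is above its dropout and contributes selection mass $\to0$ for every $\tau_1$, while the group~$1$ contribution $\p_1\big[\tau_1\,\Phi(\tfrac{\mmax_1-\tild_1}{\st_1})+(1-\tau_1)\Phi(\tfrac{\mmin_1-\tild_1}{\st_1})\big]$ is continuous and increasing in $\tau_1$; hence the total selected fraction tends to $0$ at $\tau_1=0$ and to $\p_1\ge\ax$ at $\tau_1=1$. By the intermediate value theorem there is, for large $\s$, a $\tau_1(\s)\in[0,1]$ making the total selected fraction equal $\ax$ exactly, i.e.\ making $\tild_1(\s)$ the $(1-\ax)$-quantile; this profile is a Nash equilibrium (group~$1$ is indifferent at its dropout and group~$2$ plays its unique best response), so by Theorem~\ref{theorem:unique equilibrium} it \emph{is} the equilibrium. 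Thus $\tilun=\tild_1(\s)$, $\br_2(\tild_1)\to0$, and solving for $\tau_1$ with the probabilities above yields $\tau_1\to\ax/\p_1$. Case~(ii), $\ax>\p_1$, is symmetric with the roles exchanged at $\til=\tild_2(\s)$: group~$1$ sits below its dropout (selected with probability $\to1$, so it contributes $\to\p_1$ and $\br_1(\tild_2)/\tild_2\to1$), group~$2$ mixes, and matching the quantile gives $\tilun=\tild_2(\s)$ with $\tau_2\to(\ax-\p_1)/\p_2$, feasible precisely because $\ax>\p_1$.

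I expect the main obstacle to be the two ``selection probability $\to1$'' statements, especially the below-dropout one, which needs the sign and the $\sqrt{\log\s}$ scale of $\tfrac{\br_1(\tild_2)-\tild_2}{\st_1}$ rather than merely $\br_1(\tild_2)/\tild_2\to1$ from Lemma~\ref{lemma:br large S}; deriving it from the first-order condition directly at $\til=\tild_2(\s)$ also avoids invoking Lemma~\ref{lemma:br large S} with a ratio $\gamma=\tild_2/\tild_1$ that is not fixed and tends to $1$ when $\cc_1=\cc_2$. A secondary subtlety is the knife-edge $\ax=\p_1$, where the endpoint value of the selected fraction at $\tau_1=1$ itself approaches $\ax$, so the intermediate-value step must be supplemented by a short monotonicity/continuity argument to pin the equilibrium threshold exactly at $\tild_1(\s)$.
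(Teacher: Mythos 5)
Your proposal is correct and follows essentially the same route as the paper's proof: identify the candidate equilibrium profile at the appropriate dropout threshold (mixing for the group whose dropout coincides with the threshold, pure best response for the other), verify the budget constraint asymptotically using the limiting selection probabilities of $\mmax_\g$, $\mmin_\g$ and of the below-/above-dropout best responses, and conclude via the fixed-point characterization of $\tiln$ together with uniqueness. The two points you flag as obstacles---justifying the selection probability $\to 1$ for group $1$ at $\tild_2(\s)$ when the threshold ratio is not a fixed $\gamma\in(0,1)$ (in particular when $\cc_1=\cc_2$), and the intermediate-value step producing an exact $\tau_1(\s)$ at finite $\s$ rather than only in the limit---are precisely the details the paper's appendix glosses over, so your treatment is a legitimate tightening of the same argument rather than a different approach.
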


\begin{proof}[Proof Sketch]
  To simplify notation, we omit the dependence on $\s$ for all variables. Let us prove that the dropout threshold $\tild_1$ is the fixed point of $\tiln$ for the case $(i)$. We specify the efforts made in response to $\tild_1$, and we show that they lead to the same selection threshold $\tild_1$, i.e., $\tild_1$ is the fixed point of $\tiln$.
The case $(ii)$ can be proven similarly; we provide the complete proof in Appendix~\ref{proof:equilibrium strategy}. 

We can show that, as $\s\to\infty$, we have $\pxmax_\g\coloneqq\pxg(\mmax_\g(\tildg);\tildg)\xrightarrow{\s\to\infty} 1$ and $\pxmin_\g\coloneqq\pxg(\mmin_\g(\tildg);\tildg) \xrightarrow{\s\to\infty}0$ for all $\g\in\{\g_1,\g_2\}$. 

If $\ax \le p_1$, then, assume that $G_1$-candidates randomize their strategy by playing $\mmax_1$ with the probability $\tau_1$ and $\mmin_1$ with the probability $1-\tau_1$. The $G_2$-candidates play the deterministic strategy $\mqbr_2(\tild_1)\xrightarrow{\s\to\infty}0$. The probability $\tau_1$ can be found from the budget constraint: $\ax = p_1(\tau_1\pxmax_1 + (1-\tau_1)\pxmin_1) + p_2 \px_2(\tild_1,\mqbr_2(\tild_1)) \xrightarrow{\s\to\infty} p_1\tau_1$, so $\tau_1\xrightarrow{\s\to\infty}\ax/p_1$. The defined strategies satisfy the budget constraint, so $\tild_1$ is the fixed point of $\tiln$ and, hence, the defined distribution of effort is the equilibrium of  the  game $\gun$.
\end{proof} 

Fig.~\ref{fig:large S} illustrates the results of Theorem~\ref{theorem:equilibrium strategy}. In Fig.~\ref{fig:large S a}, we show the case of group-dependent noise but group-independent cost coefficient ($\ca=\cb$). In this case, the $\A$-candidates have a higher dropout threshold compared to the $\B$-candidates, hence, for $\tildb<\til\le\tilda$, the $\A$-candidates make a non-zero effort while the $\B$-candidates make a nearly-zero effort. In our illustration, the selection size $\ax=0.1$ is smaller than the proportion of $\A$-candidates $\pa=0.5$. We can verify that for $\til=\tilda$, if a proportion $\ax/\pa$ of $\A$-candidates plays $\mqamax$, and the rest of $\A$-candidates plays $\mqamin$, then such a strategy  satisfies the budget constraint.  Hence, $\tilda$ is the fixed point of the function $\tiln$, so $\tilun = \tilda$.

In Fig.~\ref{fig:large S b}, we illustrate the case of group-dependent cost coefficient ($\ca>\cb$). In this case, the $\B$-candidates have a higher dropout threshold compared to the $\A$-candidates, hence, for $\tilda<\til\le\tildb$, the $\B$-candidates make a non-zero effort while $\A$-candidates make a nearly-zero effort. For the purpose of illustration, we again assume that the selection size $\ax=0.1$ is smaller than the proportion of $\B$-candidates ($\pb=0.5$). Similarly to the previous case, we can verify that for $\til=\tildb$, if a proportion $\ax/\pb$ of $\B$-candidates plays $\mqbmax$, and the rest of $\B$-candidates plays $\mqbmin$, then this strategy satisfies the budget constraint, so $\tilun = \tildb$.
\begin{figure}
  \centering
  \begin{subfigure}{.5\textwidth}
    \centering
    \begin{tikzpicture}[scale=1.33] 
      \node (img)  {\includegraphics[width=.8\linewidth]{../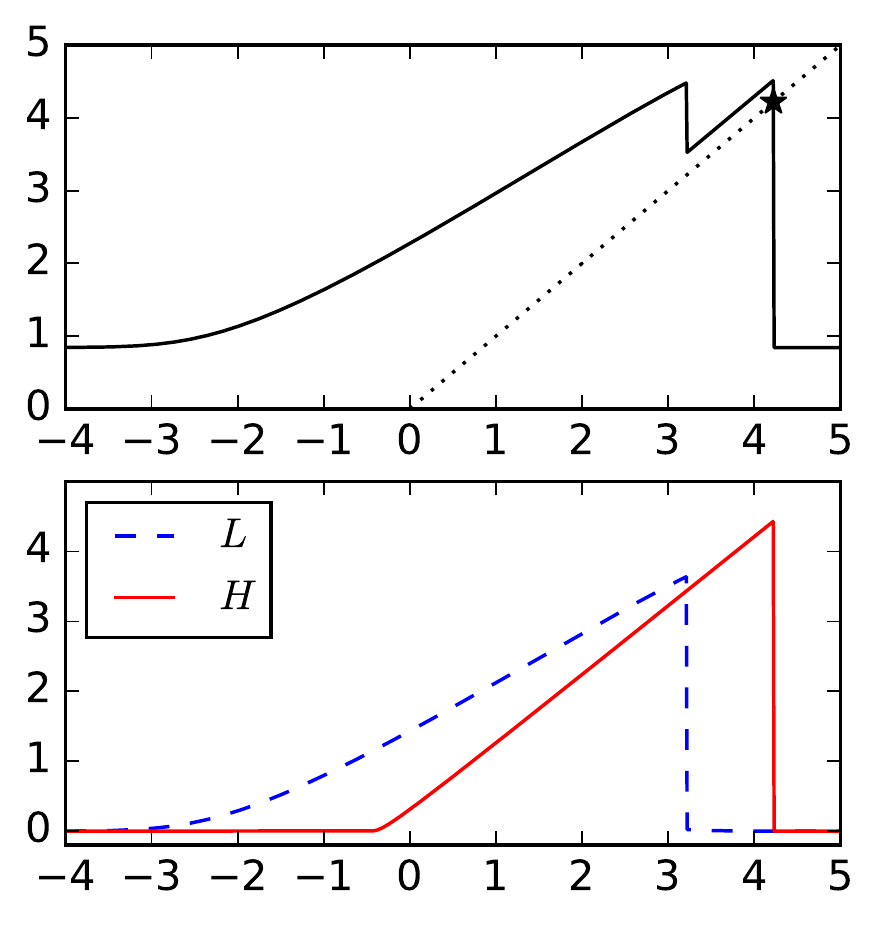}};
      \node[below=of img, node distance=0cm, yshift=3.5em,font=\color{black}] {\tiny$\theta$};
      \node[left=of img, node distance=0cm, rotate=90, anchor=center,yshift=-3em, xshift=-3.5em,font=\color{black}] {\tiny$\br$};
      \node[left=of img, node distance=0cm, rotate=90, anchor=center,yshift=-3em,xshift=4em, font=\color{black}] {\tiny$\tiln$};
      \node (maxb) at (1.1, -0.4) [font=\color{blue}]{\tiny $\mmax_{\B}$};
      \node (minb) at (1.0, -1.6) [font=\color{blue}]{\tiny $\mmin_{\B}$};
      \node (maxa) at (1.72, -0.24) [font=\color{red}]{\tiny $\mmax_{\A}$};
      \node (mina) at (1.45, -1.6) [font=\color{red}]{\tiny $\mmin_{\A}$};
      \node (ta) at (1.7, -2.1) [font=\color{red}]{\tiny $\tilda$};
      \node (tb) at (1.2, -2.1) [font=\color{blue}]{\tiny $\tildb$};
      \node (tun) at (1.75, 1.55) [font=\color{black}]{\tiny $\tilun$};
      \node (id) at (1.0, 1.0) [font=\color{black}, rotate=40]{\tiny $\mathrm{id}(\theta)$};
     \end{tikzpicture}
    \caption{$\ca/\cb=1$}
    \label{fig:large S a}
  \end{subfigure}%
  \begin{subfigure}{.5\textwidth}
    \centering
    \begin{tikzpicture}[scale=1.33] 
      \node (img)  {\includegraphics[width=.8\linewidth]{../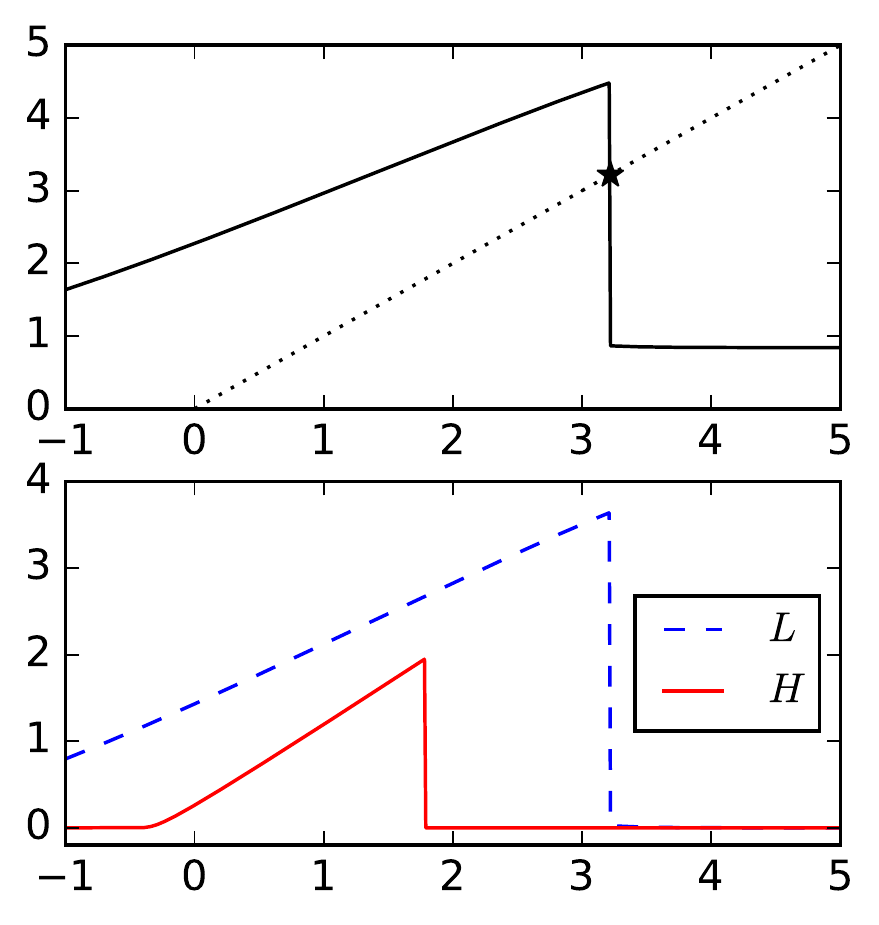}};
      \node[below=of img, node distance=0cm, yshift=3.5em,font=\color{black}] {\tiny$\theta$};
      \node[left=of img, node distance=0cm, rotate=90, anchor=center,yshift=-3em, xshift=-3.5em,font=\color{black}] {\tiny$\br$};
      \node[left=of img, node distance=0cm, rotate=90, anchor=center,yshift=-3em,xshift=4em, font=\color{black}] {\tiny$\tiln$};
      \node (maxb) at (1.0, -0.3) [font=\color{blue}]{\tiny $\mmax_{\B}$};
      \node (minb) at (1.0, -1.5) [font=\color{blue}]{\tiny $\mmin_{\B}$};
      \node (maxa) at (0.2, -0.9) [font=\color{red}]{\tiny $\mmax_{\A}$};
      \node (mina) at (0.2, -1.5) [font=\color{red}]{\tiny $\mmin_{\A}$};
      \node (ta) at (-0.1, -2) [font=\color{red}]{\tiny $\tilda$};
      \node (tb) at (0.85, -2) [font=\color{blue}]{\tiny $\tildb$};
      \node (tun) at (1.0, 1.3) [font=\color{black}]{\tiny $\tilun$};
      \node (id) at (0.2, 0.8) [font=\color{black}, rotate=30]{\tiny $\mathrm{id}(\theta)$};
     \end{tikzpicture}
    \caption{$\ca / \cb = 5$}
    \label{fig:large S b}
  \end{subfigure}
  \caption{Best response functions and the Nash equilibrium for $\s=10$, $\cb=1$, $\sta=0.1$ and $\stb=1$, $\pa=0.5$ and $\ax=0.1$. Both figures illustrate the case (i) of Theorem~\ref{theorem:equilibrium strategy}. The dotted line on the uppermost panels is the identity function
  $\mathrm{id}(\theta)=\theta.$}
  \label{fig:large S}
\end{figure}

\subsection{Discrimination due to the group-dependent variance and cost}
\label{ssec:disrimination large s}

In Theorem~\ref{theorem:equilibrium strategy}, we show that the equilibrium distribution of effort $\mvecun$ depends on the relation between the dropout thresholds $\tildg$ for different populations, $\A$ and $\B$. We now study how the noise variance $\sxp^2$ and the cost coefficient $\cp$ affect the representation of groups in the selection at equilibrium. Overall, we show that the group-dependent noise variance and the group-dependent cost coefficient can both lead to underrepresentation of some group of candidates at equilibrium. 

For a given population $\g$, we denote by $\mmqgun$ the \emph{average effort} that candidates from group $\g$ exert at equilibrium, and we denote by $\mpxgun$  the \emph{selection rate} which is the probability for a randomly chosen candidate from group $\g$ to be selected. Since $\px_\g(m; \theta)$ is the probability for a $\g$-candidate to be selected when exerting an effort $m$ (see \eqref{eq:proba_selection}), these quantities satisfy:
\begin{align*}
  \mmqgun = \int_{m\ge0} m\, d\mqgun(\m); \qquad\qquad
  \mpxgun = \int_{m\ge0}\px_\g(\m; \tilun)\,d\mqgun(\m).
\end{align*}
We say that the group $\A$ (or $\B$) is \emph{underrepresented} if $\mpx_{\A} < \mpx_{\B}$ (or $\mpx_{\B} < \mpx_{\A}$). Note that when we say ``underrepresented', it means that a demographic group has less representation in the selection than its share in the population of all candidates. This definition is not conditioned on the assumption that the mean of the qualities $\w$ is the same for both groups, but this assumption does make the notion of demographic parity more obviously appealing since there is no a priori distinction in the average quality between the groups. Nevertheless, we do not claim that demographic parity would be justified only under this assumption.

In Theorem~\ref{theorem:large S} below, we show that if the cost coefficient $\cp$ is group-independent ($\ca=\cb$), then the high-noise $\A$-candidates make a larger effort compared to that of the low-noise $\B$-candidates, and the latter are underrepresented. However, if the cost coefficient is group-dependent ($\ca\ne\cb$), the noise variance does not play a role if the reward $\s$ is large enough: the  cost-advantaged candidates make larger effort compared to that of the cost-disadvantaged candidates, and as a result, the latter are underrepresented. This theorem also shows that, as $\s$ goes to infinity, the ratios of efforts and selection rates can grow unbounded. For example, in the case $(i)$ of Theorem~\ref{theorem:large S}, if there are enough $\A$-candidates to fill the selection budget $\ax$, then $\B$-candidates will asymptotically not be selected when $\s$ goes to infinity. We will see in Section~\ref{sec:small S}, that for moderate values of reward $\s$, the $\B$-candidates still have some representation in the selection, but that can be very small. 

\begin{theorem}[Discrimination for large rewards $\s$]
  \label{theorem:large S}
  Let $\mqaun$ and $\mqbun$, $\pxaun$ and $\pxbun$ be the equilibrium effort distributions and selection rates of the game $\gun$.
  \begin{enumerate}[(i)]
  \item If $\ca < \cb$ or if $\ca=\cb$ and $\sxa^2>\sxb^2$, then there exists $\s_0$ such that for all $\s\ge \s_0$ the high-noise $\A$-candidates make greater effort compared to the low-noise $\B$-candidates, and the latter are underrepresented:
 \begin{align*}
  \lim_{\s\to\infty} \frac {\mmqbun}  {\mmqaun} = \lim_{\s\to\infty}\frac{\mpxbun}{\mpxaun}=
  \begin{cases}
     0 & \text{if } \pa \ge \ax,\\
     \frac{\ax - \pa}{1-\pa} < 1 & \text{if } \pa < \ax.
  \end{cases}
 \end{align*}
  \item If $\ca > \cb$, then there exists $\s_0$ such that for all $\s\ge \s_0$ the cost-advantaged $\B$-candidates make a greater effort compared to the cost-disadvantaged $\A$-candidates, and the latter are underrepresented: 
  \begin{align*}
    \lim_{\s\to\infty}\frac {\mmqaun} {\mmqbun}= \lim_{\s\to\infty} \frac {\mpxaun}  {\mpxbun}=
    \begin{cases}
       0 & \text{if } \pb \ge \ax,\\
       \frac{\ax - \pb}{1-\pb}  < 1 & \text{if } \pb < \ax.
    \end{cases}
   \end{align*}
  \end{enumerate}
\end{theorem}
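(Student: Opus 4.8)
The plan is to obtain this theorem as a corollary of Theorem~\ref{theorem:equilibrium strategy} and Lemma~\ref{lemma:dropout}, with no further analysis of best responses: once the equilibrium is described, the statement reduces to computing limits of average efforts and selection rates.

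First I would fix which group plays the role of $\g_1$ (the one with the larger dropout threshold). By Lemma~\ref{lemma:dropout}(i) we have $\tildg=\sqrt{2\s/\cp}(1+o(1))$, so a strictly smaller cost coefficient gives a strictly larger dropout threshold; and when $\ca=\cb$, Lemma~\ref{lemma:dropout}(ii) gives $\tilda>\tildb$. Hence in case (i) ($\ca<\cb$, or $\ca=\cb$ with $\sxa^2>\sxb^2$) we have $\g_1=\A$, $\g_2=\B$, and in case (ii) ($\ca>\cb$) we have $\g_1=\B$, $\g_2=\A$. Since $\p_1=\pa$ in case (i) and $\p_1=\pb$ in case (ii), the dichotomy $\ax\le\p_1$ versus $\ax>\p_1$ in Theorem~\ref{theorem:equilibrium strategy} matches exactly the two branches displayed in the statement.

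With the equilibrium in hand, the effort limits are immediate from $\mmqgun=\int_{m\ge0} m\,d\mqgun(m)$. For the group that mixes at its own dropout threshold I substitute $\lims\mmin_\g(\tildg)=0$, $\mmax_\g(\tildg)\sim\tildg$, and the limit of the mixing probability $\tau_\g$ from Theorem~\ref{theorem:equilibrium strategy}; for the group playing a pure response I use either $\lims\br_2(\tild_1)=0$ or $\br_1(\tild_2)\sim\tild_2$. Because $\tildg\to\infty$, in every branch one group's average effort grows like a fixed positive multiple of the prevailing threshold while the other stays bounded or grows with a smaller coefficient, so the ratio converges; substituting $\pb=1-\pa$ (resp.\ $\pa=1-\pb$) turns the coefficients into the displayed $0$ and $(\ax-\p_1)/(1-\p_1)$.

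The one genuinely delicate step is the selection rates, and it is where I expect the main obstacle: the equivalence $\br_1(\tild_2)\sim\tild_2$ does \emph{not} determine $\px_1(\br_1(\tild_2);\tild_2)=\Phi((\br_1(\tild_2)-\tild_2)/\st_1)$, since the rescaled gap $(\br_1(\tild_2)-\tild_2)/\st_1$ is not controlled by the first-order equivalence alone. I would therefore never compute the overrepresented group's rate directly. For the mixing group I use $\mpxgun=\tau_\g\pxmax_\g+(1-\tau_\g)\pxmin_\g$ together with $\pxmax_\g\to1$ and $\pxmin_\g\to0$ (established in the proof of Theorem~\ref{theorem:equilibrium strategy}); for the pure ``dropout'' group, whose effort tends to $0$ against a threshold tending to $\infty$, I get $\mpxgun=\Phi((\br_2(\tild_1)-\tild_1)/\st_2)\to0$ directly. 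The remaining selection rate is then recovered from the budget identity $\ax=\pa\mpxaun+\pb\mpxbun$, yielding $\mpxaun\to\ax/\pa$ (when $\pa\ge\ax$) or $\mpxaun\to1$ (when $\pa<\ax$), and symmetrically in case (ii). Taking ratios shows the selection-rate limits coincide with the effort limits; in all four branches the limits lie in $[0,1)$, which is precisely the qualitative claim that the $\g_2$-type group makes less effort and is underrepresented. Apart from this budget-constraint shortcut, the remaining work is careful case bookkeeping, matching $\g_1/\g_2$ to $\A/\B$ in each cost regime and verifying the four sub-cases reproduce the two displayed formulas.
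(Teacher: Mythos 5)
Your proposal is correct and follows essentially the same route as the paper: both obtain the theorem as a direct corollary of Theorem~\ref{theorem:equilibrium strategy} and Lemma~\ref{lemma:dropout}, splitting on $\ax\le\p_1$ versus $\ax>\p_1$ and substituting the asymptotics of $\mmin_\g$, $\mmax_\g$, $\tau_\g$ and $\br_\g$ into the definitions of $\mmqgun$ and $\mpxgun$. The one point where you diverge is the selection rate of the overrepresented pure-strategy group when $\ax>\p_1$: the paper computes $\px_1(\br_1(\tild_2);\tild_2)\to1$ directly, which is legitimate because the proof of Lemma~\ref{lemma:br large S} establishes the stronger fact that the \emph{unnormalized} gap $\br_\g(\gamma\tildg)-\gamma\tildg\to+\infty$ (not merely the first-order equivalence $\br_\g(\gamma\tildg)\sim\gamma\tildg$ that you rightly note is insufficient), whereas you recover that rate from the exact budget identity $\pa\mpxaun+\pb\mpxbun=\ax$; both arguments are valid, and your shortcut has the mild advantage of not requiring any asymptotic control of that group's $\Phi$ term.
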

\begin{proof}
To prove $(i)$, we compute the average effort and the selection rate given the equilibrium mixed strategies defined in Theorem~\ref{theorem:equilibrium strategy}. We consider the case of $\ax\le\pa$ and the case of $\ax > \pa$ separately. 

If $\ax \le \pa$, then 
$\mmqaun = \mqamax\tau_\A + \mqamin(1-\tau_\A) = \ax/\pa\cdot\sqrt{2\s/\ca}(1+o(1))$, $\s\to\infty$ and  $\mmqbun = \br_\B(\tilda) \xrightarrow{\s\to\infty}0$, so $\mmqbun / \mmqaun \xrightarrow{\s\to\infty}0$.  Similarly, the expected selection rates for $\A$- and $\B$-candidates: 
\begin{align*}
    \mpxaun &= \pxamax \tau_\A + \pxamin(1-\tau_\A) \xrightarrow{\s\to\infty} \ax/\pa,\\
    \mpxbun &= \pxb(\tilda, \mqbbr(\tilda)) \xrightarrow{\s\to\infty}0,
\end{align*}
which implies that $\mpxbun / \mpxaun\xrightarrow{\s\to\infty}0$.

If $\ax > \pa$, then we, again, consider the strategies in the equilibrium:
\begin{align*}
    \mmqaun &= \br_\A(\tildb) = \sqrt{2\s/\cb}(1+o(1)),\;\s\to\infty,\\ 
    \mmqbun &= \mqbmax \tau_\B + \mqbmin (1-\tau_\B) = (\ax - \pa)/\pb\cdot \sqrt{2\s/\cb}(1+o(1)),\;\s\to\infty.
\end{align*}
Similarly, for the selection rates, we have: $\mpxaun = \pxa(\tildb, \mqabr(\tildb)) \xrightarrow{\s\to\infty} 1$, and 
$\mpxbun = \pxbmax\tau_\B + \pxbmin(1-\tau_\B) \xrightarrow{\s\to\infty}(\ax-\pa)/\pb$. 
The proof of $(ii)$ is identical to the proof of $(i)$ so we omit it.
\end{proof}

\section{Effects of the demographic parity mechanism on the selection}
\label{sec:fairness mechanism}

In Theorem~\ref{theorem:large S}, we show that the equilibrium in the game $\gun$ leads to underrepresentation of one of the groups of candidates: for the group-independent cost coefficient ($\ca=\cb$), the low-noise candidates are underrepresented; for the group-dependent cost coefficient ($\ca\ne\cb$), the cost-disadvantaged candidates are underrepresented. 

To reduce the inequality of representation, colleges (and employers) sometimes perform \emph{affirmative actions}. This can take the form of quotas for low-income or minority groups or any forms of promotions. Among those affirmative actions are ones that make sure that the selection rates for the groups are close,  meaning that the proportions of both groups in the selection should be close to proportions of the groups in the total population. The strongest condition among those is the \emph{demographic parity} (see \cite{barocas19}) which requires that selection rates for both populations must be equal: 
    $\mpx_{\A}=\mpx_{\B}$.
This implies, given the budget constraint $\mpx_{\A}\pa + \mpx_{\B}\pb=\ax$, that $\mpx_{\A}=\mpx_{\B}=\ax$.

\subsection{The game with the demographic parity mechanism}

The demographic parity mechanism removes the competition among the groups of candidates, $\A$ and $\B$, as from each group $\g$ a proportion $\ax\in(0,1)$ must be accepted.  Hence, the game with the demographic parity mechanism, that we denote by $\gdp$, can be represented as two independent games, $\gun_\A$ and $\gun_\B$, each defined for a single group, $\A$ or $\B$. Note that for this game, the selection thresholds will be different for the two groups: for a group $\g$, the decision-maker will select all candidates whose expected quality is greater or equal than $\til(\mup)=\F^{-1}_{\mup}(1-\ax)$, which corresponds to the $\ax$ best fraction of this group---it is the quantile of the distribution of expected qualities induced by $\mup$ and not the one induced by $\mvec$ as in the unconstrained case.

The equilibrium of the game $\gdp$ where the Bayesian decision-maker has a demographic parity constraint is defined as follows.
\begin{definition}[Nash equilibrium of the game $\gdp$]
    A pair of effort distributions $\mvec = (\mqa,  \mqb)$ is an equilibrium of the game with the demographic parity constraint $\gdp$ if for both groups $\g\in\{\A,\B\}$:
    \begin{align*}
        \mup \in \BR_\g(\til(\mup)),\text{ where $\til(\mup)=\F^{-1}_{\mup}(1-\ax)$.}
    \end{align*}
\end{definition}
Mimicking the unconstrained case, we denote by $\mvec^\mathrm{dp}=(\mqadp, \mqbdp)$ the equilibrium of this game, and by $\tilgdp=F_{\mqpdp}^{-1}(1-\ax)$ the group-dependent equilibrium selection threshold. The superscript ``dp'' emphasizes that the decision-maker is \emph{demographic parity}-constrained.

Since the game $\gdp$ can be represented as two separate games, $\gun_\A$ and $\gun_\B$, that has a unique Nash equilibrium according to Theorem~\ref{theorem:unique equilibrium}, the game $\gdp$ also has a unique Nash equilibrium.

\subsection{Efforts induced by the demographic parity mechanism}

In the previous section, we showed that, for large rewards $S$, it is possible that only one of the two groups makes a positive effort while the other group considers that the game is not worth playing because of a too unfair competition.  The situation is radically different with demographic parity mechanism as each candidate competes with similar candidates. As we show below, the demographic parity mechanism pushes the previously underrepresented group to make more effort than before. Moreover, it can also push the previously overrepresented group to make more effort.

In the first theorem below,  we characterize the equilibrium strategy of the game with the demographic parity mechanism  $\gdp$. This result is a direct corollary of Theorem~\ref{theorem:equilibrium strategy} as we consider two separated games with an unconstrained decision-maker.
\begin{theorem}[Equilibrium strategy with the demographic parity mechanism]
    \label{theorem:dp strategy}
    There exists $\s_0$ such that for all $\s \ge \s_0$, the equilibrium of the game with the demographic parity mechanism  $\gdp$ is a pair of distributions $\mvec^\mathrm{dp} =(\mqadp, \mqbdp)$ where for each group $\g\in\{\A,\B\}$, $\mqpdp$ consists in playing $\mmax_\g$ with probability $\tau_\g$ and $\mmin_\g$ with probability $1-\tau_\g$, where $\lims \tau_\g = \ax$.
\end{theorem}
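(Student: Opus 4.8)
The plan is to leverage the decomposition of $\gdp$ into the two independent single-group games $\gun_\A$ and $\gun_\B$ that was recorded just above: under demographic parity the two groups do not compete, since from each group $\g$ exactly a fraction $\ax$ of its \emph{own} members must be selected. Consequently the equilibrium of $\gdp$ is the pair formed by the equilibria of these two separate games, and it is unique because each separate game has a unique equilibrium by Theorem~\ref{theorem:unique equilibrium}. It therefore suffices to characterize, for a fixed group $\g$, the equilibrium of the single-population game in which the decision-maker selects the top $\ax$-fraction of that one population, and then to assemble the two characterizations into the pair $\mvec^\mathrm{dp}=(\mqadp,\mqbdp)$.

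For the single-population game on group $\g$ I would invoke Theorem~\ref{theorem:equilibrium strategy} with the degenerate instantiation $\g_1=\g$, $p_1=1$ and $p_2=0$ (a phantom second population of measure zero, whose cost/noise ordering can be fixed arbitrarily). Since $\ax\in(0,1)$ we have $\ax<1=p_1$, which is exactly case $(i)$ of that theorem. Case $(i)$ then delivers the claim directly: for $\s$ large enough the $\g$-candidates play $\mmax_\g$ with probability $\tau_\g$ and $\mmin_\g$ with probability $1-\tau_\g$, with $\lims\tau_\g=\ax/p_1=\ax$, which is precisely the distribution $\mqpdp$. Taking $\s_0=\max(\s_0^\A,\s_0^\B)$, where $\s_0^\g$ is the reward threshold returned by Theorem~\ref{theorem:equilibrium strategy} for the game on group $\g$, then yields a single threshold valid for both groups simultaneously.

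The only point requiring care --- and hence the main (though minor) obstacle --- is justifying that Theorem~\ref{theorem:equilibrium strategy}, stated for two competing populations, legitimately specializes to $p_2=0$. The cleanest route is to revisit its proof sketch and observe that the argument uses only the budget constraint together with the best-response structure of each group in isolation. With $p_2=0$ the budget constraint $\ax = p_1\big(\tau_1\pxmax_1+(1-\tau_1)\pxmin_1\big)+p_2\,\px_2(\tild_1,\mqbr_2(\tild_1))$ loses its second term and reduces to $\ax = \tau_\g\pxmax_\g+(1-\tau_\g)\pxmin_\g$; combined with $\pxmax_\g\xrightarrow{\s\to\infty}1$ and $\pxmin_\g\xrightarrow{\s\to\infty}0$ (established in that same proof), this gives $\tau_\g\to\ax$. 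Since no step of the argument relied on the presence of a second, competing group, the specialization is immediate and the theorem follows.
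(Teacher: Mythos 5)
Your proposal is correct and follows essentially the same route as the paper: the authors likewise reduce $\gdp$ to two independent single-population games and invoke case $(i)$ of Theorem~\ref{theorem:equilibrium strategy} with a single group of mass $1$, obtaining $\lims\tau_\g=\ax$. Your extra remark justifying the degenerate specialization $p_2=0$ (by checking that the budget-constraint argument never uses the second group) is a slightly more careful version of the paper's one-line appeal to ``a single group of mass $1$,'' but it is the same argument in substance.
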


\begin{proof}
    According to the demographic parity mechanism, the selection rate per each group $\g$ must be equal to $\mpx_\g=\ax$.
    Hence, this theorem can be seen as a special case of Theorem~\ref{theorem:equilibrium strategy} but with a single group of mass $1$ out of which we need to select the best $\ax\in(0,1)$. By applying directly the result of Theorem~\ref{theorem:equilibrium strategy}, we show that the proposed strategy is the equilibrium strategy.
\end{proof}

In Corollary~\ref{corollary:dp efforts} given below, we compare the efforts made by two groups at equilibrium. We show that, if the cost coefficient is group-independent, then the demographic parity mechanism equalizes the effort as $\s$ grows (together with the selection rates as $\mpxadp=\mpxbdp$ by definition). For group-dependent cost coefficient, the cost-disadvantaged $\A$-candidates make lower average effort compared to that of $\B$-candidates  yet the average effort ratio is bounded by $\sqrt{\ca/\cb}$. This is in contrast to Section~\ref{ssec:disrimination large s} where we show that the average effort ratio can be unbounded in the case of the unconstrained decision-maker.
\begin{corollary}[Equilibrium effort ratio in $\gdp$]
    \label{corollary:dp efforts}
    Let $\mqadp$ and $\mqbdp$ be the equilibrium effort distributions in the game with the demographic parity mechanism $\gdp$. The average efforts of both populations $\g\in\{\A,\B\}$ satisfy:
    \begin{align*}
        \lims \frac{\mmqadp} {\mmqbdp} = \sqrt{\frac{\cb}{\ca}}.
    \end{align*}
    In particular, if $\ca=\cb$, then the average efforts of both populations grow at equal rates. If $\ca \ne \cb$, then the cost-disadvantaged candidates make a lower average effort compared to that of the cost-advantaged candidates. 
\end{corollary}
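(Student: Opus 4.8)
The plan is to lean directly on Theorem~\ref{theorem:dp strategy}, which already identifies the equilibrium strategy of the demographic-parity game, and to feed into it the asymptotics of the dropout threshold (Lemma~\ref{lemma:dropout}) and of the best responses at the dropout (Lemma~\ref{lemma:br large S}). Since $\gdp$ splits into two independent single-group games, for $\s$ large enough each group $\g$ faces its own threshold, equal to its dropout threshold $\tildg(\s)$, and randomizes between $\mmax_\g(\tildg)$ with probability $\tau_\g$ and $\mmin_\g(\tildg)$ with probability $1-\tau_\g$, where $\lims\tau_\g=\ax$.

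First I would write the average effort from its definition together with this equilibrium strategy, namely
\begin{align*}
  \mmqgdp = \tau_\g\,\mmax_\g(\tildg) + (1-\tau_\g)\,\mmin_\g(\tildg).
\end{align*}
Next I would substitute the two asymptotic inputs: Lemma~\ref{lemma:br large S}(i) gives $\lims\mmin_\g(\tildg)=0$ and $\mmax_\g(\tildg)=\tildg(1+o(1))$, while Lemma~\ref{lemma:dropout}(i) gives $\tildg(\s)=\sqrt{2\s/\cp}(1+o(1))$. The $\mmin$-branch then contributes only $o(1)$, whereas the $\mmax$-branch contributes $\ax\sqrt{2\s/\cp}(1+o(1))$, so that $\mmqgdp=\ax\sqrt{2\s/\cp}(1+o(1))$.

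Finally I would take the ratio for $\g=\A$ over $\g=\B$: the common factor $\ax\sqrt{2\s}$ cancels and one is left with $\mmqadp/\mmqbdp=\sqrt{\cb/\ca}\,(1+o(1))$, whose limit is the claimed $\sqrt{\cb/\ca}$. The qualitative consequences follow immediately: $\ca=\cb$ yields ratio $1$ (equal growth of the two average efforts), while $\ca>\cb$ yields ratio strictly below $1$, so the cost-disadvantaged group exerts the lower average effort.

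The only delicate point—and the step I would be most careful about—is that $\tau_\g$ converges to a constant while $\mmax_\g(\tildg)$ diverges like $\sqrt{\s}$, so the product $\tau_\g\,\mmax_\g(\tildg)$ must be treated as $(\ax+o(1))\cdot\sqrt{2\s/\cp}(1+o(1))$ rather than by taking limits factor-by-factor, and one must check that the $\mmin$-branch genuinely vanishes relative to the growing $\mmax$-branch. Everything else is a routine cancellation, which is why the result is essentially a corollary of Theorem~\ref{theorem:dp strategy}.
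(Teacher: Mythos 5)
Your proposal is correct and follows essentially the same route as the paper's own proof: both express $\mmqgdp$ as $\tau_\g\,\mmax_\g + (1-\tau_\g)\,\mmin_\g$ via Theorem~\ref{theorem:dp strategy} and then take the ratio using the asymptotics $\mmax_\g(\tildg)\sim\sqrt{2\s/\cp}$ and $\mmin_\g(\tildg)=o(\sqrt{\s})$. The paper's version is simply more terse; your added care about the joint asymptotics of $\tau_\g$ and $\mmax_\g$ is a sound elaboration of the same argument.
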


\begin{proof}
Using the equilibrium strategies of the game  with the demographic parity mechanism $\gdp$ from Theorem~\ref{theorem:dp strategy}, we show that:
$
    \lim_{\s\to\infty} \frac{\mmqadp}{\mmqbdp} = \lim_{\s\to\infty} \frac{\tau_\A \mqamax +(1-\tau_\A)\mqamin}{\tau_\B \mqbmax +(1-\tau_\B)\mqbmin} = \sqrt{\frac{\cb}{\ca}}.
$
\end{proof}


By reducing the competition between groups, affirmative action policies are often criticized because they might encourage individuals to make less effort, which reduces the overall quality of the selected candidates. We show below that, in fact, the demographic parity mechanism always encourages the previously underrepresented group to make a larger effort than in the unconstrained case. For the previously overrepresented group, the situation varies: in many situations, the candidates from the overrepresented group will make a lower effort than in the unconstrained case, but when $\sqrt{\cb/\ca} < \ax$ and $\ax>\pb$, we show that they make a higher effort compared to that of the unconstrained case.  Note that the last result may seem rather counterintuitive as demographic parity reduces the competition between groups.  Later, in Section~\ref{ssec:quality_dp}, we  show the implications of this result on the average quality of the selected candidates. 
\begin{corollary}[Equilibrium effort ratio in $\gun$ vs. $\gdp$]
    \label{theorem:game comparison}
    Let $\mqaun$ and $\mqbun$ be the equilibrium effort distributions in the unconstrained game $\gun$. Similarly, let $\mqadp$ and $\mqbdp$ be the equilibrium effort distributions in the game with the demographic parity constraint $\gdp$. 
    \begin{enumerate}[(i)]
        \item If $\ca=\cb$ and $\sxa^2>\sxb^2$, then
        \begin{align*}
            \lims \frac{\mmqaun}{\mmqadp} = \begin{cases}
                1/\pa & \text{if }\ax \le \pa,\\
                1/\ax & \text{if }\ax > \pa,
            \end{cases}
            \text{ and }
            \lims \frac{\mmqbun}{\mmqbdp} = \begin{cases}
                0 & \text{if }\ax \le \pa,\\
                \frac{\ax-\pa}{\ax-\ax\pa}& \text{if }\ax > \pa.
            \end{cases}
        \end{align*}
        \item If $\ca>\cb$, then
        \begin{align*}
            \lims \frac{\mmqaun}{\mmqadp} = \begin{cases}
                0 & \text{if }\ax \le \pb,\\
                \frac{\ax-\pb}{\ax-\ax\pb} & \text{if }\ax > \pb,
            \end{cases}
            \text{ and }
            \lims \frac{\mmqbun}{\mmqbdp} = \begin{cases}
                1/\pb & \text{if }\ax \le \pb,\\
                \sqrt{\frac{\cb}{\ca}}\frac 1 \ax& \text{if }\ax > \pb.
            \end{cases}
        \end{align*}
    \end{enumerate}
\end{corollary}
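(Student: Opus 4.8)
The plan is to reduce everything to ratios of the asymptotic average efforts that have already been pinned down in the preceding results, so that the corollary follows by elementary algebra once the right expressions are substituted. The key inputs are as follows. From Theorem~\ref{theorem:dp strategy} together with Lemma~\ref{lemma:br large S}(i) and Lemma~\ref{lemma:dropout}(i), each group $\g$ in the demographic-parity game plays $\mmax_\g$ with probability $\tau_\g\to\ax$ and $\mmin_\g$ with probability $1-\tau_\g$; since $\mmax_\g\sim\tildg\sim\sqrt{2\s/\cp}$ while $\mmin_\g\to 0$, this gives $\mmqgdp\sim\ax\sqrt{2\s/\cp}$. From the proof of Theorem~\ref{theorem:large S} we already have the matching asymptotics for $\mmqgun$ in every subcase. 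So the whole argument amounts to dividing these expressions, being careful about which group's dropout threshold governs each unconstrained effort.

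First I would treat case (i), where $\ca=\cb$ and $\A$ is the high-dropout group. For $\ax\le\pa$, the unconstrained $\A$-candidates randomize with $\tau_\A\to\ax/\pa$, giving $\mmqaun\sim(\ax/\pa)\sqrt{2\s/\ca}$, while the $\B$-candidates play $\br_\B(\tilda)\to 0$, so $\mmqbun\to 0$; dividing by $\mmqadp\sim\ax\sqrt{2\s/\ca}$ and $\mmqbdp\sim\ax\sqrt{2\s/\cb}$ yields the claimed limits $1/\pa$ and $0$. For $\ax>\pa$ the roles partially swap: the $\A$-candidates play $\br_\A(\tildb)\sim\tildb\sim\sqrt{2\s/\cb}$ and the $\B$-candidates randomize with $\tau_\B\to(\ax-\pa)/\pb$; using $\ca=\cb$ and $\pb=1-\pa$, the two ratios simplify to $1/\ax$ and $(\ax-\pa)/(\ax-\ax\pa)$.

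Case (ii), with $\ca>\cb$, is identical in structure but now $\B$ is the high-dropout group, so the comparison proportion becomes $\pb$ rather than $\pa$, and the governing dropout thresholds are $\tildb$ (for $\ax\le\pb$) and $\tilda$ (for $\ax>\pb$). The only genuinely new feature, and the one place to be careful, is the subcase $\ax>\pb$ for the $\B$-ratio: in the unconstrained game the $\B$-candidates best-respond to the threshold $\tilda$ set by $\A$'s dropout, so $\mmqbun\sim\br_\B(\tilda)\sim\tilda\sim\sqrt{2\s/\ca}$, whereas in the demographic-parity game they respond to their own dropout $\tildb$, so $\mmqbdp\sim\ax\sqrt{2\s/\cb}$; the two cost coefficients therefore do \emph{not} cancel and the ratio is $\sqrt{\cb/\ca}\,(1/\ax)$. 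Thus the main (mild) obstacle is purely bookkeeping: tracking which group's dropout threshold, and hence which cost coefficient via Lemma~\ref{lemma:dropout}(i), drives each average effort in the two games, since a group's effort in $\gun$ is set by the \emph{other} group's threshold when the former is in the pure following regime, while in $\gdp$ every group is always set by its own threshold. All remaining cases follow by the same substitution and cancellation.
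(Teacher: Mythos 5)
Your proposal is correct and follows essentially the same route as the paper's proof: substitute the asymptotic equilibrium efforts from Theorem~\ref{theorem:equilibrium strategy} (as computed in the proof of Theorem~\ref{theorem:large S}) and Theorem~\ref{theorem:dp strategy}, then take ratios, using Lemma~\ref{lemma:dropout}(i) for the $\sqrt{2\s/\cp}$ scaling of the dropout thresholds. You also correctly isolate the one delicate subcase ($\ca>\cb$, $\ax>\pb$), where the unconstrained $\B$-effort tracks $\tilda\sim\sqrt{2\s/\ca}$ while the constrained one tracks $\ax\sqrt{2\s/\cb}$, producing the non-cancelling factor $\sqrt{\cb/\ca}$, exactly as in the paper.
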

\begin{proof}[Proof Sketch]
    Similarly, as in the proof of Corollary~\ref{corollary:dp efforts}, we calculate the limits  by using the equilibrium strategies found in Theorem~\ref{theorem:equilibrium strategy} and in Theorem~\ref{theorem:dp strategy}. The full proof is in Appendix~\ref{proof:game comparison}.
\end{proof}

\subsection{Selection quality with and without the demographic parity mechanism}
\label{ssec:quality_dp}

We now show the implication of the previous result on how the demographic parity mechanism affects the selection quality. In a non-strategic setting (see e.g., \cite{emelianov22}), the unconstrained decision maker is optimal in expectation. We show here that this no longer holds in the strategic setting: there exist scenarios under which \emph{the Bayesian decision-maker is not optimal} in terms of the expected quality of selection. In such settings, the demographic parity constraint leads to a more qualified cohort. This counterintuitive phenomenon is due to the fact that demographic parity induces less competition between groups but a more fair competition within each group (compared to the unconstrained case). 




We denote by $\rev$ and by $\revdp$ the expected quality at equilibrium of the selected candidates for the unconstrained and the demographic parity constrained games.  In the theorem below, we characterize the ratio of the equilibrium cohort qualities by the unconstrained decision-maker from Section~\ref{sec:characterization}, and the demographic parity constrained decision-maker that we study in this section.  For group-independent cost coefficient ($\ca = \cb$), we show that the quality ratio $\rev/\revdp$ tends to $1$ in the limit of large $\s$. For group-dependent cost coefficient ($\ca\neq \cb$), we show that the quality ratio can be smaller than one---the Bayesian decision-maker is not optimal if the candidates are strategic, and the demographic parity mechanism can lead to a better-qualified cohort. 
\begin{theorem}[Selection quality ratio for $\gun$ and $\gdp$]
    \label{theorem:revenue ratio}
    Let $\rev$ and $\revdp$ be the expected qualities of selection at equilibrium of the game $\gun$ and of the game $\gdp$, respectively.
    \begin{enumerate}[(i)]
        \item If $\ca=\cb$, then the ratio of the expected quality given by the unconstrained and the demographic parity constrained decision-makers tends to one with $\s$:
        $
            \lims \frac{\rev}{\revdp} = 1.
        $
        \item If $\ca >  \cb$, then the ratio of the expected quality given by the unconstrained and the demographic parity constrained decision-makers can be smaller than one. Formally, for $c=\sqrt{\cb/\ca}$:
        \[
            \lims \frac{\rev}{\revdp}  = 
            \begin{cases}
            \frac{1}{c\pa + \pb} > 1 & \text{if } \pb \ge \ax,\\
            \frac{c}{c\pa + \pb} < 1 & \text{if } \pb < \ax.
            \end{cases}
        \]
    \end{enumerate}
\end{theorem}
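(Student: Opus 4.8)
The plan is to compute the leading-order growth in $\s$ of the two selection qualities $\rev$ and $\revdp$ by substituting the equilibrium strategies of Theorem~\ref{theorem:equilibrium strategy} and Theorem~\ref{theorem:dp strategy} into the definition of the expected selection quality, and then to form the ratio. Since the selected mass equals $\ax$ in both games, it is immaterial whether $\rev$ denotes the total or the average quality of the selected cohort (the normalization cancels in the ratio), so I work with the average. The key structural fact I would exploit is that, at equilibrium and for large $\s$, every \emph{selected} candidate exerts an effort asymptotically equal to the selection threshold governing it — the common threshold $\tilun$ in $\gun$, or the group-specific threshold ($\sim\tildg$) in $\gdp$ — which is of order $\sqrt{\s}$. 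Candidates playing the vanishing effort $\mmin_\g$ or a dropout best response are selected with probability $\to 0$ and do not contribute. Hence the cohort's expected quality is, to leading order, the selected-mass-weighted average of these governing thresholds, giving the clean identities $\rev\sim\tilun$ and $\revdp\sim\pa\tilda+\pb\tildb$.

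First I would isolate the per-candidate estimate and show all sub-leading effects vanish. For a group-$\g$ candidate playing effort $m$ and selected (i.e. $\wt\ge\tilun$), the conditional expected quality is $\E(\wt\mid\wt\ge\tilun)=m+\stp\,\phi(z)/(1-\Phi(z))$ with $z=(\tilun-m)/\stp$; because the selection probability of every $\mmax$-type and every pure-strategy candidate tends to $1$, we have $z\to-\infty$, so this inverse-Mills correction is $O(1)$ (in fact $o(1)$) and negligible against $m=\Theta(\sqrt{\s})$. I would also note that the $\mmin_\g$-candidates carry selected mass $\pg(1-\tau_\g)\pxmin_\g\to 0$ (since $\pxmin_\g\to 0$ against the fixed budget $\ax$), so their quality contribution is $o(\sqrt{\s})$. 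This reduction, together with $\tildg(\s)=\sqrt{2\s/\cp}\,(1+o(1))$ from Lemma~\ref{lemma:dropout} and $\mmax_\g(\tildg)\sim\tildg$, $\br_\g(\gamma\tildg)\sim\gamma\tildg$ from Lemma~\ref{lemma:br large S}, justifies replacing cohort quality by the governing thresholds.

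The remainder is bookkeeping per case. For $\gdp$ the groups decouple, so by Theorem~\ref{theorem:dp strategy} a fraction $\to\ax$ of each group is selected at effort $\sim\tildg$, whence $\revdp\sim\pa\sqrt{2\s/\ca}+\pb\sqrt{2\s/\cb}=\sqrt{2\s/\cb}\,(c\pa+\pb)$. For $\gun$ I use Theorem~\ref{theorem:equilibrium strategy} with $\g_1$ the higher-dropout group, so $\rev\sim\tilun$. If $\ca=\cb$ then $\tilda\sim\tildb\sim\sqrt{2\s/\ca}$, hence $\tilun\sim\sqrt{2\s/\ca}$ in either subcase and, since $c=1$, $\rev/\revdp\to 1$, which is $(i)$. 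If $\ca>\cb$ then $\g_1=\B$, and $\tilun=\tildb\sim\sqrt{2\s/\cb}$ when $\ax\le\pb$ whereas $\tilun=\tilda\sim\sqrt{2\s/\ca}$ when $\ax>\pb$. Taking the ratio gives $\rev/\revdp\to 1/(c\pa+\pb)$ for $\pb\ge\ax$ and $\rev/\revdp\to c/(c\pa+\pb)$ for $\pb<\ax$, which is $(ii)$; the inequalities $1/(c\pa+\pb)>1$ and $c/(c\pa+\pb)<1$ both follow from $c<1$ together with $\pa+\pb=1$.

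The only step requiring genuine care — and the main obstacle — is the reduction of the second paragraph: establishing that the conditioning/noise corrections and the vanishing-mass $\mmin_\g$ contributions are uniformly $o(\sqrt{\s})$, so that cohort quality may legitimately be replaced by the governing thresholds. Once that is in place, the distinguishing constant $c=\sqrt{\cb/\ca}$ enters only through the limiting ratio $\tilda/\tildb\to\sqrt{\cb/\ca}=c$ of the two dropout thresholds, and everything else is elementary.
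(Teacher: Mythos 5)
Your proposal is correct and follows essentially the same route as the paper: substitute the equilibrium strategies of Theorems~\ref{theorem:equilibrium strategy} and~\ref{theorem:dp strategy} into the truncated-normal expression for the cohort quality, show that the contribution of each selected sub-population is (mass)$\times$(governing dropout threshold)$\cdot(1+o(1))$ while the $\mmin_\g$/dropout contributions vanish, and then take the ratio using $\tildg\sim\sqrt{2\s/\cp}$. The only cosmetic difference is that you phrase the key estimate via the inverse Mills ratio on the normalized (average) quality, whereas the paper works with the unnormalized quantity $\E(\w_\g[\wt_\g>\til_\g])=\m_\g\Phi(\cdot)+\stp\phi(\cdot)$ directly; the two are equivalent.
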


\begin{proof}[Proof Sketch]
    First, we prove that as $\s\to\infty$, the expected quality of the selected cohort grows at equal rate with the expected effort: $\E(\wg\cdot[\wtg \ge \til]) \sim \bar\mq_\g$. 
    Hence, using the equilibrium strategies from  Theorem~\ref{theorem:equilibrium strategy} and  Theorem~\ref{theorem:dp strategy}, we can estimate the ratio $\rev/\revdp$ in the limit of $\s\to\infty$. The complete proof is given in Appendix~\ref{proof: quality ratio}.
\end{proof}

We emphasize that the condition under which the demographic parity mechanism improves the average selection equality is when the selection rate $\alpha$ is larger than the size of the cost-advantaged group. The improvement of selection quality due to the demographic parity mechanism is explained by the fact that, without the demographic parity constraint, the advantaged minority has no incentives to make a large effort because the competition includes a lot of cost-disadvantaged candidates. Once the competition is among candidates of each separate populations, the cost-advantaged candidates have to compete with other cost-advantaged candidates, so they have to make a larger effort to be selected. 

The demographic parity can decrease the average selection quality when $\ax\le\pb$ and when both groups have different cost coefficients $\cc_\g$. In this case, if the low-noise $\B$-candidates  are the majority, then the ratio of quality $\rev/\revdp$ cannot be larger than $2$ as $\s$ goes to infinity, regardless of the cost coefficients.

\section{Complementary results}
\label{sec:complements}

In the previous sections, we studied the properties of the Nash equilibrium of the game for large rewards $\s$ and showed that they can lead to discriminations. In this section, we complement this theoretical analysis by studying two (essentially independent) problems: First, does the population converge to the equilibrium? Second, what happens when the reward $\s$ is not large?

\subsection{Convergence to the Nash equilibrium}
\label{sec:convergence}
To answer the first question, we perform a series of numerical experiments in which the decisions are made repeatedly. At a given time $t$, the candidates consider past data to make a strategic decision. This could represent, for instance, the case of college admission where candidates consider the distribution of grades from previous years; in this example, each decision epoch is a different year. 

We study two population dynamics: \emph{best response} and \emph{fictitious play}.
\begin{itemize}
  \item For the \emph{best response} dynamics, at each of the discrete times $t\in\{1,2,\dots,T\}$, the candidates observe the strategy played at the previous time step $\mvec^{(t-1)}$ and play a best response to it:
  \begin{align*}
    \m_\g^{(t)} \in \br_\g(\til(\mvec^{(t-1)})).
  \end{align*}
  \item For the \emph{fictitious play} dynamics, at each of the discrete times $t=1,2,\dots,T$, the candidates observe the whole history of plays and assume that the distribution of efforts is the empirical distribution of effort from time $1$ to $T$. Candidates then play a best response to it:
  \begin{align*}
    \m_\g^{(t)} \in \br_\g(\til(\hat\mvec^{(t)})),
  \end{align*}
  where $\til(\hat\mvec^{(t)}) = \F^{-1}_{\hat\mvec^{(t)}}(1-\ax)$ and $\hat\mvec^{(t)}=\sum_{s=1}^{t-1}\frac{1}{t-1} \mvec^{(s)}$.  
\end{itemize}

We numerically evaluate these two policies and report the results in Fig.~\ref{fig:cycles}.  For the best response dynamics, we observe that $\bm{\m}^{(t)}=(\m_\A^{(t)}, \m_\B^{(t)})$ converges to a limit cycle for any starting point.  This is because when $S$ is large,\footnote{For $\s < \frac{1}{2}\cp\stp^2/\phi(1)$ we can show that the function $\tiln$ is a contraction mapping, so any trajectory of the best response dynamics converges to the Nash equilibrium.} the best response map is not continuous (recall Fig.~\ref{fig:large S}). The period of the limit cycle increases with the reward size $\s$ but the behavior is similar for all $\s$: starting from $(0,0)$, the candidates from both populations increase the effort as time increases. Then, the competition becomes too high and one of the populations drops out, i.e., make almost zero effort. After, the competition is only among the candidates of a single population until it becomes too difficult and all candidates drop out and return to the initial state. The cycle ends here, and the new cycle starts.  In Fig.~\ref{fig:br 10} and \ref{fig:br 100}, we also plot the average trajectory $\bar{\bm{\m}}^{(t)} = (\bar\m_\A^{(t)}, \bar\m_\B^{(t)})$, where $\bar\m_\g^{(t)}=\frac{1}{t-1}\sum_{s=1}^{t-1}\m_\g^{(s)}$. We observe that the average over the trajectory seems to converge, yet the average effort over the trajectory is significantly larger than that of the average equilibrium effort for both groups. 
\begin{figure}
  \centering
  \begin{subfigure}{.24\textwidth}
    \centering
    \begin{tikzpicture}[scale=0.22/.3]
      \node (img)  {\includegraphics[width=0.9\linewidth]{../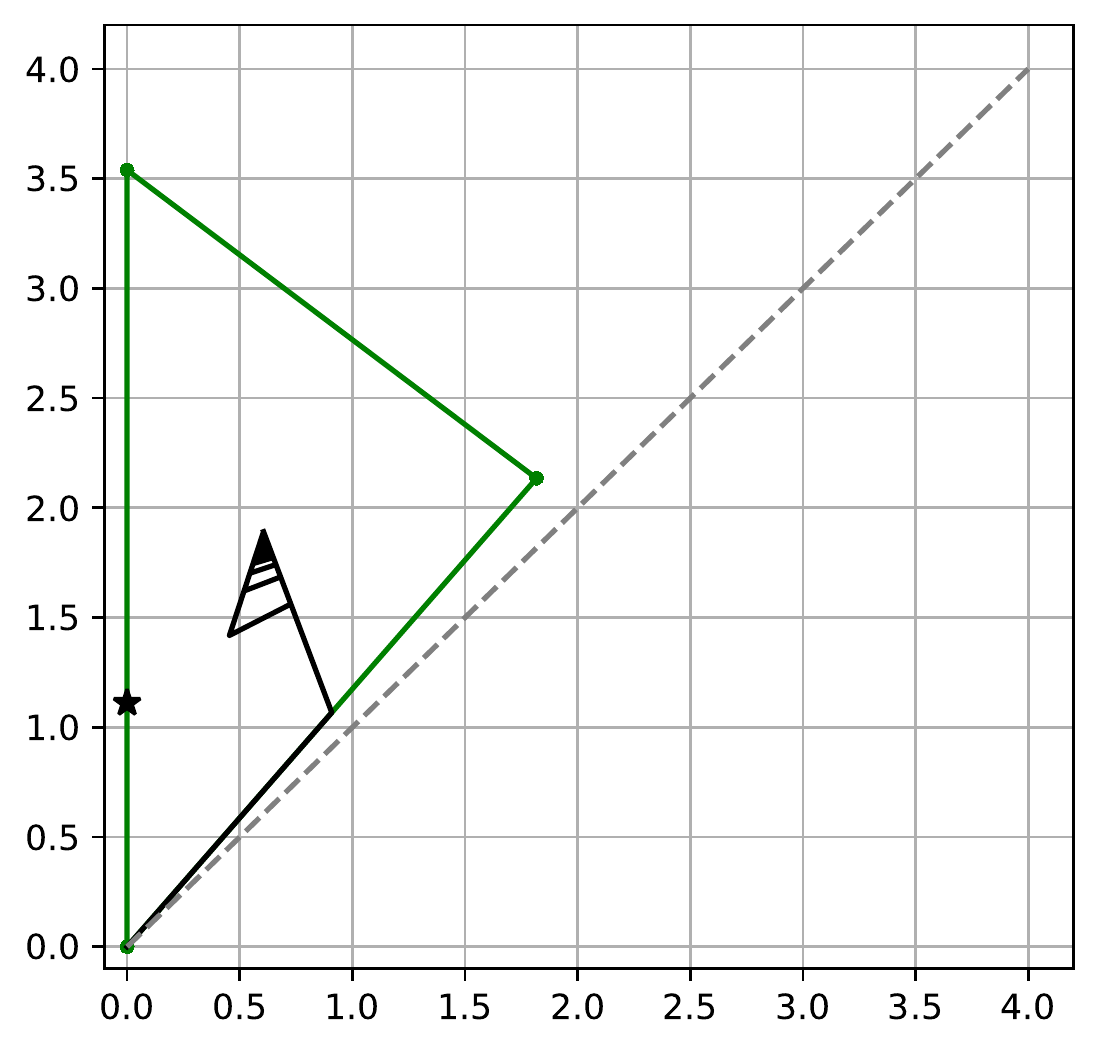}};
      \node[below=of img, node distance=0cm, yshift=3.5em,font=\color{black}] {\tiny$\m_{\A}$};
      \node[left=of img, node distance=0cm, rotate=90, anchor=center,yshift=-3em,font=\color{black}] {\tiny$\m_{\B}$};
      \node (mcycl) at (-0.6, 0.2) {\tiny $\bar\m^{(t)}$};
      \node (mt) at (-0.1, 0.8) {\tiny $\m^{(t)}$};
      \node (ne) at (-1.1, -0.6) {\tiny $\mmq^\mathrm{un}$};
     \end{tikzpicture}
    \caption{BR, $\s=10$}
    \label{fig:br 10}
  \end{subfigure}
  \begin{subfigure}{.24\textwidth}
    \centering
    \begin{tikzpicture}[scale=0.22/.3]
      \node (img)  {\includegraphics[width=0.9\linewidth]{../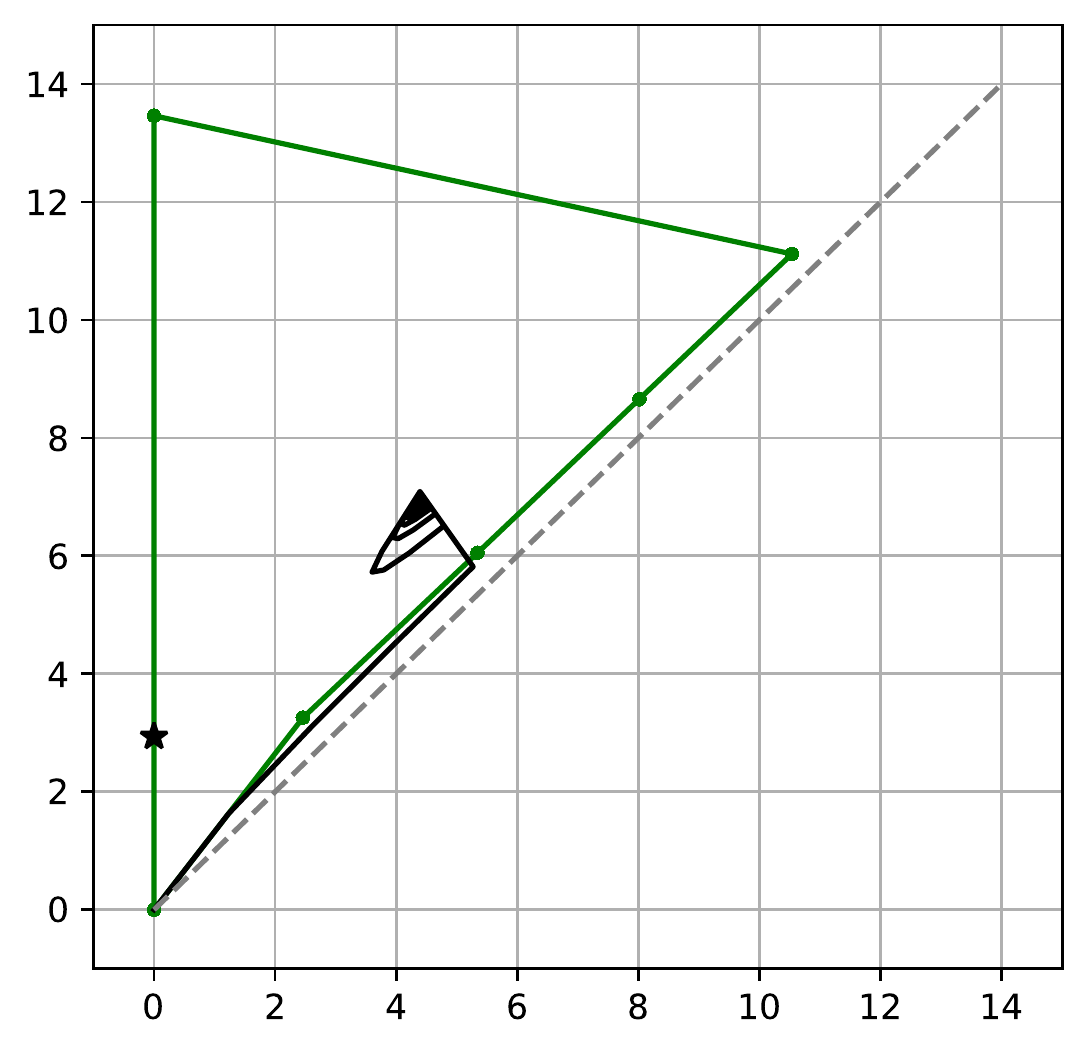}};
      \node[below=of img, node distance=0cm, yshift=3.5em,font=\color{black}] {\tiny$\m_{\A}$};
      \node[left=of img, node distance=0cm, rotate=90, anchor=center,yshift=-3em,font=\color{black}] {\tiny$\m_{\B}$};
      \node (mcycl) at (-0.4, 0.3) {\tiny $\bar\m^{(t)}$};
      \node (mt) at (-0.1, 1.4) {\tiny $\m^{(t)}$};
      \node (ne) at (-1.1, -0.6) {\tiny $\mmq^\mathrm{un}$};
     \end{tikzpicture}
    \caption{BR, $\s=100$}
    \label{fig:br 100}
  \end{subfigure}
  \begin{subfigure}{.24\textwidth}
    \centering
    \begin{tikzpicture}[scale=0.22/.3]
      \node (img)  {\includegraphics[width=0.9\linewidth]{../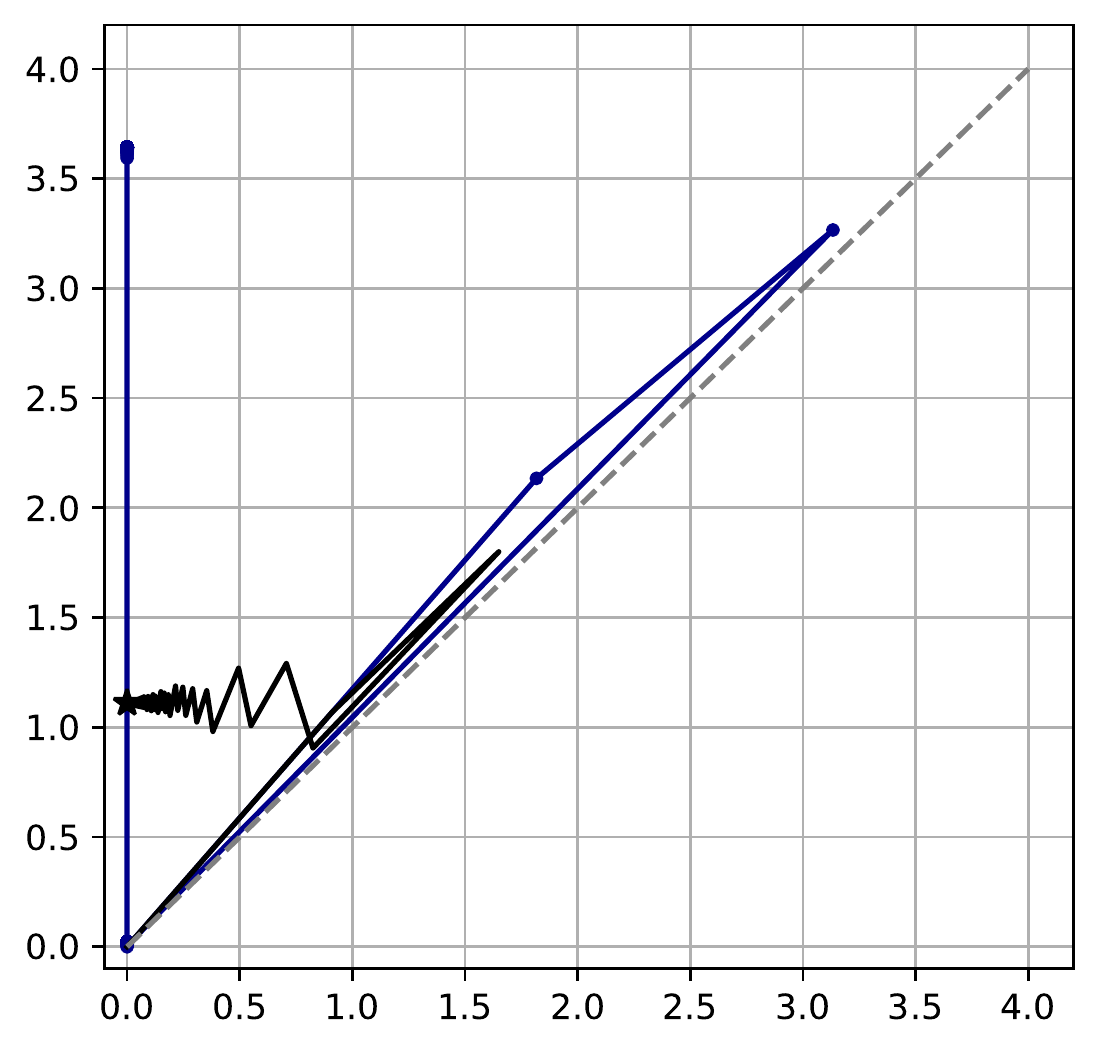}};
      \node[below=of img, node distance=0cm, yshift=3.5em,font=\color{black}] {\tiny$\m_{\A}$};
      \node[left=of img, node distance=0cm, rotate=90, anchor=center,yshift=-3em,font=\color{black}] {\tiny$\m_{\B}$};
      \node (mcycl) at (-0.8, -0.2) {\tiny $\bar\m^{(t)}$};
      \node (mt) at (0.3, 1) {\tiny $\m^{(t)}$};
      \node (ne) at (-1.2, -0.9) {\tiny $\bar\mq^\mathrm{un}$};
     \end{tikzpicture}
    \caption{FP, $\s=10$}
    \label{fig:fp 10}
  \end{subfigure}
  \begin{subfigure}{.24\textwidth}
    \centering
    \begin{tikzpicture}[scale=0.22/.3]
      \node (img)  {\includegraphics[width=0.9\linewidth]{../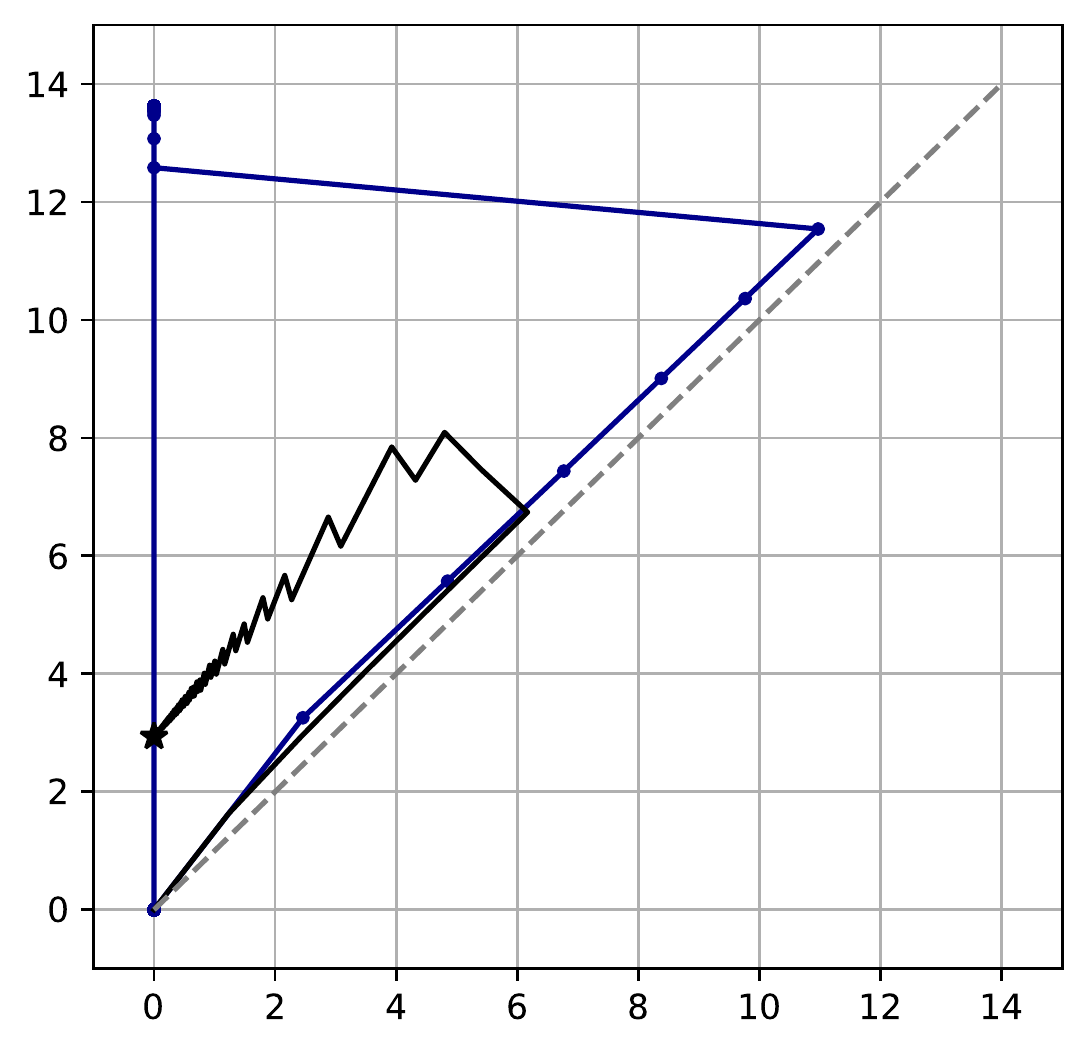}};
      \node[below=of img, node distance=0cm, yshift=3.5em,font=\color{black}] {\tiny$\m_{\A}$};
      \node[left=of img, node distance=0cm, rotate=90, anchor=center,yshift=-3em,font=\color{black}] {\tiny$\m_{\B}$};
      \node (mcycl) at (-0.4, 0.5) {\tiny $\bar\m^{(t)}$};
      \node (mt) at (-0.2, 1.4) {\tiny $\m^{(t)}$};
      \node (ne) at (-1, -0.7) {\tiny $\mmq^\mathrm{un}$};
     \end{tikzpicture}
    \caption{FP, $\s=100$}
    \label{fig:fp 100}
  \end{subfigure}
  \caption{Best response (BR) and fictitious play (FP) dynamics for different rewards $\s$. The parameters of simulations are $T=500$, $\ca=1.5$, $\cb=1$, $\pa=\pb=0.5$, $\ax=0.1$, $\sta=0.6$ and $\stb=1$.}
  \label{fig:cycles}
  \end{figure}

The case of fictitious play dynamics is different, and it is depicted in Fig.~\ref{fig:fp 10} and \ref{fig:fp 100}. In this case, the empirical distribution of efforts \emph{does} converge to the Nash equilibrium.  This is illustrated on the figure by the fact that the empirical average of effort converges to the average value of effort of the Nash equilibrium: $\bar{\m}_\g^{(t)} \xrightarrow{t\to\infty}\mmqgun$ for both groups $\g\in\{\A,\B\}$. Note that this is not a pointwise convergence but rather a convergence to a cycle: at equilibrium, the strategy played by the $\B$-candidates converges to a cycle on the values $b^{\min}_\B$ and $b^{\max}_\B$.

\subsection{Selection problems with small and intermediate rewards}
\label{sec:small S}


We start with the case of small rewards $\s<\cp\stp^2/\phi(1)$ for which the results are quite different from the ones obtained for large $\s$ in Section~\ref{sec:characterization}. We show that the cost ratio $\cb/\ca$, as well as the expected quality variance ratio $\stb^2/\sta^2$, play significant  roles in the outcome of the game. If the cost for the $\A$-candidates is too high compared to that of the $\B$-candidates, then the $\A$-candidates make lower effort for all selection sizes $\ax$. Otherwise, if the cost for $\A$-candidates is comparable to that of $\B$-candidates (e.g., $\ca=\cb$) or it is lower, then for both small and large selection sizes $\ax$, the high-noise $\A$-candidates make less effort compared to that of low-noise $\B$-candidates. For both cases, there exists a parameter dependent value of $\ax_0$ such that for all values of $\ax\le \ax_0$, the high-noise $\A$-candidates  are underrepresented.
Interestingly, in (ii) of Theorem~\ref{theorem:disparity of effort}, we also observe that it is possible that $\A$-candidates make a larger effort than $\B$-candidates, yet they are selected at a lower rate, $\mpxaun < \mpxbun$. 

Overall, the high-level interpretation of the below result is that \emph{for small enough rewards $\s$ and for small enough values of $\ax$, the high-noise $\A$-candidates are always underrepresented}. This result is in contrast with the results for large $\s$ studied in Section~\ref{sec:characterization}, and it is similar to the result in the non-strategic setting studied in \cite{emelianov22,garg20}.
\begin{theorem}[Discrimination for small rewards $\s$]
\label{theorem:disparity of effort}
    Assume that $\s < {\cp\stp^2} / {\phi(1)}$ for all $\g\in\{\A,\B\}$.  Let $\mqaun$ and $\mqbun$, $\pxaun$ and $\pxbun$ be the equilibrium efforts and selection rates of the game $\gun$. Denote $K_\mq \coloneqq \sqrt{\frac{-2\ln (\ca\sta/(\cb\stb))}{1/\sta^2 - 1/\stb^2}}$ and $K_\px\coloneqq\sqrt{\mathcal{W}\left(\frac{\s^2\left(\frac{1}{\ca\sta} - \frac{1}{\cb\stb}\right)^2}{2\pi(\stb-\sta)^2}\right)}$, where $\mathcal{W}$ is the Lambert function defined as the inverse to the function $f(\lambda) = \lambda e^\lambda$ and $\Phi^c$ is the complementary cumulative distribution function of the standard normal distribution $\norm(0,1)$.
    \begin{enumerate}[(i)]
    \item If $\ca\sta > \cb\stb$, then $\mmqaun < \mmqbun \text{ for all }\ax\in(0,1)$, and $\mpxaun < \mpxbun$ if and only if $\ax < \Phi^c\left(-K_x\right)$.
    \item If $\ca\sta \le \cb\stb$, then 
    \begin{align*}
    &\mmqaun < \mmqbun  \iff \ax \in \left(0,  \sum_{G\in\{A,B\}} \pg \Phi^c\left( K_\mq/\stp\right)\right)  \cup \left( \sum_{G\in\{A,B\}} \pg \Phi^c\left({-K_\mq}/\stp\right), 1\right),\\
    &\mpxaun < \mpxbun \iff \ax < \Phi^c\left(K_x\right). 
    \end{align*}
    \end{enumerate}

\end{theorem}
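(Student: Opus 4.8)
The plan is to exploit that the small-reward hypothesis makes each candidate's payoff strictly concave, so that best responses are unique pure strategies and the whole statement reduces to comparing the two smooth curves $\br_\A(\til)$ and $\br_\B(\til)$ at the common equilibrium threshold $\tilun$. First I would establish the structural fact underlying this regime. Writing $\psi_\g(m;\til)=\partial_m \vv_\g(m;\til)=\frac{\s}{\stp}\phi\!\big(\frac{m-\til}{\stp}\big)-\cp m$, a direct computation gives $\partial_m\psi_\g=-\frac{\s(m-\til)}{\stp^{3}}\phi\!\big(\frac{m-\til}{\stp}\big)-\cp$, whose first term is maximized at $(m-\til)/\stp=-1$ with value $\frac{\s}{\stp^{2}}\phi(1)$; hence $\s<\cp\stp^2/\phi(1)$ forces $\partial_m\psi_\g<0$ everywhere, so $\vv_\g(\cdot;\til)$ is strictly concave and $\br_\g(\til)$ is the unique (interior, positive) root of $\psi_\g(\cdot;\til)=0$. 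Consequently each group plays a pure strategy at equilibrium, $\mmqgun=\br_\g(\tilun)$ and $\mpxgun=\Phi\!\big(z_\g/\stp\big)$ with $z_\g:=\br_\g(\tilun)-\tilun$. I would also record, by implicit differentiation of the first-order condition, that $\br_\g'(\til)<1$ so $z_\g(\til)$ is strictly decreasing in $\til$, and that $\ax\mapsto\tilun$ is a decreasing bijection onto $(0,1)$ — this lets me translate intervals in $\ax$ into intervals in $\tilun$ (equivalently in $z_\g$).

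For the effort comparison I would compare the two roots through the sign of $\psi_\A$ at $m=\br_\B(\til)$: since $\psi_\A(\cdot;\til)$ is strictly decreasing with root $\br_\A(\til)$, one has $\br_\A(\til)>\br_\B(\til)\iff\psi_\A(\br_\B(\til);\til)>0$. Substituting the group-$\B$ condition $\cb\br_\B=\frac{\s}{\stb}\phi\!\big(\frac{\br_\B-\til}{\stb}\big)$ and simplifying the Gaussian ratio yields the clean criterion
\[
  \br_\A(\til)>\br_\B(\til)\iff \exp\!\Big(-\tfrac{z^{2}}{2}\big(\tfrac{1}{\sta^2}-\tfrac{1}{\stb^2}\big)\Big)>\frac{\ca\sta}{\cb\stb},\qquad z=\br_\B(\til)-\til .
\]
In case (i), $\ca\sta>\cb\stb$ makes the right side exceed $1$ while the left side is at most $1$, so $\br_\A<\br_\B$ for every threshold and thus $\mmqaun<\mmqbun$ for all $\ax$. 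In case (ii) the two sides are equal exactly at $|z|=K_\mq$; since $z(\til)$ is monotone there are exactly two crossing thresholds, $z=\pm K_\mq$, and substituting each into the budget constraint $\ax=\pa\Phi(z/\sta)+\pb\Phi(z/\stb)$ gives the two endpoints $\sum_{\g}\pg\Phi^c(\pm K_\mq/\stp)$; monotonicity of $z$ in $\ax$ then produces the stated union of intervals.

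For the selection-rate comparison, equality $\mpxaun=\mpxbun$ forces the normalized margins to coincide, $z_\A/\sta=z_\B/\stb=:\zeta^*$, and the key simplification is that at this crossing the budget constraint reads $\ax=\pa\Phi(\zeta^*)+\pb\Phi(\zeta^*)=\Phi(\zeta^*)$, i.e. the common selection rate equals $\ax$ itself. Eliminating $\til$ between the two first-order conditions evaluated at $\zeta^*$ gives $A\,e^{-\zeta^{*2}/2}=\zeta^*$ with $A=\frac{\s}{\sqrt{2\pi}}\,\frac{1/(\ca\sta)-1/(\cb\stb)}{\sta-\stb}$; squaring turns this into $\zeta^{*2}e^{\zeta^{*2}}=A^2$, so $\zeta^{*2}=\mathcal{W}(A^2)$ and $|\zeta^*|=K_\px$, while $\sign(\zeta^*)=\sign(A)$ gives $\zeta^*=+K_\px$ in case (i) and $\zeta^*=-K_\px$ in case (ii). Hence the crossing sits at $\ax=\Phi(\zeta^*)$, namely $\Phi^c(-K_\px)$ and $\Phi^c(K_\px)$ respectively. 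I would then fix the direction of the strict inequality by the limit $\ax\to0$ (so $\tilun\to\infty$, $\br_\g\to0$, $z_\g/\stp\sim-\tilun/\stp$): since $\sta<\stb$ this forces $\mpxaun<\mpxbun$, which together with the single crossing establishes ``$\mpxaun<\mpxbun$ iff $\ax$ below the crossing'' in both cases.

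The main obstacle is the selection-rate step: collapsing the two coupled first-order conditions into a single scalar equation solvable by the Lambert $\mathcal{W}$ function, and then tracking the sign of $\zeta^*$ carefully enough to match the $\Phi$ versus $\Phi^c$ and $+K_\px$ versus $-K_\px$ bookkeeping across the two cases (including the degenerate boundary $\ca\sta=\cb\stb$, where $K_\mq=K_\px=0$ and the interior interval collapses). By comparison, the effort step is routine once strict concavity (hence uniqueness of $\br_\g$) and the monotonicity of $z_\g$ in $\til$ are in place; the only care needed there is to verify that the two effort crossings occur in the order matching $\ax_1<\ax_2$.
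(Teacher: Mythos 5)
Your proposal is correct and follows essentially the same route as the paper's proof: uniqueness of the pure best response from strict concavity under $\s<\cp\stp^2/\phi(1)$, the same FOC-based equality conditions yielding the two effort crossings at $\tilun-\m_\g^{\mathrm{un}}=\pm K_\mq$, and the same elimination of $\tilun$ followed by squaring and inverting via the Lambert $\mathcal{W}$ function to get $K_\px$ with the same sign bookkeeping. The only (minor) difference is how you close the ``iff'': your global criterion $\br_\A(\til)>\br_\B(\til)\iff\frac{\partial\vv_\A}{\partial\m}\left(\br_\B(\til);\til\right)>0$ together with the monotone bijection between $\ax$ and $\tilun$ and the $\ax\to0$ limit is a slightly more airtight way to rule out extra crossings than the paper's local derivative comparisons at the crossing points, but the computational substance is identical.
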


\begin{proof}[Proof Sketch]
    We show that for $\s<\cp\stp^2/\phi(1)$ the best response in pure strategies is unique, hence the equilbirium effort distribution is a singleton $\mqgun = \delta(\m-\m_\g^\mathrm{un})$. Note that the first-order condition on a maximum of the payoff function $\vv_\g$ is also a sufficient condition; it can be written: $$\frac{\s}{\stp}\phi\left(\frac{\m^\mathrm{un}_\g - \tilun}{\stp}\right) - \cc_\g{\m^\mathrm{un}_\g}=0 \iff \m^\mathrm{un}_\g = \frac{\s}{\cc_\g\stp}\phi\left(\frac{\m^\mathrm{un}_\g - \tilun}{\stp}\right) .$$
    Since we aim to find a value of $\ax$ when $\m_{\A}^\mathrm{un}  = \m_{\B}^\mathrm{un}$, we equate the right-hand sides of the above equation for two groups, and, by solving this equation, we obtain the values of $\tilun -\m_{\g}^\mathrm{un}$. By substituting this expression to the budget constraint, we derive the value of $\ax$ at  which $\m^\mathrm{un}_\A=\m^\mathrm{un}_\B$:
    \begin{align*}
      \ax = \sum_{\g\in\{\A,\B\}} \pg \Phi^c\left(\frac{\tilun -\m_{\g}^\mathrm{un}}{\stp}\right).
    \end{align*}
    The proof for $\pxgun$ is similar to that of $\m_\g^\mathrm{un}$. A complete proof is given in Appendix~\ref{proof:inequality of effort}. 
\end{proof}




\paragraph{Intermediate rewards}

To conclude our analysis, we fill the gap between our theoretical results for the cases of small and large rewards $\s$ using numerical simulations.\footnote{The code can be found at \url{https://gitlab.inria.fr/vemelian/strategic-selection-code}.} 
We perform our numerical simulations for the values of reward $\s=1,10,100,1000$. The simulation result for $\s=1$ is studied theoretically in the first part of this section, as $\s=1$ satisfies the condition $\s < \cp\stp^2/\phi(1)$. The result for $\s=\infty$, studied in Section~\ref{sec:characterization} and Section~\ref{sec:fairness mechanism}, is represented in Fig.~\ref{fig:characterization} using a black solid line.

We plot the ratio of $\mpxbun/\mpxaun$ for the case of group-independent cost coefficient in Fig.~\ref{fig:characterization a}, the ratios of $\mpxaun/\mpxbun$ and $\rev/\revdp$ for the case of group-dependent cost coefficient in Fig.~\ref{fig:characterization b} and Fig.~\ref{fig:characterization c}. Overall, we observe a relatively smooth transition between the two regimes of $\s$ in all figures. In addition, the behavior for $\s=100$ and for $\s=1000$ is quite close to the behavior for $\s=\infty$.
\begin{figure}
  \centering
  \begin{subfigure}{.3\textwidth}
    \centering
    \begin{tikzpicture}
      \node (img)  {\includegraphics[width=.9\linewidth]{../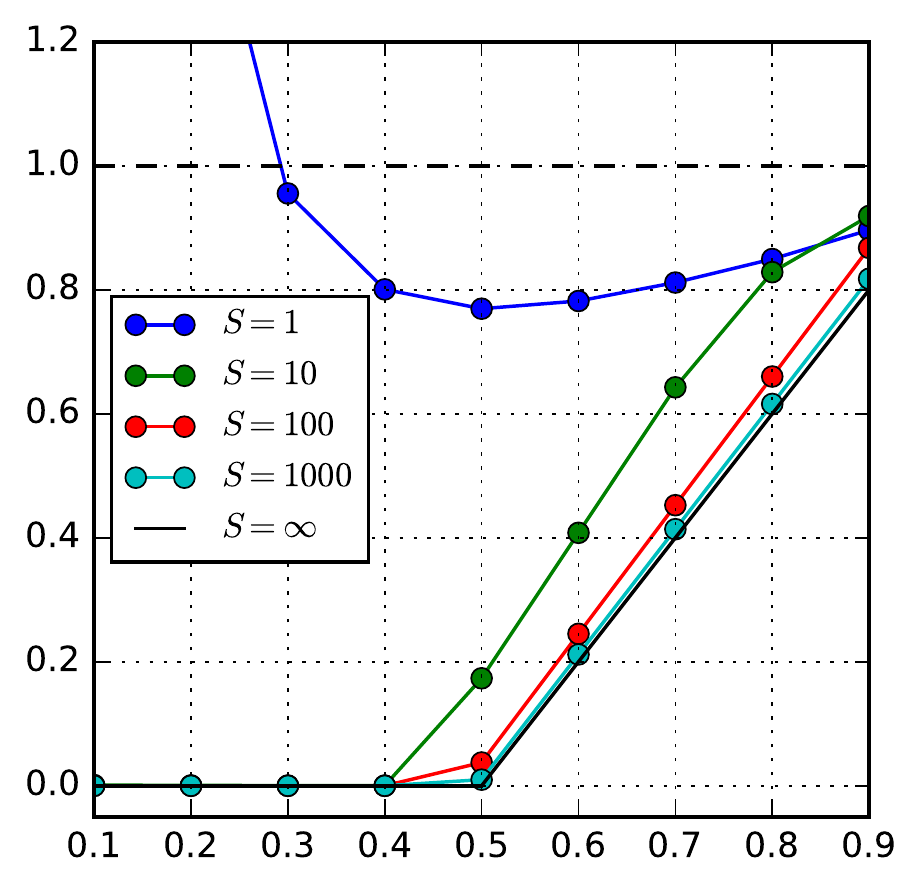}};
      \node[below=of img, node distance=0cm, yshift=3.5em,font=\color{black}] {\tiny$\ax$};
      \node[left=of img, node distance=0cm, rotate=90, anchor=center,yshift=-3em,font=\color{black}] {\tiny$\mpxbun/\mpxaun$};
      \node (ne) at (1.3, -0.6)[font=\color{black}] {\tiny $\frac{\ax-\pa}{\pb}$};
     \end{tikzpicture}
    \caption{$\mpxbun / \mpxaun$ for $\ca/\cb=1$}
    \label{fig:characterization a}
  \end{subfigure}
  \begin{subfigure}{.3\textwidth}
    \centering
    \begin{tikzpicture}
      \node (img)  {\includegraphics[width=.9\linewidth]{../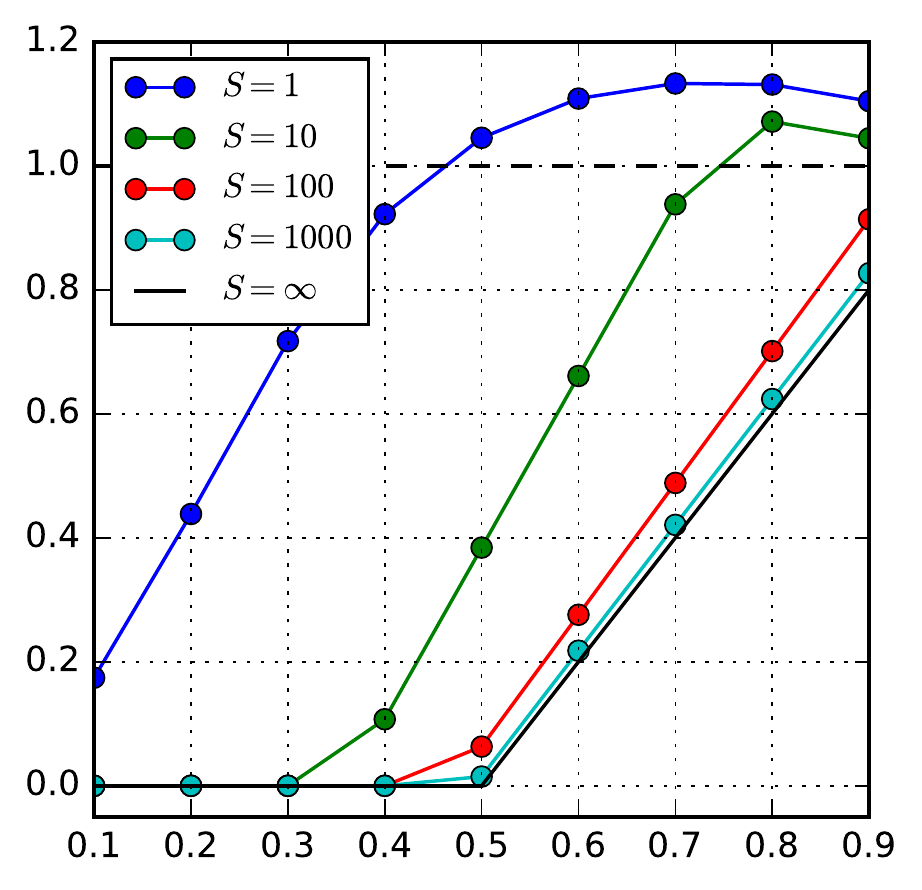}};
      \node[below=of img, node distance=0cm, yshift=3.5em,font=\color{black}] {\tiny$\ax$};
      \node[left=of img, node distance=0cm, rotate=90, anchor=center,yshift=-3em,font=\color{black}] {\tiny$\mpxaun/\mpxbun$};
      \node (ne) at (1.3, -0.6) [font=\color{black}]{\tiny $\frac{\ax-\pb}{\pa}$};
     \end{tikzpicture}
    \caption{$\mpxaun / \mpxbun$ for $\ca/\cb=1.5$}
    \label{fig:characterization b}
  \end{subfigure}
  \begin{subfigure}{.3\textwidth}
    \centering
    \begin{tikzpicture}
      \node (img)  {\includegraphics[width=.9\linewidth]{../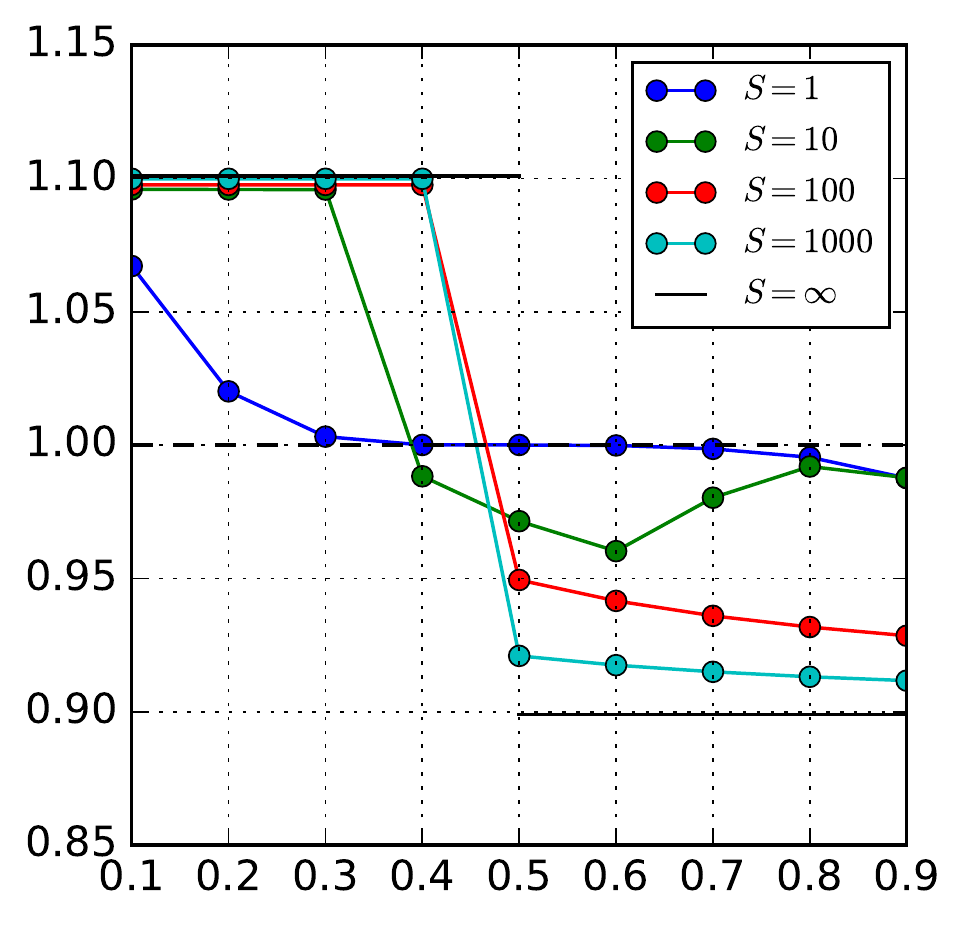}};
      \node[below=of img, node distance=0cm, yshift=3.5em,font=\color{black}] {\tiny$\ax$};
      \node[left=of img, node distance=0cm, rotate=90, anchor=center,yshift=-2.5em,font=\color{black}] {\tiny$\rev/\revdp$};
      \node (low) at (-0.4, -1.1) [font=\color{black}]{\tiny $\frac{c}{c\pa + \pb}$};
      \node (up) at (0.15, 1.35) [font=\color{black}]{\tiny $\frac{1}{c\pa + \pb}$};
     \end{tikzpicture}
    \caption{$\rev/\revdp$ for $\ca/\cb=1.5$}
    \label{fig:characterization c}
  \end{subfigure}
  \caption{Characterization of the equilibrium for different rewards $\s$. The parameters of simulations are $\cb=1$, $\pb=0.5$, $\sta=0.6$ and $\stb=1$.}
  \label{fig:characterization}
  \end{figure}

\section{Concluding discussion}
\label{sec:discussion}

In this work, we propose a simple model of selection with strategic candidates who are faced with group-dependent cost-of-effort and group-dependent noise variance. We characterize the resulting discrimination at equilibrium as well as the impact of removing it through the demographic parity mechanism that mandates equal representation across groups. Note that, in the context of our strategic model, demographic parity is not the only fairness notion that might make sense, and one could be tempted to consider, for instance, a meritocratic notion of fairness (where the representation would be commensurate with the effort). However, it remains important to understand the impact of imposing demographic parity as it is one of the most commonly used fairness notions.  

Throughout the paper, we made several simplifying assumptions, often to make the results easier to state and understand and to better isolate the effect of strategic behavior. Our work can be extended, however, in multiple ways:

\paragraph{Group-dependent variance of quality} 
We assumed that the variance of latent qualities is group-independent, i.e., $\sq^2_\A=\sq^2_\B = \sq^2$. This assumption can easily be removed, as all the results can equivalently be stated for $\sta^2$ and  $\stb^2$ even with group-dependent variance. We make this assumption only for simplicity of exposition since it implies that $\sta^2 < \stb^2$ if and only if $\sxa^2>\sxb^2$.

\paragraph{Unobservable effort} 
In our model, we assume that the effort $\m$ is observable to the decision-maker. If the effort $\m$ is not observable, we argue that the decision-maker performs the selection based on the noisy estimate $\wh$ rather than the posterior expectation $\wt$ (which corresponds to the group-oblivious decision-maker in the terminology of \cite{emelianov22}). All the results from this paper still hold but we need to replace the variance of the expected qualities $\stp^2$ by the variance of the estimate $\sqp^2+\sxp^2$. Note that most of the statements will be reversed as $\sta^2 <\stb^2$ if and only if $\sxa^2>\sxb^2$. In this case, the high-noise group has a higher variance of estimate, and the low-noise group has a lower variance of estimate. Hence, when the low-noise group is underrepresented in our model, the low-noise group is overrepresented in the model with unobservable effort.

\paragraph{More than two groups of candidates.}
In our model, we assume two groups of candidates, yet the results can be extended to more than two groups (e.g., in the proof of uniqueness of the equilibrium we do not rely on the fact that the number of populations is two). This additional dimension would simply add more interactions between the groups and complicate the statement of the results. For instance, in the case of multiple groups where two of them are subject to equal cost coefficients, we expect that the sorting will be with respect to the noise for these two groups, and with respect to the cost coefficients for the rest of the groups.

\paragraph{Monomial cost function and non-Gaussian noise.}
In our model, we assume a quadratic cost and Gaussian noise. We believe that these results will not change much if we assume other symmetric noise and monomial  cost functions, i.e., $\cp\m^{d_\g}$. In this case, we expect that the best response could be characterized by a dropout threshold, as in our model. In addition, the power $d_\g$ of the monomial would be another important feature of the model: if it is group-dependent, then we expect that the dropout threshold will grow as $(\s/\cp)^{1/d_\g}$ and the candidates with higher $d_\g$ will drop out earlier, independently on the relations of  $\cp$ and $\stp^2$.

\paragraph{Other models of candidate's utility}
In our model, we assume rational risk-neutral candidates faced with different costs of efforts, and we also assume a risk-neutral Bayesian decision-maker. In non-strategic settings (see e.g., \cite{emelianov22}), such a decision-maker is proven to be Bayesian-optimal, yet we prove that it is not optimal in the strategic setting. In our model, we do not consider other barriers for candidates as, for example, self-selection. To model self-selection, we can assume that there is a minimal threshold $\til_\g^{\mathrm{self}}$ that each candidate should pass. The outcome of the equilibrium would then depend on the relation between $\til_\g^{\mathrm{self}}$ and the dropout threshold $\tildi$. We can also easily consider a risk-averse (or risk loving) decision maker: in the case of an exponential utility function, this will lead to an additive bias $f_\g$ proportional to the variance $\stp^2$, i.e., 
$\wt_{i} \sim \norm(\m_{\g_i} +f_{\g_i}, \st_{\g_i}^2).$

\begin{acks}
  This work has been partially supported by MIAI @ Grenoble Alpes (ANR-19-P3IA-0003), by the French National Research Agency (ANR) through grant ANR-20-CE23-0007 and grant ANR-19-CE23-0015.  
\end{acks}

\bibliographystyle{ACM-Reference-Format}
\bibliography{bibliography.bib}

\appendix

\section{Missing proofs}
\label{appendix:missing proofs}
\subsection{Properties of the best response}
\label{proof:br properties}
 
Recall that the payoff function of an individual with effort $\m$ and given the  selection threshold $\til$ is 
\begin{align*}
\vv(\m;\til) = \s\cdot\Phi\left(\frac{\m - \til}{\st}\right) - \frac12\cc\m^2,
\end{align*}
where $\Phi$ is the CDF of the standard normal distribution. 

Denoting $\phi$ the PDF of the standard normal distribution, the first two derivatives of $u$ with respect to $m$ are:
\begin{align*}
    \frac{\partial\vv}{\partial\m}(\m;\til)&=\frac{\s}{\st}\phi\left(\frac{\til -\m}{\st}\right) - \cc \m,\\
    \frac{\partial^2\vv}{(\partial\m)^2}(\m;\til) &= \frac{S}{\st}\phi\left(\frac{\til -\m}{\st}\right)\frac{\til -\m}{\st^2} - \cc.
\end{align*}

The payoff function $\vv$ is defined on $[0,\infty)$, and it is continuous and continuously differentiable. Moreover, $\frac{\partial\vv}{\partial\m}(\m=0) > 0 $ and $\vv(\m;\til)\xrightarrow{\m\to\infty}-\infty$. Hence, all local maxima of $\vv$ must satisfy the first-order condition (FOC) $\frac{\partial\vv}{\partial\m}(m; \til)=0$ and the second-order condition (SOC) $\frac{\partial^2\vv}{\partial\m^2}\le0$. The maximum of the payoff function is attained in one of the local maxima.



We start by a first lemma. 
  \begin{lemma}[Maxima of $\vv$]
    \label{lemma:maxima of payoff}
    Fix $\s$, $\cc$ and $\st$.
    \begin{enumerate}[(i)]
          \item If $\s<\cc\st^2/\phi(1)$, then there is a unique global maximum of $\vv(\m;\til)$ for all $\til$.
          \item If $\s\ge\cc\st^2/\phi(1)$, then there exists a unique $\tild(\s)$ such that for $\til=\tild(\s)$ there are two global maxima of $\vv(\m;\til)$, and for $\til\not=\tild$, there is a unique global maximum of $\vv(\m;\til)$.
      \end{enumerate}
  \end{lemma}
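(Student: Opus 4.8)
The plan is to reduce everything to the sign structure of $\vv''$ together with the first-order condition (FOC) $\partial\vv/\partial\m=0$, i.e. $\frac{\s}{\st}\phi\!\left(\frac{\til-\m}{\st}\right)=\cc\m$. For part (i), observe that the second derivative $\vv''(\m;\til)=\frac{\s(\til-\m)}{\st^{3}}\phi\!\left(\frac{\til-\m}{\st}\right)-\cc$ can be written, with $y=(\til-\m)/\st$, as $\frac{\s}{\st^{2}}\,y\phi(y)-\cc$; since $y\phi(y)$ is maximized at $y=1$ with value $\phi(1)$, the quantity $\vv''$ is bounded above by $\frac{\s\phi(1)}{\st^{2}}-\cc$, attained at $\m=\til-\st$. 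Hence if $\s<\cc\st^{2}/\phi(1)$ then $\vv''<0$ for every $\m\ge0$ and every $\til$, so $\vv(\cdot;\til)$ is strictly concave and has a unique maximizer. This settles (i).

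For part (ii) I would first bound the number of local maxima. Differentiating once more gives $\vv'''=\frac{\s}{\st^{3}}\phi\!\left(\frac{\til-\m}{\st}\right)\!\left(\frac{(\til-\m)^2}{\st^2}-1\right)$, so $\vv''$ increases up to a peak at $\m=\til-\st$, decreases to a trough at $\m=\til+\st$, then increases toward $-\cc$. When $\s\ge\cc\st^{2}/\phi(1)$ the peak value $\frac{\s\phi(1)}{\st^2}-\cc$ is nonnegative, and because the trough lies below $-\cc<0$, the function $\vv''$ has at most two zeros. Thus $\vv(\cdot;\til)$ is concave–convex–concave and has at most two local maxima, a low-effort one and a high-effort one separated by a local minimum; denote them (where both exist) by $\m^*_{\mathrm{low}}(\til)<\m^*_{\mathrm{high}}(\til)$. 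Next I would identify the interval $(\til_-,\til_+)$ of thresholds on which both maxima coexist: as $\til$ grows the low-effort maximum is born in a fold at $\til_-$ and the high-effort maximum dies in a fold at $\til_+$, whereas for $\til\notin(\til_-,\til_+)$ there is a single local, hence global, maximum.

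On $(\til_-,\til_+)$ define the payoff gap $D(\til)=\vv(\m^*_{\mathrm{high}}(\til);\til)-\vv(\m^*_{\mathrm{low}}(\til);\til)$. By the envelope theorem, along each branch $\frac{d}{d\til}\vv(\m^*(\til);\til)=\partial_\til\vv=-\frac{\s}{\st}\phi\!\left(\frac{\m^*-\til}{\st}\right)$, so
\[
D'(\til)=-\frac{\s}{\st}\Big[\phi\!\Big(\tfrac{\m^*_{\mathrm{high}}-\til}{\st}\Big)-\phi\!\Big(\tfrac{\m^*_{\mathrm{low}}-\til}{\st}\Big)\Big].
\]
The key simplification is that at any critical point the FOC yields $\phi\!\big(\tfrac{\m-\til}{\st}\big)=\tfrac{\cc\st}{\s}\,\m$, which is increasing in $\m$; since $\m^*_{\mathrm{high}}>\m^*_{\mathrm{low}}$ the bracket is positive and $D'(\til)<0$, so $D$ is strictly decreasing. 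Combined with the boundary signs $D(\til_-)>0$ (the high-effort maximum is the pre-existing global one while the low branch has just appeared) and $D(\til_+)<0$ (symmetrically), continuity gives a unique zero $\tild\in(\til_-,\til_+)$: there $\vv$ has two global maxima, while for $\til<\tild$ the high-effort maximum is the unique global one and for $\til>\tild$ the low-effort one is. Together with the single-maximum region outside $(\til_-,\til_+)$ this yields a unique global maximum for every $\til\ne\tild$.

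The main obstacle is the bifurcation bookkeeping in the middle step: rigorously proving that the set of $\til$ admitting two local maxima is a single interval $(\til_-,\til_+)$, that the folds occur in the stated order (low-effort maximum appearing first, high-effort disappearing last), and that each degenerate fold endpoint carries strictly smaller payoff than the surviving maximum, so that the boundary inequalities $D(\til_-)>0,\ D(\til_+)<0$ are strict and the global maximizer is unique there. The monotonicity of $D$ is clean because of the FOC identity $\phi((\m-\til)/\st)=\tfrac{\cc\st}{\s}\m$, so the genuine work lies in this qualitative fold analysis; the degenerate boundary value $\s=\cc\st^{2}/\phi(1)$, where the two-maximum interval collapses to a single inflection point at $\til=2\st$, would need a separate limiting argument.
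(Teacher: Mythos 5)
Your proposal is correct and follows essentially the same route as the paper's proof: part (i) via the bound $\vv''\le \s\phi(1)/\st^2-\cc$ coming from the maximum of $y\phi(y)$ at $y=1$, and part (ii) via the concave--convex--concave structure (at most two local maxima on an interval of thresholds) combined with the envelope-theorem identity $\frac{d}{d\til}\vv(\m^*(\til);\til)=-\cc\,\m^*(\til)$, which makes the payoff gap between the two branches strictly monotone and yields a unique crossing $\tild$. The ``fold bookkeeping'' you flag as the main obstacle is exactly what the paper resolves concretely: substituting $z=(\m-\til)/\st$ and studying $v(z)=\frac{\s}{\st}\phi(z)-\cc\st z$, whose two critical points $z_1<z_2$ (explicit via the Lambert function) delimit the interval $(\til_1,\til_2)=(v(z_1)/\cc,\,v(z_2)/\cc)$ on which three FOC solutions coexist, with the degenerate merges occurring in precisely the order you describe.
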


  \begin{proof}
    Let us denote by $z=(m-\theta)/\st$ and let $v(z)=\frac{\s}{\st}\phi(z) - C \st z$ and $w(z) := -\frac{\s}{\st}z\phi(z) - C\st = dv(z)/dz$.  The first and second derivatives of $\vv$ can be expressed as a function of $v$ and $w$:
    \begin{align*}
        &\frac{\partial\vv}{\partial\m}(\m;\til) = v(z) - C\theta,\\
        &\frac{\partial^2\vv}{(\partial\m)^2}(\m;\til) = \frac{1}{\st}w(z).
    \end{align*}
    The function $z\cdot\phi(z)$ has the global maximum at $z=1$ which is equal to $\phi(1)$, and the global minimum at $z=-1$ which is equal to $-\phi(1)$. Hence, two cases are possible:
    \begin{enumerate}[(i)]
        \item If $\phi(1) < \cc\st^2/\s$, then $w(z)<0$ for all $z\ne1$.  Hence, $v$ is a strictly decreasing function (since $w<0$ in this case), so the FOC gives a unique solution $\m$ which is a global maximum of $\vv$. 
        \item If $\phi(1)\ge\cc\st^2/\s$, then the equation $w(z)=0$ has two real solutions, denoted by $z_1 \le z_2$. They can be found explicitly, i.e., $z_{1,2} = -\sqrt{-\mathcal{W}_{1,2}\left(-\frac{2\pi C^2\st^4}{S^2}\right)}$ where $\mathcal{W}$ is the Lambert function defined as the inverse to the function $f(y)=ye^y$. Note also that $z_1\xrightarrow{\s\to\infty}-\infty$ and $z_2\xrightarrow{\s\to\infty}0$.
    \end{enumerate}

    \begin{minipage}{.5\linewidth}
        We consider the latter case (ii) in details. We can verify, that the function $v$ is a decreasing function for $z \in (-\infty, z_1)\cup (z_2,\infty)$, and it is an increasing function for $z \in (z_1, z_2)$. This shows that the function $v(z)$ has the same shape as the curve on the right. As a result, the FOC condition $v(z) - C\theta=0$ can have at most three solutions depending on the value of $\til$.
    \end{minipage}
    \begin{minipage}{.4\linewidth}
        \includegraphics[width=\linewidth]{../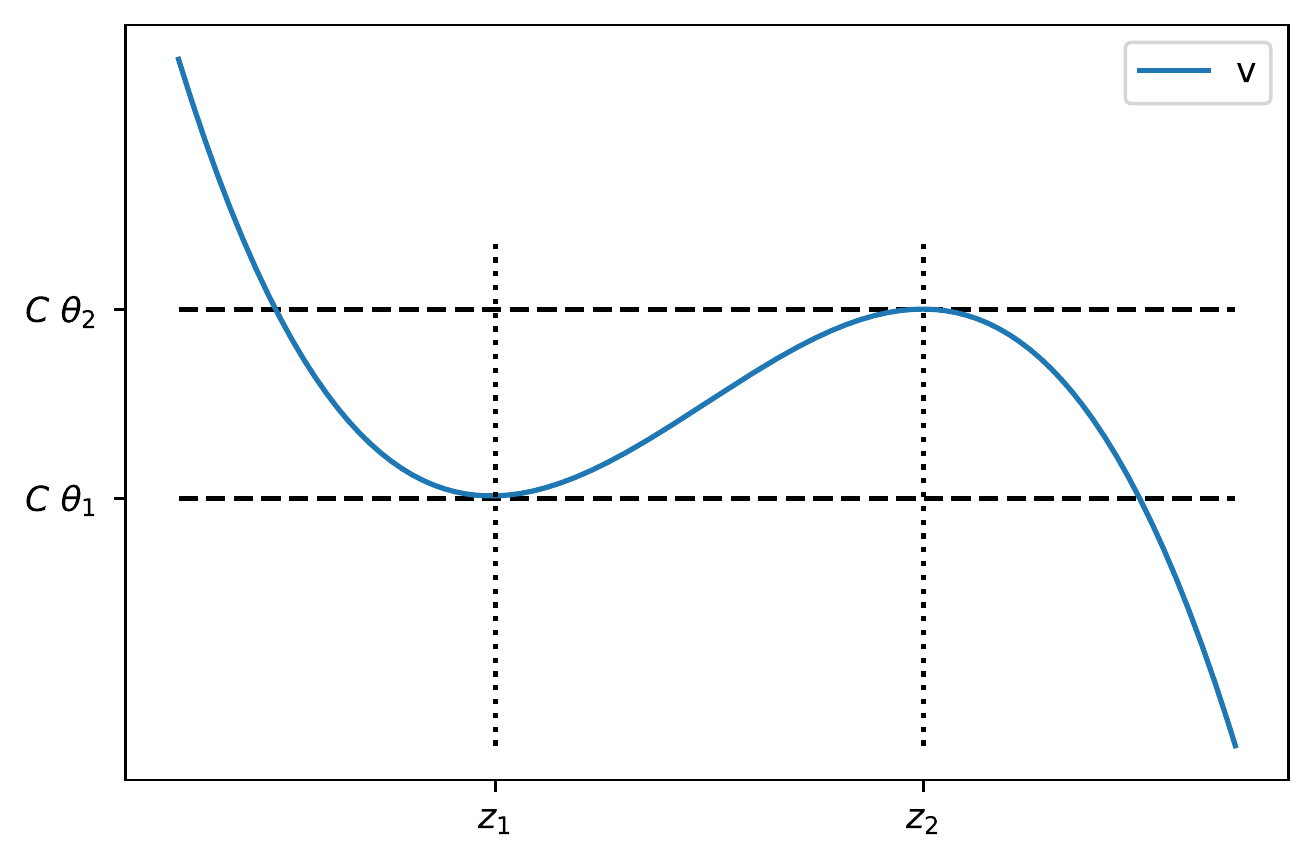}
    \end{minipage}

    Indeed, let $\theta_1=v(z_1)/C$ and $\theta_2=v(z_2)/C$. 
    \begin{itemize}
        \item For all $\til \not \in (\til_1, \til_2)$, the FOC gives a unique solution which is a global maximum of $\vv(\m;\til)$.
        \item For all $\til \in (\til_1, \til_2)$, we have that $v(z_1) \le \cc\til$ and $v(z_2)  \ge \cc\til$ which guarantees three solutions to FOC which we denote by $\m_1 \le \m_2 \le \m_3$.
        \item For $\til=\til_1$ ,we have $\vv(\m_1(\til);\til) = \vv(\m_2(\til); \til)$, and for $\til=\til_2$ we have that $\vv(\m_2(\til); \til) = \vv(\m_3(\til); \til)$. For all $\til \in (\til_1, \til_2)$, the $\m_2$ is a local minimum, $\m_1$ and $\m_3$ are local maxima.
    \end{itemize}

    Consider the two continuous, differentiable and non-negative functions $\Delta_{12}(\til) = \vv(\m_1(\til);\til) - \vv(\m_2(\til);\til)$ and  $\Delta_{32}(\til) = \vv(\m_3(\til);\til) - \vv(\m_2(\til);\til)$. We can see that $\Delta_{12}(\til_1)=0$ and $\Delta_{32}(\til_2)=0$. 
    
    We  can also verify that $\frac{\partial\vv(m(\til), \til)}{\partial\til} = \cc\m\left(\frac{\partial\m}{\partial \theta} - 1\right) - \cc \m \frac{\partial\m}{\partial \theta}= -C\m(\til)<0$. Therefore, the function $\Delta_{12}$ is increasing , since $\frac{d\Delta_{12}}{d \til} = C(m_2(\til) - \m_1(\til)) > 0$, whereas the function $\Delta_{32}$ is decreasing since $\frac{d\Delta_{32}}{d \til} = C(m_2(\til) - \m_3(\til)) < 0$. Hence, there is a unique $\tild\in(\til_1,\til_2)$, such that $\Delta_{12}(\tild) = \Delta_{23}(\tild)$ which is equivalent to $\vv(\m_1(\tild),\tild) = \vv(\m_3(\tild),\tild)$.

  \end{proof}


\subsection{Proof of Lemma~\ref{lemma:br large S}}
\label{proof:br large S}

We start with the proof of the case $(i)$. According to Lemma~\ref{lemma:maxima of payoff}, there are two pure best response values, $\mmax$ and $\mmin$, that correspond to the dropout threshold $\tild$. Following the definition of the expected payoff $\vv(\m;\til)$, we can show the following upper bound on the values of pure best responses $\br(\tild)\in\{\mmax(\tild), \mmin(\tild)\}$:
\begin{align*}
  0 < \vv(\br(\tild);\tild)=\s\cdot\px(\br(\tild);\tild) - \frac{\cc\br^2(\tild)}2 \le \s - \frac{\cc\br^2(\tild)} 2.
\end{align*}
This implies that $\br(\tild) \le \sqrt{2\s/\cc}$.

Second, we show that $|\tild - \br(\tild)|$ is not bounded as $\s\to\infty$. By assuming that $|\tild- \br(\tild)|<\varepsilon$ for all $\s$, we end up with the following contradictory inequality that must hold for any value of $\s$:
\begin{align*}
 \phi\left(\frac \varepsilon \st\right) < \phi\left(\frac{\tild-\br(\tild)}{\st}\right)= \frac{\cc\st}{\s}\cdot\br(\tild) \le \st\sqrt{2\cc/\s} \xrightarrow{\s \to\infty} 0,
\end{align*}
where $\phi$ is the PDF of the standard normal distribution $\norm(0,1)$.

Since the value of $|\tild -\br(\tild)|$ is not bounded, by studying the first and the second derivatives of the payoff function (as in Lemma~\ref{lemma:maxima of payoff}), we can show that $(\mmax(\tild)-\tild)\xrightarrow{\s\to\infty}+\infty$ and $(\mmin(\tild)-\tild)\xrightarrow{\s\to\infty}-\infty$. Hence, the selection rates, corresponding to the $\mmax$ and $\mmin$ converge to: 
\begin{align*}
    \lims\px(\mmax(\tild);\tild)=\lims\Phi\left(\frac{\mmax(\tild) - \tild}{\st}\right)=1,\\
    \lims\px(\mmin(\tild);\tild)=\lims\Phi\left(\frac{\mmin(\tild) - \tild}{\st}\right)=0,
\end{align*}
where $\Phi$ is the CDF of the standard normal distribution $\norm(0,1)$.

\emph{\textbf{Asymptotic behavior of $\mmin$ and $\mmax$.}}
Using the definition of the dropout threshold $\tild$ and the definition for $\mmin$, $\mmax$, we can write:
 \begin{align*}
 &0  < \vv(\mmax(\tild);\tild) =\vv(\mmin(\tild);\tild) \iff\\
 &0 < \px(\mmax(\tild);\tild) - \frac{\cc(\mmax(\tild))^2}{2\s} = \px(\mmin(\tild);\tild)- \frac{\cc(\mmin(\tild))^2}{2\s} < \px(\mmin(\tild);\tild).\\
 \end{align*}
 As $\lims\px(\mmin(\tild);\tild)=0$ and $\lims\px(\mmax(\tild);\tild)=1$, it implies that
 \begin{align*}
    \lims\frac{\cc(\mmax(\tild))^2}{2\s} = 1 \text{ and }\lims\frac{\cc(\mmin(\tild))^2}{2\s} = 0.
 \end{align*}
Hence, we show the asymptotic behavior of the pure strategy best response at $\tild$ for $\s\to\infty$:
\[
    \mmax(\tild(\s)) = \sqrt{2\s/\cc}(1+o(1))\text{ and } \mmin(\tild(\s))=o(\sqrt{2\s/\cc}).
\]

\emph{\textbf{Asymptotic behavior of $\tild$.}} 
Finally, we consider the asymptotic behavior of the dropout threshold $\tild$. The dropout threshold $\tild(\s)$ is unbounded: if we assume the opposite, then $\mmax$ would be $o(1)$, $\s\to\infty$ since $\mmax$ must satisfy the FOC:
\begin{align*}
    \mmax(\tild) = \frac{\s}{\cc\st}\phi\left(\frac{\tild - \mmax(\tild)}{\st}\right).
\end{align*}
Hence, consider the following limit which we find using l'Hôpital rule and the properties of $\mmin$ and $\mmax$ proven above:
\begin{align*}
    \lims \frac{\tild(\s)}{\sqrt{2\s/\cc}} = \lims \frac{\frac{\mmax(\tild(\s))+\mmin(\tild(\s))}{2\s}}{\frac 1 2 \s^{-1/2}\sqrt{2/\cc}} = \lims \frac{\mmax(\tild(\s))+\mmin(\tild(\s))}{\sqrt{2\s/\cc}} = 1.
\end{align*}

\emph{\textbf{Refined asymptotic behavior of $\mmin$.}} 
Since $\mmin(\tild(\s)) = o(\sqrt{2\s/\cc})$ and $\tild(\s) = \sqrt{2\s /\cc}(1 + o(1))$ as $\s\to\infty$, then, using the first-order condition, we show
\begin{align*}
    \mmin(\tild(\s)) = \frac{\s}{\cc\st}\phi\left(\frac{\tild(\s)-\mmin(\tild(\s))}{\st}\right)  = \frac{\s}{\cc\st}\phi\left(\frac{\sqrt{2\s/\cc}(1+o(1))}{\st}\right)\xrightarrow{\s\to\infty}0.
\end{align*}

Now, we are ready to proof the case $(ii)$. According to Lemma~\ref{lemma:maxima of payoff}, the pure best response is unique. Consider two cases:
\begin{itemize}
\item If $\til(\s) = \gamma\tild(\s)$, then $\lims \left(\br(\gamma\tildg(\s))-\gamma\tildg(\s)\right)=+\infty$.  
Consider the following limit that we calculate using l'Hôpital rule:
\begin{align*}
    \lims \frac{\br(\gamma\tild(\s))}{\gamma\tild(\s)} = \lims \frac{d\br/d\til \cdot d\til/d\s}{d\til/d\s} = \lims \frac{d\br}{d\til}=\lims \frac{\br(\br-\til)}{\br(\br-\til)+\st^2} = 1.
\end{align*}
\item If $\til(\s) = \tild(\s)/\gamma$, then $\lims \left(\br(\tild(\s)/\gamma)-\tild(\s)/\gamma\right)=-\infty$. Since the payoff function $\vv$ is non-negative, we have:
\begin{align*}
    \vv \ge 0 \iff \cc\br(\tild(\s)/\gamma)^2/(2\s) \le  \Phi\left(\frac{\br(\tild(\s)/\gamma)-\tild(\s)/\gamma}{\st}\right) \xrightarrow{\s\to\infty}0.
\end{align*}
Hence, $\br(\tild(\s)/\gamma) = o(\sqrt{2\s/\cc})$. Given that the best response must satisfy the FOC:
\begin{align*}
    \br(\tild(\s)/\gamma) = \frac{\s}{\cc\st}\phi\left(\frac{\tild(\s)/\gamma - \br(\tild(\s)/\gamma)}{\st}\right) = \frac{\s}{\cc\st}\phi\left(\frac{\sqrt{2\s/\cc}\frac 1 \gamma(1+o(1))}{\st}\right) = o(1).
\end{align*}
\end{itemize}

\subsection{Proof of Lemma~\ref{lemma:dropout}}
\label{proof:dropout at infinity}

\textit{Case (i).} The asymptotic behavior of the dropout threshold is given in Appendix~\ref{proof:br large S}. We recall the proof here. Consider the following limit which we find using l'Hôpital rule and the properties of $\mmin$ and $\mmax$ proven in Lemma~\ref{lemma:br large S}:
\begin{align*}
    \lims \frac{\tild(\s)}{\sqrt{2\s/\cc}} = \lims \frac{\frac{\mmax(\tild(\s))+\mmin(\tild(\s))}{2\s}}{\frac 1 2 \s^{-1/2}\sqrt{2/\cc}} = \lims \frac{\mmax(\tild(\s))+\mmin(\tild(\s))}{\sqrt{2\s/\cc}} = 1.
\end{align*}

\textit{Case (ii).} To prove that the dropout threshold $\tild(\s;\st)$ is decreasing with $\st$, we differentiate with respect to $\st$ the condition on the equal payoffs for $\mmin$ and $\mmax$ strategies, and we obtain:
  \begin{align*}
    \frac{\partial\tild}{\partial\st} &= -\frac{\mmax(\tild(\s)) + \mmin(\tild(\s)) - \tild(\s)}{\st} < 0 \;\;\text{for large enough $\s$.}
  \end{align*}

\subsection{Proof of Theorem~\ref{theorem:unique equilibrium}}
\label{proof:unique equilibrium}

Let us denote by $\tiln$  the set-valued function
\begin{align*}
    \tiln(\til) = \{F^{-1}_{\mvec}(1-\ax): \mup \in \BR_\g(\til)\}, 
\end{align*}
where $\mvec = (\mqa, \mqb)$ and $\F_{\mvec}$ is the CDF of $\wt$ induced by $\mvec$: $\F_{\mvec} = \pa\F_{\mqa} + \pb\F_{\mqb}$.

\begin{lemma}
    \label{lemma:equivalence}
    There is a one-to-one correspondence between the fixed points of $\tiln$ and the equilibria $\mvecun$ of the game $\gun$. 
\end{lemma}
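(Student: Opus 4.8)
The plan is to exhibit an explicit correspondence, namely the map $\mvec\mapsto\til(\mvec):=\F^{-1}_{\mvec}(1-\ax)$ sending a profile to the threshold it induces, and to check that it restricts to a bijection from the set of Nash equilibria of $\gun$ onto the set of fixed points of $\tiln$. A preliminary observation simplifies everything: for any profile $\mvec=(\mqa,\mqb)$ the law $\F_{\mvec}$ of the expected quality $\wt$ is a mixture of the Gaussians $\norm(\m,\stp^2)$ over $\m\sim\mqg$, hence absolutely continuous and strictly increasing on $\R$. Consequently $\til(\mvec)=\F^{-1}_{\mvec}(1-\ax)$ is always uniquely defined, and the selection rate $1-\F_{\mvec}(\til)$ is continuous and strictly monotone both in $\til$ and in any mixing weight used inside $\mqg$.

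The two inclusions are then a direct unwinding of the definitions. \textbf{(Equilibrium $\Rightarrow$ fixed point.)} If $\mvec$ is an equilibrium and $\til=\til(\mvec)$, the equilibrium condition $\mathrm{supp}(\mqg)\subseteq\br_\g(\til)$ for $\g\in\{\A,\B\}$ is precisely the statement $\mqg\in\BR_\g(\til)$; combined with $\F^{-1}_{\mvec}(1-\ax)=\til$ this yields $\til\in\tiln(\til)$, so $\til(\mvec)$ is a fixed point. \textbf{(Fixed point $\Rightarrow$ equilibrium.)} Conversely, if $\til\in\tiln(\til)$ then by definition of $\tiln$ there are $\mqa\in\BR_\A(\til)$ and $\mqb\in\BR_\B(\til)$ with $\F^{-1}_{\mvec}(1-\ax)=\til$ for $\mvec=(\mqa,\mqb)$. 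Then $\til(\mvec)=\til$, so $\mqg\in\BR_\g(\til(\mvec))$, i.e.\ $\mathrm{supp}(\mqg)\subseteq\br_\g(\til(\mvec))$, and $\mvec$ is an equilibrium mapping to $\til$. This already gives surjectivity onto the fixed points.

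The crux, and the step I expect to be the main obstacle, is injectivity: two distinct equilibria must induce distinct thresholds. Suppose $\mvec$ and $\mvec'$ are equilibria with $\til(\mvec)=\til(\mvec')=\til$. Both $\mqg$ and $\mqg'$ are supported on $\br_\g(\til)$, and by Lemma~\ref{lemma:maxima of payoff} (equivalently Lemma~\ref{lemma:br large S}) this set is a singleton for every $\til$ with at most one exception per group, the dropout threshold $\tildg$, at which it is the two-point set $\{\mmin_\g,\mmax_\g\}$. On a group with a unique best response the two profiles coincide as a common point mass; on a group that mixes, both profiles have the form $\tau\delta_{\mmax_\g}+(1-\tau)\delta_{\mmin_\g}$, and the shared constraint $\F_{\mvec}(\til)=1-\ax=\F_{\mvec'}(\til)$ forces the same selection rate. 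Since that group's contribution to the rate is $\pg\bigl[\tau\,\Phi(\tfrac{\mmax_\g-\til}{\stp})+(1-\tau)\Phi(\tfrac{\mmin_\g-\til}{\stp})\bigr]$, which is strictly increasing in $\tau$, the weight $\tau$ is pinned down uniquely and $\mvec=\mvec'$. The delicate case is when the two dropout thresholds coincide, $\tilda=\tildb=\til$, so that both groups may mix simultaneously: then the single budget equation constrains two weights and need not determine them. I would dispose of this either by appealing to Lemma~\ref{lemma:dropout}, which separates $\tilda$ and $\tildb$ in the regimes of interest, or by deferring to the monotonicity analysis of $\tiln$ used to prove uniqueness of its fixed point; away from this coincidence the map is a genuine bijection.
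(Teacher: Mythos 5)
Your proof follows essentially the same route as the paper's: map each profile $\mvec$ to its induced threshold $\F^{-1}_{\mvec}(1-\ax)$, verify both inclusions by unwinding the definitions, and pin down the mixing weight $\tau$ at a dropout threshold via strict monotonicity of the selection rate. You are in fact slightly more careful than the paper, whose case analysis silently assumes $\tilda\ne\tildb$ and never addresses the degenerate coincidence you flag; noting that this case requires the separation of the dropout thresholds (or the monotonicity analysis of $\tiln$) is a genuine, if minor, refinement.
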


\begin{proof}
    If $\mvecun$ is an equilibrium of the game $\gun$, then there is a unique $\tilun$ such that $\F^{-1}_{\mvecun}(1-\ax)=\tilun$ since $\F_{\mvecun}(\til)$ is a monotone function. This $\tilun$ is a fixed point of $\tiln$ since  $\mqgun \in \BR_\g(\tilun)$ by definition of $\mvecun$.

    On the other hand, if $\tilun$ is a fixed point of $\tiln$, then there is a unique $\mvecun=(\mqaun,\mqbun)$ such that $\mqgun\in\BR_\g(\tilun)$: 
    \begin{enumerate}
        \item If $\til\not=\tildg$ for all $\g\in\{\A,\B\}$, then the pure best responses $\br_\A$ and $\br_\B$  are unique (see Lemma~\ref{lemma:maxima of payoff} in Appendix~\ref{proof:br properties}), so $\mvecun$ is unique.
        \item If, without loss of generality, $\til=\tilda$, then there is a unique pure best response $\br_{\B}$, and two pure response values $\mqamax$, $\mqamin$ (see Lemma~\ref{lemma:maxima of payoff} in Appendix~\ref{proof:br properties}). The function $\F(\tau)=\F_{\mvec}(\til)$ is monotone in $\tau\in[0,1]$ which is a probability for $\mqa=\tau\delta(\m-\mqamax) +(1-\tau)\delta(\m-\mqamin)$. It is also an equilibrium of $\gun$ as $\til = \F^{-1}_{\mvec}(1-\ax)$ by definition of $\til$. Hence, $\mvecun=(\mqa, \mqb)$ is an equilibrium of the game $\gun$ and it is unique.
    \end{enumerate}
\end{proof}

By showing a one-to-one correspondence between equilibria of $\gun$ and the fixed points of $\tiln$, it is left us to prove that the function $\tiln(\til)$ has a unique fixed point $\tilun$, i.e., a solution to $\tiln(\tilun) = \tilun$ is unique. This will imply that the equilibrium of the game $\gun$ is unique. We use the following lemmas.

\begin{lemma}
    \label{lemma:derivative br}
        Assume that $\m(\til)$ satisfies FOC and SOC defined in Section~\ref{proof:br properties}.
        \begin{enumerate}[(i)]
            \item If $\m > \til$, then  $0< \frac{d\m}{d\til} < 1$.
            \item If $\m \le \til$, then  $\frac{dm}{d\til} \le 0$.
        \end{enumerate}
    \end{lemma}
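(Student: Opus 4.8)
The plan is to get $d\m/d\til$ by implicitly differentiating the first-order condition $\frac{\partial\vv}{\partial\m}(\m(\til);\til)=0$ with respect to $\til$ and then to read off its sign from the second-order condition. The implicit function theorem (applicable exactly because the SOC forces $\partial^2\vv/\partial\m^2\neq0$) gives
\begin{align*}
  \frac{d\m}{d\til} = -\frac{\partial^2\vv/\partial\m\partial\til}{\partial^2\vv/\partial\m^2},
\end{align*}
so everything reduces to computing the two second-order partials and tracking signs.

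First I would compute the cross-partial. Differentiating $\frac{\partial\vv}{\partial\m}=\frac{\s}{\st}\phi\!\left(\frac{\til-\m}{\st}\right)-\cc\m$ in $\til$ and using $\phi'(z)=-z\phi(z)$ yields
\begin{align*}
  \frac{\partial^2\vv}{\partial\m\partial\til} = -\frac{\s}{\st}\phi\!\left(\frac{\til-\m}{\st}\right)\frac{\til-\m}{\st^2} =: -A.
\end{align*}
The crux of the argument is the observation that the second derivative already recorded in Section~\ref{proof:br properties} is exactly $\partial^2\vv/\partial\m^2 = A-\cc$, so the two partials differ only by $-\cc$ and the quotient collapses to
\begin{align*}
  \frac{d\m}{d\til} = \frac{A}{A-\cc}.
\end{align*}

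The sign chase then uses three facts: the SOC gives $A-\cc<0$; the cost satisfies $\cc>0$; and the sign of $A$ equals the sign of $\til-\m$, since the prefactor $\frac{\s}{\st^3}\phi(\cdot)$ is strictly positive. For part (ii), $\m\le\til$ makes $A\ge0$, so a nonnegative numerator over a negative denominator gives $d\m/d\til\le0$. For part (i), $\m>\til$ makes $A<0$; numerator and denominator are then both negative, so the ratio is positive, and $\frac{A}{A-\cc}<1$ is equivalent (multiplying through by the negative quantity $A-\cc$) to $\cc>0$, which holds; hence $0<d\m/d\til<1$.

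The only genuinely delicate point is differentiability: the implicit function theorem needs the \emph{strict} inequality $\partial^2\vv/\partial\m^2<0$, not the weak SOC. I would handle this by noting that along any differentiable best-response branch the SOC holds strictly --- the sole exception being the dropout threshold $\tild$ already isolated in Lemma~\ref{lemma:maxima of payoff}, where the best response jumps and the statement is understood for each one-sided branch. Beyond this bookkeeping the computation is entirely routine.
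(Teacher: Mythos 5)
Your proof is correct and follows essentially the same route as the paper: implicit differentiation of the first-order condition combined with a sign analysis based on the second-order condition. Your expression $\frac{d\m}{d\til}=\frac{A}{A-\cc}$ is algebraically identical to the paper's $\frac{\m(\m-\til)}{\m(\m-\til)+\st^2}$ once the FOC $\cc\m=\frac{\s}{\st}\phi\bigl(\frac{\til-\m}{\st}\bigr)$ is substituted, and your remark about needing the strict SOC away from the dropout threshold is a point the paper glosses over in the same way.
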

    
    \begin{proof}
    First, we derive the expression for the first derivative. We differentiate the both sides of FOC defined in Appendix~\ref{proof:br properties}:
        \begin{align*}
            \frac{d\m}{d\til} 
            = \frac{\s}{\cc\st}\left(\frac{\m-\til}{\st}\right)\phi\left(\frac{\til-\m}{\st}\right)\left(\frac{1-\frac{d\m}{d\til}}{\st}\right)
            =\frac{\m-\til}{\st^2}\m\left(1-\frac{d\m}{d\til}\right) 
            \implies \frac{d \m}{d\til} = \frac{\m(\m-\til)}{\m(\m-\til)+\st^2}.
        \end{align*}
    
    Second, from the SOC defined in Appendix~\ref{proof:br properties}, we find that $\m\left(\m - \til\right) + \st^2 > 0$. Hence, the sign of the derivative ${d\m}/{d\til}$ is determined only be the sign of its numerator $\m(\m - \til)$. Note that the value of  $\m$ is strictly positive as it satisfies the FOC, so ${d\m}/{d\til} >0$ if and only if $\m > \til$.
    \end{proof}

\begin{lemma}
    \label{lemma:total derivative theta}
    For all $\s$ and $\til\not=\tildi(\s)$, the total derivative $d\tiln/d\til$ can be expressed as:
    $$\frac{d \tiln}{d\til} =  \frac{\sum_\g \p_\g \frac 1 \stp \phi\left(\frac{\tiln  - \br_\g}{ \stp}\right)\cdot \frac{\br_\g(\br_\g-\til)}{\br_\g(\br_\g-\til)+\stp^2}}{\sum_\g \p_\g \frac 1 \stp \phi\left(\frac{\tiln - \br_\g}{\stp}\right)} < 1.$$
\end{lemma}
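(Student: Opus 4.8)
The plan is to obtain the formula by implicitly differentiating the equation that defines $\tiln$, and then to recognize the resulting expression as a convex combination of the quantities $d\br_\g/d\til$ that are controlled by Lemma~\ref{lemma:derivative br}.

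First I would fix $\til\neq\tildi(\s)$ for every group $\g$. By Lemma~\ref{lemma:maxima of payoff}, away from the dropout thresholds each group has a \emph{unique} pure best response $\br_\g(\til)$, so the set-valued map $\tiln$ in \eqref{eq:T} is single-valued there and the effort distribution $\mvec$ puts mass $\p_\g$ on the single effort $\br_\g(\til)$. Consequently the expected quality of the whole population is a two-component Gaussian mixture with CDF
\[
  F_{\mvec}(\vartheta) = \sum_\g \p_\g \Phi\!\left(\frac{\vartheta - \br_\g(\til)}{\stp}\right),
\]
and $\tiln(\til)$ is the unique solution $\vartheta$ of $F_{\mvec}(\vartheta)=1-\ax$; that is, $\tiln$ satisfies the identity $\sum_\g \p_\g \Phi\!\left((\tiln(\til)-\br_\g(\til))/\stp\right)=1-\ax$.

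Next I would differentiate this identity in $\til$. The right-hand side is constant, and the partial derivative of the left-hand side in $\tiln$ equals $\sum_\g \p_\g \frac1\stp \phi((\tiln-\br_\g)/\stp)>0$, which is nonzero; hence the implicit function theorem applies and $\tiln$ is differentiable wherever the $\br_\g$ are (guaranteed away from the dropout thresholds). Applying the chain rule to each summand and collecting terms gives
\[
  \sum_\g \p_\g \frac1\stp \phi\!\left(\frac{\tiln-\br_\g}{\stp}\right)\left(\frac{d\tiln}{d\til}-\frac{d\br_\g}{d\til}\right)=0,
\]
and solving for $d\tiln/d\til$ exhibits it as a weighted average of the $d\br_\g/d\til$ with strictly positive weights $w_\g:=\p_\g \frac1\stp \phi((\tiln-\br_\g)/\stp)$. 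Substituting the closed form $d\br_\g/d\til=\br_\g(\br_\g-\til)/(\br_\g(\br_\g-\til)+\stp^2)$ from Lemma~\ref{lemma:derivative br} produces exactly the displayed fraction. For the strict inequality I would use that this is a convex combination (after normalizing by the positive denominator) of numbers each strictly below $1$: by Lemma~\ref{lemma:derivative br}, $d\br_\g/d\til\in(0,1)$ when $\br_\g>\til$ and $d\br_\g/d\til\le0<1$ when $\br_\g\le\til$. Since the weights $w_\g$ are strictly positive, $\sum_\g w_\g(d\br_\g/d\til-1)<0$, which is equivalent to $d\tiln/d\til<1$.

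The main obstacle I anticipate is not the algebra but the regularity bookkeeping: one must restrict to $\til\neq\tildi(\s)$ so that the best response is single-valued and smooth (this is precisely what legitimizes the mixture representation and the implicit differentiation), and one must confirm that the weights $w_\g$ are strictly positive—using $\p_\g\in(0,1)$, $\stp>0$, and the strict positivity of the Gaussian density $\phi$—so that the convex-combination argument yields a strict rather than a weak inequality.
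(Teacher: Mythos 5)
Your proposal is correct and follows essentially the same route as the paper: implicit differentiation of the budget constraint $\sum_\g \p_\g \px_\g(\br_\g;\tiln)=\ax$, which exhibits $d\tiln/d\til$ as a convex combination of the derivatives $d\br_\g/d\til$ with positive Gaussian weights, followed by an appeal to Lemma~\ref{lemma:derivative br} for both the closed form and the strict bound $d\br_\g/d\til<1$. The paper phrases this as a chain rule $\sum_\g (\partial\tiln/\partial\br_\g)(d\br_\g/d\til)$ with the partial derivatives computed from the budget constraint, but that is the same computation as yours.
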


\begin{proof}
    Since $\tiln$ is an implicit function of $\til$, we use the chain rule to find the total derivative:
    \begin{align*}
        \frac{d\tiln}{d\til} = \sum_\g\frac{\partial \tiln}{\partial\br_\g}\frac{d\br_\g}{d\til}.
    \end{align*}
    By differentiating both sides of the budget constraint, we can find the partial derivative ${\partial \tiln}/{\partial\m_\g}$: 
    \begin{align*}
        \frac{\partial}{\partial\br_\g}\left(\sum_\g \p_\g \pxg(\br_\g;\tiln)\right)  = 0 \iff \frac{\partial \tiln}{\partial\br_\g}  = \frac{\p_\g \frac 1 \stp \phi\left(\frac{\tiln  - \br_\g}{ \stp}\right)}{\sum_\g \p_\g \frac 1 \stp \phi\left(\frac{\tiln  - \br_\g}{ \stp}\right)}.
    \end{align*}

    Finally, using the fact that ${\partial \tiln}/{\partial \br_\g} > 0$, where $\sum_\g {\partial \tiln}/{\partial \br_\g}=1$ and Lemma~\ref{lemma:derivative br}, we show that ${d\tiln}/{d\til} < 1$.
    
\end{proof}



\paragraph{Existence}
First, let us show that there is an interval $[\til_0, \til_1]$, such that for all $\til \in\R$, we have that $\til_0 \le \tiln(\til) \le \til_1$:
\begin{itemize}
    \item Since the best response $\br_\g(\til) \ge 0$ for all $\g\in\{\A,\B\}$, then for any fixed $\ax$, let $\til_0$ be the solution to the equation $\sum_\g \p_\g\Phi^c\left(\frac{\til_0-0}{\stp}\right) = \ax$. Hence, $\tiln(\til) \ge \til_0$ for all $\til\in\R$.
    \item Since the best response $\br_\g(\til) \le \sqrt{2\s/\cp}$ for all $\g\in\{\A,\B\}$, then for any fixed $\ax$, let $\til_1$ be the solution to the equation $\sum_\g \p_\g\Phi^c\left(\frac{\til_1-\sqrt{2\s/\cp}}{\stp}\right) = \ax$. Hence, $\tiln(\til) \le \til_1$  for all $\til\in\R$.
\end{itemize}
Therefore, we can consider the function $\tiln$ on the interval $[\til_0, \til_1]$ which is compact and convex. The graph of the function $\tiln$ is closed and for all $\til\in[\til_0, \til_1]$, we have that $\tiln(\til)$ is convex (for $\til\not=\tildg$, the value of $\tiln(\theta)$ is unique, and for $\til=\tildg$, the value of $\tiln(\theta)$ is an interval). Hence, using Kakutani fixed point theorem, we show that the fixed point exists.

\paragraph{Uniqueness}
We show that there exists a fixed point of the function $\tiln$.  For all $\til\not=\tildi$, we have that $d\tiln/d\til<1$ (Lemma~\ref{lemma:total derivative theta}), and we have that $\lim_{\til\to\tildg+} \tiln(\til)\le \lim_{\til\to\tildg-} \tiln(\til)$ for all $\g\in\{\A,\B\}$. Hence, there is a unique solution to the fixed-point problem $\tiln(\til)=\til$, and, as a result, a  unique equilibrium of the game $\gun$ due to Lemma~\ref{lemma:equivalence}.

\subsection{Proof of Theorem~\ref{theorem:equilibrium strategy}}
\label{proof:equilibrium strategy}

\subsubsection*{Case (i)}

We prove that the dropout threshold $\tild_1(\s)$ is the fixed point of $\tiln$, i.e., it corresponds to the equilibrium of the game $\gun$. As we show in the proof of Lemma~\ref{lemma:br large S}, as $\s\to\infty$, we have $\pxmax_\g:= \px(\mmax(\tildg); \tildg) \xrightarrow{\s\to\infty}1$ and $\pxmin_\g:=\px(\mmin(\tildg); \tildg)\xrightarrow{\s\to\infty}0$. 

If $\ax \le p_1$, and given that the selection rate for the $G_2$-candidates at $\tild_1$ tends to zero with $\s$, assume that $G_1$-candidates randomize their strategy by playing $\mmax_1$ with the probability $\tau_1$ and $\mmin_1$ with the probability $1-\tau_1$. The $G_2$-candidates play $\mqbr_2(\tild_1)\xrightarrow{\s\to\infty}0$. The probability $\tau_1$ can be found from the budget constraint: $\ax = p_1(\tau_1\px_1^{\max} + (1-\tau_1)\px_1^{\min}) + p_1 \px_2(\tild_1,\mqbr_2(\tild_1)) \xrightarrow{\s\to\infty} p_1\tau_1$, so $\tau_1\xrightarrow{\s\to\infty}\ax/p_1$. This strategy satisfies the budget constraint, so $\tild_1$ is the fixed point of $\tiln$.

\subsubsection*{Case (ii)}

We now prove that the dropout threshold $\tild_2$ is fixed point of $\tiln$. If $\ax > p_1$, then selecting all candidates from $G_1$ group would not be enough, and some $G_2$-candidates are needed to satisfy the selection rate $\ax$.

Given $\tild_2$, the fraction of selected $G_1$-candidates would be equal to $\px_1(\tild_2, \mqbr_1(\tild_2))\xrightarrow{\s\to\infty}1$. We claim the $G_2$-candidates will play $\mmax_2$ with probability $\tau_2$, and $\mmin_2$ with probability $1-\tau_2$. The probability $\tau_2$ can be found from the budget constraint: 
\begin{align*}
    \ax = p_1 \px_1(\tild_2, \mqbr_1(\tild_2)) + p_2 (\px_2^{\max}\tau_2 + \px_2^{\min}(1-\tau_2)) \xrightarrow{\s\to\infty} p_1 + p_2 \tau_2.
\end{align*}
Hence, for $\tau_2 \xrightarrow{\s\to\infty} \frac{\ax -p_1}{p_2}$, the dropout threshold, $\tild_2$ is the fixed point of $\tiln$.

\subsection{Proof of Corollary~\ref{theorem:game comparison}}
\label{proof:game comparison}

We use the expressions for equilibrium strategies from Theorem~\ref{theorem:equilibrium strategy} and Theorem~\ref{theorem:dp strategy}.

(i) If $\ca=\cb=\cc$ and $\sxa^2>\sxb^2$:
\begin{align*}
    \lims \frac \mmqaun \mmqadp = \lims 
    \begin{cases}
        \frac{\ax/\pa\sqrt{2\s/\cc}(1+o(1))}{\ax\sqrt{2\s/\cc}(1+o(1))} &\text{ if } \ax \le \pa\\
        \frac{\sqrt{2\s/\cc}(1+o(1))}{\ax\sqrt{2\s/\cc}(1+o(1))} &\text{ if } \ax > \pa\\
    \end{cases}
    =
    \begin{cases}
        1/\pa &\text{ if } \ax \le \pa,\\
        1/\ax &\text{ if } \ax > \pa,\\
    \end{cases}
\end{align*}
\begin{align*}
    \lims \frac \mmqbun \mmqbdp = \lims 
    \begin{cases}
        \frac{o(1)}{\ax\sqrt{2\s/\cc}(1+o(1))} &\text{ if } \ax \le \pa\\
        \frac{\frac{\ax-\pa}{\pb}\sqrt{2\s/\cc}(1+o(1))}{\ax\sqrt{2\s/\cc}(1+o(1))} &\text{ if } \ax > \pa\\
    \end{cases}
    =
    \begin{cases}
        0 &\text{ if } \ax \le \pa,\\
        \frac{\ax-\pa}{\ax - \ax\pa} &\text{ if } \ax > \pa.\\
    \end{cases}
\end{align*}

(ii) If $\ca>\cb$:
\begin{align*}
    \lims \frac \mmqaun \mmqadp = \lims 
    \begin{cases}
        \frac{o(1)}{\ax\sqrt{2\s/\ca}(1+o(1))} &\text{ if } \ax \le \pb\\
        \frac{(\ax-\pb)/\pa\sqrt{2\s/\ca}(1+o(1))}{\ax\sqrt{2\s/\ca}(1+o(1))} &\text{ if } \ax > \pb\\
    \end{cases}
    =
    \begin{cases}
        0 &\text{ if } \ax \le \pb,\\
        \frac{\ax - \pb}{\ax - \ax\pb} &\text{ if } \ax > \pb,\\
    \end{cases}
\end{align*}
\begin{align*}
    \lims \frac \mmqbun \mmqbdp = \lims 
    \begin{cases}
        \frac{\ax/\pb\sqrt{2\s/\cb}(1+o(1))}{\ax\sqrt{2\s/\cb}(1+o(1))} &\text{ if } \ax \le \pb\\
        \frac{\sqrt{2\s/\ca}(1+o(1))}{\ax\sqrt{2\s/\cb}(1+o(1))} &\text{ if } \ax > \pb\\
    \end{cases}
    =
    \begin{cases}
        1/\pb &\text{ if } \ax \le \pb,\\
        \sqrt{\frac \cb \ca}\frac 1 \ax &\text{ if } \ax > \pb.\\
    \end{cases}
\end{align*}

\subsection{Proof of Theorem~\ref{theorem:revenue ratio}}
\label{proof: quality ratio}

First, using the formula for the expected value of the truncated normal distribution, we find:
\begin{align*}
    \rev(\m_\g;\til) &= \sum_\g \p_\g \E(\w_\g\cdot[\wt_\g > \til_\g]) = \sum_\g \p_\g\E(\w_\g | \wt_\g > \til_\g)\Pb(\wt_\g\ge\til_\g) \\
    & = \sum_\g \p_\g \left[\m_\g \Phi\left(\frac{\til_\g-\m_\g}{\stp}\right) + \stp \phi\left(\frac{\til_\g-\m_\g}{\stp}\right)\right].
\end{align*}
Second, using Lemma~\ref{lemma:br large S}, we can show that for all $\gamma\in(0,1)$ and $\tilg = \gamma\cdot\tildg$, we have:
\begin{align*}
    \br_\g(\tilg)\cdot\Phi\left(\frac{\tilg-\br_\g(\tilg)}{\stp}\right) + \stp \phi\left(\frac{\tilg-\br_\g(\tilg)}{\stp}\right)=\tilg(1+o(1))(1+o(1)) + \stp \cdot o(1) = \tilg(1+o(1))
\end{align*}
and, for $\tilg = \tildg/\gamma$, we have:
\begin{align*}
    \br_\g(\tilg)\cdot \Phi\left(\frac{\tilg-\br_\g(\tilg)}{\stp}\right) + \stp \phi\left(\frac{\tilg-\br_\g(\tilg)}{\stp}\right)\xrightarrow{\s\to\infty}0.
\end{align*}
For $\tilg = \tildg$, we have:
\begin{align*}
    &\mmax_\g(\tilg)\cdot\Phi\left(\frac{\tilg-\mmax_\g(\tilg)}{\stp}\right) + \stp \phi\left(\frac{\tilg-\mmax_\g(\tilg)}{\stp}\right) = \tilg(1+o(1)),\\
    &\mmin_\g(\tilg)\cdot \Phi\left(\frac{\tilg-\mmin_\g(\tilg)}{\stp}\right) + \stp \phi\left(\frac{\tilg-\mmin_\g(\tilg)}{\stp}\right)\xrightarrow{\s\to\infty}0.
\end{align*}

For the game $\gdp$, we use the equilibrium strategy from Theorem~\ref{theorem:dp strategy}, and can write that:
\begin{align*}
    \revdp
    &= \pa(\ax+o(1))\cdot\tilda(\s)(1+o(1)) + \pb \cdot \left(\ax+o(1)\right)\tildb(\s)(1+o(1))\\
    & = \ax\pa \cdot \tilda(\s) (1+o(1)) + \ax\pb \cdot \tildb(\s) (1+o(1)), \s\to\infty.
\end{align*}

For the game $\gun$, we use the equilibrium strategy from Theorem~\ref{theorem:equilibrium strategy}. We consider separately the case of group-independent and the case of group-dependent costs.

\paragraph{\textbf{Group-independent cost}}
\begin{align*}
\intertext{If $\ax \le \pa$, then}
    \rev &= \pa({\ax}/{\pa}+o(1))\cdot\tilda(1+o(1)) + \pb \cdot o(1) = \ax \cdot \sqrt{2\s/\cc} (1+o(1)), \s\to\infty.
\intertext{If $\ax > \pa$, then}
    \rev &= \pa(1+o(1))\cdot\tildb(1+o(1)) + \pb \cdot \left(\frac{\ax - \pa}{\pb}+o(1)\right)\tildb(1+o(1)) = \ax\sqrt{2\s/\cc}(1+o(1)), \s\to\infty,
\end{align*}
where we use the assumption on equal costs $\ca=\cb$ and the result of Lemma~\ref{lemma:dropout} on the asymptotic behavior of the dropout threshold.
Hence, we can calculate the limit:
\begin{align*}
    \lims \revdp/\rev = 1.
\end{align*}

\paragraph{\textbf{Group-dependent cost}}
\begin{align*}
\intertext{If $\ax \le \pb$, then}
    \rev &= \pb (\ax/\pb +o(1)) \cdot \tildb\cdot(1+o(1)) + \pa \cdot o(1) = \ax \cdot \sqrt{2\s/\cb}\cdot(1+o(1)).
\intertext{If $\ax > \pb$, then}
    \rev &= \pb (1 +o(1)) \cdot \tilda\cdot(1+o(1)) + \pa \cdot \left(\frac{\ax-\pb}{\pa}+o(1)\right)\tilda(1+o(1))\\
    & = \pb(1+o(1)) \tilda\cdot(1+o(1)) + (\ax -\pb)\tilda(1+o(1)).
\end{align*}
Hence, for $c:= \sqrt{\cb/\ca}$, we can write that:
\begin{align*}
    \lims \rev/\revdp = 
    \begin{cases}
        \lims \frac{\ax\tildb(1+o(1))}{\ax\pa\tilda(1+o(1)) + \ax\pb\tildb(1+o(1))} = \frac{1}{c\pa+\pb} & \text{if }\ax \le \pb,\\
        \lims \frac{\pb\tilda(1+o(1)) + (\ax -\pb)\tilda(1+o(1))}{\ax\pa\tilda(1+o(1)) + \ax\pb\tildb(1+o(1))} = \frac{c}{c\pa + \pb} & \text{if } \ax > \pb,
    \end{cases}
\end{align*}
where $\frac{c}{c\pa + \pb} < \frac{c}{c\pa + c\pb} = 1$.

\subsection{Proof of Theorem~\ref{theorem:disparity of effort}}
\label{proof:inequality of effort}

\paragraph{\textbf{Equilibrium efforts}}
Let us  first find for which $\ax\in(0,1)$ we have the equality of effort at equilibrium, i.e.,  $\maun=\mbun$. Since the equilibrium values are the best responses to $\tilun$, and the pure best response is unique (see Lemma~\ref{lemma:maxima of payoff}), we can write:
\begin{align*}
    \maun = \mbun  &\iff \frac{\s}{\ca\sta}\phi\left(\frac{\tilun - \maun}{\sta}\right) = \frac{\s}{\cb\stb}\phi\left(\frac{\tilun - \mbun}{\stb}\right) \\
    &\iff \tilun - \maun =  \pm \sqrt{-2\frac{\log(\ca\sta)-\log (\cb\stb)}{1/\sta^2 - 1/\stb^2}}.
\end{align*}

For $\ca\sta >  \cb\stb$, there exist no real solution to the above equation. In this case, we can show that $\maun < \mbun$ for all $\ax \in (0,1)$. Assume the opposite, i.e., $\maun > 
\mbun$ for all $\til$, then $\phi((\til-\maun)/\sta) < \phi((\til-\mbun)/\stb)$ and also $\ca\sta > \cb\stb$ which contradicts the initial assumption. 

For $\ca\sta \le \cb\stb$, there are two values of $\ax_{\maun=\mbun}$ which correspond to equal equilibrium efforts $\maun=\mbun$:
\begin{align*}
    \ax^{(1,2)}_{\maun=\mbun} = \sum_G \pg\Phi^c\left(\pm \frac{\sqrt{-2\frac{\log(\ca\sta)-\log(\cb\stb)}{1/\sta^2 - 1/\stb^2}}}{\stp}\right).
\end{align*}
We can verify that ${d\mqabr}/{d\ax} > {d\mqbbr}/{d\ax}$ for $\ax^{(1)}_{\maun=\mbun}$ and ${d\mqabr}/{d\ax} < {d\mqbbr}/{d\ax}$ for $\ax^{(2)}_{\maun=\mbun}$ which concludes the proof.
 
\paragraph{\textbf{Equilibrium selection rates}}
Let us find such $\ax_{\pxa = \pxb}$ for which both groups are selected at equal rates, i.e., $\pxaun = \pxbun$. This is equivalent to: 
\begin{align*}
    \Phi^c\left(\frac{\tilun-\maun}{\sta}\right) =     \Phi^c\left(\frac{\tilun-\mbun}{\stb}\right)=\ax_{\pxa = \pxb}
    \iff \tilun = \mpun + \stp\Phi^{-1}(1-\ax_{\pxa = \pxb}), \;\forall \g\in\{\A,\B\}.
\end{align*}
By definition of the best response, we can also write that:
\begin{align*}
    \mpun &= \frac{\s}{\cp\stp}\phi\left(\frac{\tilun-\mpun}{\stp}\right) = \frac{\s}{\cp\stp}\phi\left(\Phi^{-1}(1-\ax_{\pxa = \pxb})\right),\\
    \tilun &= \frac{\s}{\cp\stp}\phi\left(\Phi^{-1}(1-\ax_{\pxa = \pxb})\right)+\stp\Phi^{-1}(1-\ax_{\pxa = \pxb}).
\end{align*}
The equality for $\tilun$ is possible only for such $\ax$, when
\begin{align*}
    \frac{\s}{\ca\sta}\phi\left(\Phi^{-1}(1-\ax_{\pxa = \pxb})\right)+\sta\Phi^{-1}(1-\ax_{\pxa = \pxb})=
    \frac{\s}{\cb\stb}\phi\left(\Phi^{-1}(1-\ax_{\pxa = \pxb})\right)+\stb\Phi^{-1}(1-\ax_{\pxa = \pxb}),
\end{align*}
which is if and only if
\begin{align*}
    \s\left(\frac 1 {\ca\sta}-\frac 1 {\cb\stb}\right)\phi\left(\Phi^{-1}(1-\ax_{\pxa = \pxb})\right)= \Phi^{-1}(1-\ax_{\pxa = \pxb})(\stb -\sta).
\end{align*}
We solve the corresponding equation by letting $z\coloneqq\Phi^{-1}(1-\ax_{\pxa = \pxb})$ and solving with respect to $z$. After, $\ax_{\pxa = \pxb} = 1- \Phi(z)$. 

Let $\xi = \s\left(\frac 1 {\ca\sta}-\frac 1 {\cb\stb}\right)/(\stb-\sta)$. Then by definition of $\phi$:
\begin{align*}
    \xi\frac{1}{\sqrt{2\pi}}e^{-\frac{z^2}{2}} = z \iff z\cdot e^{\frac{z^2}{2}} = \xi\frac{1}{\sqrt{2\pi}}.
\end{align*}
By squaring both sides of the above equation, we end up with the equation of type $y\cdot e^y = const$ which can be solved using the Lambert $\mathcal{W}$ function defined as the inverse to $f(y)=y\cdot e^y$. As a result, by solving for $z^2$ and taking the square root with the sign equal to the sign of $\xi$, we show:  
\begin{align*}
    \ax_{\pxa = \pxb} = 1- \Phi\left(\sign(\xi)\sqrt{\mathcal{W}\left(\frac{\xi^2}{2\pi}\right)}\right).
\end{align*}
We can verify that the $d\pxa/d\ax > d\pxb/d\ax$ at $\ax_{\pxa = \pxb}$, so for all $\ax < \ax_{\pxa = \pxb} = 1- \Phi\left(\sign(\xi)\sqrt{W\left(\frac{\xi^2}{2\pi}\right)}\right)$, we have that $\pxaun < \pxbun$.

\end{document}